\documentclass[11pt]{article}

\def\final{1}

\usepackage[colorlinks,linkcolor=myblue,citecolor=mygreen, hypertexnames = false]{hyperref}
\usepackage{amsmath,amsfonts, mathdots, amssymb, latexsym, amsbsy,bm,amsthm,mathtools} 
\usepackage{braket}
\usepackage[margin=1in]{geometry}
\usepackage[english]{babel}
\usepackage[utf8]{inputenc}
\usepackage{xcolor}

\colorlet{myblue}{blue!80!black}
\colorlet{mygreen}{green!40!black}

\usepackage{xspace}

\usepackage{paralist}

\usepackage{times}

\usepackage{framed}

\usepackage{bbm}
\usepackage{nicefrac}

\usepackage{thm-restate}

\usepackage[nameinlink]{cleveref}
\crefname{lemma}{Lemma}{Lemmas}
\crefname{thm}{Theorem}{Theorems}

\newtheorem{thm}{Theorem}[section]
\newtheorem{theorem}[thm]{Theorem}
\newtheorem{lemma}[thm]{Lemma}

\newtheorem{claim}[thm]{Claim}

\theoremstyle{definition}

\newtheorem{definition}[thm]{Definition}

\newenvironment{nestedproof}{\begin{proof}}{\end{proof}}

\ifdefined\final
\usepackage{microtype}
\newcommand{\neil}[1]{}
\newcommand{\nnote}[1]{}
\newcommand{\laura}[1]{}
\newcommand{\leon}[1]{}

\newcommand{\modified}[1]{#1}

\newcommand{\todo}[1]{}
\newcommand{\TODO}[1]{}
\newcommand{\suggestion}[1]{ }

\else

\usepackage[colorinlistoftodos]{todonotes}
\let\TODO\todo

\renewcommand{\todo}[1]{{\color{red!50!black}\em \small [TODO: #1]}}

\definecolor{lightblue}{rgb}{0.38,0.82,0.90}
\colorlet{lightyellow}{yellow!60!white}
\newcommand{\neil}[1]{\TODO[color=lightyellow,inline]{\textbf{Neil:} #1}}
\newcommand{\nnote}[1]{{\color{yellow!50!black}\em \small [Neil: #1]}}
\newcommand{\laura}[1]{{\color{lightblue!70!black}\em \small [Laura: #1]}}
\newcommand{\leon}[1]{{\color{blue!50!black}\em \small [Leon: #1]}}

\newcommand{\modified}[1]{{\color{purple!80!black}#1}}

\newcommand{\suggestion}[1]{{\color{green!40!black}\em \small [Suggestion: #1]}}
\fi

\usepackage{graphicx}
\usepackage[font=small,labelfont=bf]{caption}
\usepackage{subcaption}

\graphicspath{{figures/}}

\usepackage{authblk}

\newcommand{\R}{\mathbb{R}}

\newcommand{\N}{\mathbb{Z}_{\geq 0}}

\newcommand{\Z}{\mathbb{Z}}
\newcommand{\Znonneg}{\Z_{\geq 0}}

\newcommand{\abs}[1]{\left\lvert #1 \right\rvert}
\newcommand{\norm}[1]{\left\lVert #1 \right\rVert}
\newcommand{\OPT}{\ensuremath{\mathsf{OPT}}\xspace}

\newcommand{\lcirc}{l^\circ}
\newcommand{\linit}{l^\circ}
\newcommand{\lclose}{l^\star}
\newcommand{\lcloseup}{\bar{l}^\star}
\newcommand{\emphdef}[1]{\emph{#1}}

\newcommand{\lambdamin}{\lambda_{\textrm{min}}}
\newcommand{\dep}{d}
\renewcommand{\d}{\dep}
\newcommand{\quickdep}{\tilde{d}}

\renewcommand{\epsilon}{\varepsilon}

\newcommand{\dist}{\mathrm{d}}

\renewcommand{\l}{\ell} 
\newcommand{\capmin}{\nu_{\min}} 
\newcommand{\capsum}{\nu_{\Sigma}} 
\renewcommand{\b}{b} 

\renewcommand{\a}{a} 
\renewcommand{\b}{b} 
\renewcommand{\c}{c}

\newcommand{\Iloc}{I}
\newcommand{\T}{\overline{T}}

\newcommand{\Einf}{E^{\infty}}
\newcommand{\Etilde}{\tilde{E}}
\newcommand{\Eloc}{\Etilde}

\newcommand{\ind}{\mathbbm{1}}

\newcommand{\CCL}{\cite{cominetti2015existence}}

\newcommand{\TrajLip}{\kappa}

\newcommand{\centerdist}{\Gamma}
\newcommand{\centercompat}{r_{\!\mathrm{c}}}

\newcommand{\Tglob}{T_{\mathrm{ss}}}

\newcommand{\Rplus}{\R_{\geq 0}}

\newcommand{\inrate}{u_0}

\newcommand{\agents}{\mathcal{A}}
\newcommand{\strategies}{\mathcal{S}}
\newcommand{\stratprof}{\varphi}
\newcommand{\packetprof}{\sigma}

\renewcommand{\P}{\mathsf{P}}
\newcommand{\Path}{\mathsf{P}}
\newcommand{\Wait}{\mathsf{w}}

\newcommand{\lx}{x}
\newcommand{\dx}{x^d}
\newcommand{\dq}{q^d}
\renewcommand{\lq}{q}

\renewcommand{\r}{r}
\newcommand{\gap}{r_{\textrm{gap}}} 
\newcommand{\Hloc}{\mathcal{H}} 

\newcommand{\entry}[1]{\vartheta({#1})}

\newcommand{\lab}{l^{\lozenge}}

\newcommand{\lambdamax}{\lambda_\mathrm{max}} 

\newcommand{\Etildeind}{\Eloc_{\mathrm{ind}}} 
\newcommand{\Einfind}{\Einf_{\mathrm{ind}}} 

\newcommand{\lstar}{\l^*}
\newcommand{\lstarvar}{\l^\star}
\newcommand{\lstarind}{\l^{*}_{\mathrm{ind}}} 
\newcommand{\lstarss}{\l^{\circ}} 
\newcommand{\lss}{\l^{\bullet}}

\newcommand{\thetastart}{\theta_{\mathrm{start}}} 
\newcommand{\thetaend}{\theta_{\mathrm{end}}} 
\newcommand{\thetass}{\theta_{\mathrm{ss}}} 

\newcommand{\rsub}{r'}

\newcommand{\rnew}{r_{8}}
\newcommand{\reight}{r_9}

\newcommand{\Einfbar}{\bar E^\infty}
\newcommand{\aepsilon}{a}

\newcommand{\xiend}{\xi_{\mathrm{end}}}

\newcommand{\lambdadiff}{\lambda_{\mathrm{diff}}}

\newcommand{\jump}{K} 

\newcommand{\TrajLipDelta}{\hat\kappa} 

\ifdefined\anonymous

\title{Convergence of Approximate and Packet Routing Equilibria to Nash Flows Over Time}
%
\author{Anonymous}


\else
\title{Convergence of Approximate and Packet Routing Equilibria to Nash Flows Over Time
\thanks{An extended abstract of this work appeared at FOCS 2023, \cite{OSV23}.\\ N.O.\ is partially supported by NWO Vidi grant 016.Vidi.189.087. L.V.\ was partially supported by the Center for Mathematical Modeling at the University of Chile, Grant ANID FB210005}}

\author[1]{Neil Olver}
\author[2]{Leon Sering}
\author[3]{Laura Vargas Koch}

\affil[1]{Department of Mathematics, London School of Economics and Political Science}
\affil[2]{Department of Mathematics, ETH Z\"urich}
\affil[3]{Research Institute of Discrete Mathematics, University of Bonn}
\fi

\date{ }
\begin{document}

\maketitle

\thispagestyle{empty}

\begin{abstract}
    We consider a dynamic model of traffic that has received a lot of attention in the past few years. Infinitesimally small agents aim to travel from a source to a destination as quickly as possible. Flow patterns vary over time, and congestion effects are modeled via queues, which form based on the deterministic queueing model whenever the inflow into a link exceeds its capacity.

Are equilibria in this model meaningful as a prediction of traffic behavior? For this to be the case, a certain notion of stability under ongoing perturbations is needed. Real traffic consists of discrete, atomic ``packets'', rather than being a continuous flow of non-atomic agents. Users may not choose an absolutely quickest route available, if there are multiple routes with very similar travel times. We would hope that in both these situations --- a discrete packet model, with packet size going to 0, and $\epsilon$-equilibria, with $\epsilon$ going to 0 --- equilibria converge to dynamic equilibria in the flow over time model. No such convergence results were known.

We show that such a convergence result does hold in single-commodity instances for both of these settings, in a unified way. More precisely, we introduce a notion of ``strict'' $\epsilon$-equilibria, and show that these must converge to the exact dynamic equilibrium in the limit as $\epsilon \to 0$. We then show that results for the two settings mentioned can be deduced from this with only moderate further technical effort.
\end{abstract}

\pagenumbering{arabic}

\newpage

\section{Introduction}\label{sec:intro}

Telecommunications networks and transportation networks are two settings where the natural description involves tracking users or packets as they traverse the network.
These users arrive at different nodes in the network at different moments in time.
In some situations, this temporal aspect can be to some extent ignored, and modeled through \emph{static} models.
This is reasonable if we anticipate that over the timescale being modeled, the solution of interest can reasonably be approximated by a temporally repeated flow. 

We will be interested in the game-theoretic perspective, considering that the network traffic consists of \emph{self-interested} users, each aiming to optimize their own objective (generally, travel time) subject to the environment induced by the other users.
The interaction between agents is mediated through some form of congestion in the network.
With static flow models, this leads to the very well-studied area of network congestion games~\cite{RoughgardenBook}.

Static models do not always suffice, however.
For example, in telecommunications networks with demands changing over short time scales; or modeling morning or evening rush hour traffic.
Here, there is no plausible static approximation, and the variation on congestion over time must be considered.

In both telecommunications networks and transportation networks, many different dynamic models have been studied.
Our focus in this work will be on two related models, one continuous and the other discrete (in some sense).

\paragraph{The deterministic queueing model.}
This model goes back to Vickrey~\cite{vickrey1969congestion}, who studied this model for a single link under departure time choice.
    As well as the deterministic queueing model, it goes variously by the names of the \emph{fluid queueing model} and the \emph{Vickrey bottleneck model}.
In this model, each link has a \emph{capacity} and a \emph{transit time}.
If the inflow rate into the link always remains below its capacity, then the time taken to traverse the link is constant, as given by the transit time.
However, if the inflow rate exceeds the link capacity for some period, a queue grows on the entrance of the link.
The delay experienced by a user is then equal to the transit time, plus whatever time is spent waiting in the queue.
As long as there is a queue present it will empty at rate given by the link capacity; depending on whether the inflow rate is smaller or larger than the capacity, this queue will decrease or increase in size. 
Note that this model is nonatomic, in the sense that individual users are infinitesimally small.

There are many works investigating properties of equilibria in this model~\cite{bhaskar2015stackelberg,cominetti2015existence,cominetti2021long,correa2019price,Kaiser2022Computation,koch2012phd,koch2011nash,OSV21,sering2018multiterminal} and in generalized models \cite{
israel2020impact,pham2020dynamic,sering2019spillback,sering2020diss}.
We will discuss some of these later in \Cref{sec:exact}.

\paragraph{Packet-routing models.}
We will use ``packet-routing'' or ``packet-based'' to refer to models of a similar form to the Vickrey bottleneck model, but with atomic, unsplittable agents (or packets).
As one simple example of such a model,
suppose all links in the network have an integer capacity and an integer travel time.
Packets have unit size, and the capacity of a link represents the number of packets that can simultaneously be processed by the link in unit time (or equivalently, in a single time step; the model can be considered to be in discrete time).
If more packets than the capacity of a link need to be processed in a time step, the excess packets wait in a first-in-first-out (FIFO) buffer.
%
%
%
%
Various models of this type, varying in the details, have been considered, both in the telecommunications and transportation context, e.g., \cite{caoatomic,hoefer2011competitive,Ismaili2017RoutingGamesOverTimeFIFO,KulkarniM15,LeightonMR94,LeightonMR99,scarsini2018dynamic,tauer2021fifo,werth2014atomic}.
The traffic simulator MATSim~\cite{horni2016matsim} uses an atomic model; each ``packet'' represents a single vehicle. 
%

\medskip

So broadly speaking, 
we have described nonatomic and atomic variants of the same underlying dynamic model. 
The nonatomic nature of the deterministic queueing model is motivated primarily by 
better mathematical properties rather than as a reflection of reality.
Individual vehicles are of course not really infinitesimal; though it seems reasonable to represent them as such, as long as traffic volumes are large enough that each individual road user alone is insignificant. 

But while it seems \emph{reasonable} to expect the (nonatomic) deterministic queueing model to be a good approximation to a corresponding (atomic) packet-routing model, is this actually true? 
Can this approximation be justified?
Formally, consider the following question.
Fix a network, including arc transit times and capacities, and the inflow rate at the source.
Now consider a sequence $\beta_1, \beta_2, \ldots > 0$, with $\beta_i \to 0$ as $i \to \infty$.
For each $\beta_i$, consider an instantiation of a specific packet-routing model with packets having size $\beta_i$.
We maintain the inflow rate, measured as the product of packet size with the number of packets entering the network per unit time.
An equilibrium solution can be determined for this packet model, and if we fix any link $e$ in the network, we can observe how the length of the queue on this arc behaves in this equilibrium (a time-varying quantity).
As $i \to \infty$, does this function converge (say in the uniform norm) to the corresponding queue delay function for $e$ in the equilibrium of the deterministic queueing model?
If this is \emph{not} true, then one has to seriously question the relevance of the deterministic queueing model.

Positive experimental evidence for convergence was found in \cite{ZiemkeEtAl2020FlowsOverTimeAsLimitOfMATSim}.
In \cite{sering2021convergence}, convergence was shown for a \emph{fixed} choice of paths for all packets (in an appropriate sense). 
This already involves some significant technicalities, but their result does not say anything about the relationship between \emph{equilibria} in the two settings.
A key difficulty is that we do not know a priori that the paths chosen in the equilibrium of the packet model will resemble those chosen in the equilibria of the deterministic queueing model.


There are a number of other distinct but similar questions one can ask, all concerning the stability of the deterministic queueing model and its equilibria. 
In exact equilibria, users choose exactly quickest paths to the sink.
That is, given the strategies chosen by the other users, they choose a path that in hindsight yields the earliest possible arrival time.
This is quite a strong property; note that users are taking into account queues that they will see ``in the future'', not the queues as they are on entry into the network.
One motivation for this is that we view, for example, morning rush hour traffic as a repeated game, with the expectation that behavior converges to a Nash equilibrium.\footnote
{
Other equilibria notions distinct from Nash equilibria have been considered in the literature, in which agents make decisions without full information of the overall traffic situation.
We refer to Graf, Harks and Sering~\cite{graf2020dynamic} and references therein for a discussion of instantaneous dynamic equilibria (see also \cite{graf2023finite,graf2023price}), where agents make decisions as they traverse the network based on current queues; 
and to Graf, Harks, Kollias and Merkl~\cite{Graf_Harks_Kollias_Markl_2022} for a very interesting approach to a much more general information framework where users use predictions of future congestion patterns.
}
Still, it seems implausible to expect that this process always achieves \emph{exactly} a Nash equilibrium.
It seems much more plausible to hope that we obtain an approximate equilibrium; no user is taking a path that is very far from their quickest option, but might be choosing a route that is close to quickest, but not quite.
So it is natural to consider $\epsilon$-equilibria in this model (or in the packet model), and ask if their behavior is similar to that of exact equilibria.
Again as a precise question: do $\epsilon$-equilibria of the deterministic queueing model converge to the exact equilibrium, in the same sense as above?

Other natural types of ``perturbations'' can be considered.
For instance: arc travel times and capacities might vary slightly over time, in some predictable or unpredictable way, or the demand might vary slightly over time instead of being precisely constant.
The question in each case is the same: for sufficiently small perturbations of whatever form, can we say that equilibria in the perturbed system are close to equilibria in the original unperturbed system?

\paragraph{Our results.}
We give a positive answer to the following two convergence questions in the single-commodity setting.
We show that equilibria in a particular packet model converge to that of the deterministic queueing model, as the size of the packets goes to zero; and that $\epsilon$-equilibria converge to the exact equilibrium as $\epsilon \to 0$.
Moreover, we do this in a unified way.
We will prove a single main convergence result, and then show that both of these specific results follow.

Our convergence result is based around the notion of a \emph{strict $\delta$-equilibrium}, which is a fairly natural strengthening of an $\epsilon$-equilibrium.
It asks that for \emph{every} node in the network that an agent uses in their path, not just the sink, the agent's departure time from that node is at most $\delta$ later than its earliest possible arrival time (considering all possible routes to the node).
This is stronger than asking for an $\epsilon$-equilibrium with $\epsilon=\delta$, which only requires this at the sink.
It is not the case that every $\epsilon$-equilibrium is a strict $\delta$-equilibrium with $\delta=\epsilon$, but we are able to show that every $\epsilon$-equilibrium is a strict $\delta$-equilibrium with $\delta=O(\epsilon)$ (here, the big-O notation hides network-dependent constants).
So convergence results for strict $\delta$-equilibria hold for $\epsilon$-equilibria as well. 

To obtain results for packet routing, we ``embed'' an equilibrium of the packet-routing model into our continuous framework, viewing each packet as consisting of a continuum of particles.

While we focus on these two specific implications (and also use a specific packet-routing model), our convergence result can certainly be used to derive other stability results.
For example, convergence results for other packet-routing variants, and for perturbed transit times and inflow rates.
We will not consider this further in the current paper however, in order to focus on our central results.

One challenge in proving such a result is that the perturbations we are considering are ongoing throughout the evolution of the equilibrium.
A much weaker notion of stability would be that if we slightly perturb the equilibrium at a single moment in time, or some bounded number of moments in time, by (say) perturbing some queue lengths or transit times by a small amount, that the equilibrium in the once-perturbed instance stays close to the unperturbed equilibrium.
We demonstrated this quite recently for the deterministic queueing model~\cite{OSV21}. 
Our earlier result can be seen as a precursor to this one, and we will rely on it in a number of places in our proof.
However, their result is not strong enough to handle the convergence results we are interested in here. 
In some sense, we need to show not only that the perturbations that occur in our perturbed equilibria at a particular moment do not lead to vastly different behaviors, but rather that there is a tendency to ``revert to the mean'': if some queue gets a little longer than it should in the perturbed situation, future perturbations might push this back towards the unperturbed value, 
but do not push it even further away.

A smaller technical issue we have to address involves the foundations of the definition of the model.
Especially for the implications to convergence of packet-routing models, it is necessary for us to allow waiting in our nonatomic model;
an agent (that is, an infinitesimal flow particle) is allowed to wait at a node before entering the next arc of its path to the sink.
This expansion of the strategy space, from a finite set of paths to a finite \emph{dimensional} space of paths along with waiting times at nodes, requires some technical care, and we make some efforts to handle this in a precise and clean way.
As one example of a complication that arises, it is now possible for a positive measure of agents to enter an arc at precisely the same moment in time. 
Previous works in this area had no particular need to consider waiting, and did not face this issue.

\medskip

A different application of our results is in the other direction, to port results on the deterministic queueing model to packet models.
This allows us to profit from the cleaner and more analytically tractable setup of the continuous model. 
Here we briefly discuss two implications for packet models; we expect there will be more.

Suppose we are considering an instance of the packet model, in which packets enter the network at $s$ at a constant rate, and wish to reach a sink $t$. 
Suppose further that the number of packets entering the network per unit of time, multiplied by their size, is not larger than the minimum capacity of an $s$-$t$-cut.
Is it then true that queues in the network remain bounded for all time?
For the deterministic queueing model analog of the instance, this is known to be true~\cite{cominetti2021long}. 
The proof uses a very delicate potential function, obtained from the dual of a linear program that describes so called ``steady-state'' conditions. 
It is not clear how this argument could be directly ported to discrete packet models.
But our convergence result implies that indeed queues do remain bounded in the packet model\,---\,at least, as long as $\delta$ is sufficiently small. 
We expect that with some further technical effort, our convergence result can be strengthened so that this restriction can be bypassed.

A second question that can be attacked with this approach is that of the price of anarchy.
Here, one needs to specify the objective to compare with some care.
It is known that the ratio between the average journey time of agents in an equilibrium, compared to the global optimum, can be unbounded even on very simple examples~\cite{koch2012phd}.
However, if one considers the average arrival time objective (equivalently, viewing packets as all being at the source at time $0$), this becomes an interesting question.
It remains open in the deterministic queueing model, but it is conjecture that the price of anarchy for this measure is precisely $\frac{e}{e-1}$, and it is known that this holds if another natural conjecture is true~\cite{correa2019price}.
If this conjecture is demonstrated, it will immediately imply (through \cite{correa2019price} and our result) that the price of anarchy is bounded in the packet model, again as long as $\delta$ is sufficiently small.

\medskip

The question of whether an equilibrium is ``stable'' under some form of perturbation is a rather natural one, also in other non-traffic settings.
Aswathi, Balcan, Blum, Sheffet and Vempala~\cite{ApproxStable} and Balcan and Braverman~\cite{PerturbationStable} (see also \cite{Liptonetal}) explicitly introduce and investigate a related notion in the context of bimatrix games.
In that context, they say that a Nash equilibrium is \emph{$(\delta, \epsilon)$-perturbation stable} if whenever all payoffs in the bimatrix game are adjusted by at most $\delta$, any equilibrium in the resulting game is within distance $\epsilon$ (in variation distance) of an equilibrium of the unperturbed game. 
These papers study various properties (especially computational properties) of perturbation stable games.

\medskip

Dynamic traffic modeling is a huge, multidisciplinary area, and we do not attempt to do it justice in this brief survey.
In particular, our discussions have focused on the work done by the algorithmic game theory community.
We refer the reader to the survey by Friesz and Han~\cite{Friesz_Han_2019} for a different perspective on the topic, considering a more general class of link dynamics through the lens of differential variational inequalities.
%

\section{Model and preliminaries}\label{sec:model}

An instance is described by a directed network $G=(V,E)$, with arc capacities $\nu_e > 0$ and free-flow travel times $\tau_e > 0$ for all arcs $e \in E$\footnote{Excluding arcs with $\tau_e = 0$ is convenient for technical reasons; it should be possible to extend to at least the setting where there are no directed cycles of $0$-length arcs, but we will not discuss this here.}.
In addition, there is a specified source node $s \in V$ and sink node $t \in V$, and a constant \emph{network inflow rate} $u_0$.
We may assume that every node in $G$ is both reachable from $s$, and can reach $t$.

We use the notation $\delta^-(v)$ and $\delta^+(v)$ to denote the set of incoming and outgoing arcs at $v$, respectively, and similarly $\delta^-(S)$ and $\delta^+(S)$ for arcs entering or leaving a set $S \subseteq V$.

Whenever not specified, we will use $\|\cdot\|$ to refer to the infinity norm, which will be our main measure of distance. 
Given a point $x \in \mathbb{R}^m$ and a set $S \subseteq \mathbb{R}^m$, we use $\dist(x,S)$ to denote the distance (with respect to the infinity norm) between $x$ and $S$, that is: $\dist(x,S) := \inf_{y \in S} \|y-x\|$.
Similarly, given two sets $S,T \subseteq \mathbb{R}^m$, $\dist(S,T) := \inf_{x \in S} \dist(x,T) = \inf_{x \in S, y \in T} \|y-x\|$.
We will use $B_r(x)$ to denote the ball of radius $r$ around $x \in \mathbb{R}^m$ and $B_r(S) = \set{x \in \R^m | \dist(x,S) \leq r}$ for any $S \subseteq \R^m$, both with respect to the infinity norm.

\subsection{Flows over time with waiting}
In the literature (e.g.,~\cite{cominetti2015existence}), flows over time are typically denoted by a family of functions $(f_e^+, f_e^-)_{e \in E}$, where $f_e^+(\xi)$ denotes the inflow rate into arc $e$ at time $\xi$ and $f_e^-(\xi)$ the flow rate out of arc $e$ at time $\xi$.
As we want to allow particles to wait at nodes, this choice becomes less convenient, as it is possible that particles wait at a node in such a way that an atom of particles enters an arc at the same moment in time. 
In this case the inflow rate would be infinite.
Instead, 
we define flows in terms of \emph{cumulative} flow functions which are essentially the integrals of $f_e^+$ and $f_e^-$.

A \emph{flow over time with waiting} consists of a pair $(F^+, F^-)$, where $F^+$ is a vector of functions $F^+_e: \Rplus \to \Rplus$ for arcs $e \in E$, and similarly for $F^-$. 
For each arc $e$ and $\xi \in \Rplus$, $F_e^+(\xi)$ denotes the total amount of flow that has entered arc $e$ up to time $\xi$ and $F_e^-(\xi)$ denotes the total flow amount that has left arc $e$ up to time $\xi$.
Each $F^+_e$ and $F^-_e$ should be nondecreasing and right-continuous function.
These functions must satisfy the following two conditions.

\paragraph{Relaxed flow conservation.}
For all times $\xi \in \Rplus$ it must hold that
\begin{equation} \label{eq:weak_flow_conservation}
\sum_{e\in \delta^-(v)} F^-_e(\xi) - \sum_{e\in \delta^+(v)} F^+_e(\xi) \geq 
\begin{cases}
0 & \text{ for all } v \in V \setminus\set{s,t},\\
- \inrate\xi & \text{ for } v = s.
\end{cases}
\end{equation}
Note that we require the flow to enter the network at $s$ with constant inflow rate of $\inrate$. 

\paragraph{Queues operate at capacity.}
We assume that arcs always operate at capacity (waiting is allowed at nodes, but there is no waiting on arcs in our model). 
Let $z_e(\xi)$ be the \emph{queue volume} on $e$ at time $\xi$; that is, the total measure of particles in the queue at time $\xi$.
We have 
\[ z_e(\xi) := F_e^+(\xi) - F_e^-(\xi + \tau_e);
\]
particles that enter by time $\xi$, but have not left the queue by time $\xi$ (and hence have not left the tail of the arc by time $\xi + \tau_e$) contribute to the queue volume.

For all $e \in E$ and all times $\xi \in \Rplus$ we require that
\begin{equation}
\label{eq:queues_operate_at_capacity}
z_e(\xi) = \max_{0 \leq \psi \leq \xi}\Bigl( F_e^+(\xi) - F_e^+(\psi) - \nu_e(\xi - \psi) \Bigr). 
\end{equation}
The interpretation of this is that for any $\psi \leq \xi$, $z_e(\xi)$ is at least the mass of particles entering in the interval $[\psi, \xi]$, minus the upper bound $\nu_e(\xi - \psi)$ on the mass of particles that can leave the queue in this time.
Further, if $\psi$ is chosen so that $z_e(\psi) = 0$ but $z_e(\xi') > 0$ for all $\xi' \in (\psi, \xi)$, then we do not merely have a lower bound on $z_e(\xi)$, but must have equality, since the queue must operate at capacity on the interval $(\psi, \xi)$.



\subsection{Agent perspective}
A flow over time with waiting does not identify a path or flow corresponding to a given particle. 
For exact dynamic equilibria, this is not a concern; a flow over time that corresponds to a dynamic equilibrium provides sufficient information to reconstruct the flow attributable to departures from the source at any moment in time.
This is no longer the case for our setting however, and we need additional direct information about particle behavior.

We will denote our set of agents (equivalently, particles) by $\agents := \Rplus \times [0,1]$.
We let $\mu$ denote the Lebesgue measure on $\agents$.
For each $a \in \agents$, we use $\entry{a}$ to denote the first coordinate of $a$ divided by $u_0$, which we will interpret as the \emph{entry time} of agent $a$ into the system, i.e., the time it arrives at the source.
(Put differently, the first coordinate of $a \in \agents$ indicates the measure of particles that arrive at the source before $a$.)
Previous works on Nash flows over time generally took the set of particles to be indexed by $\Rplus$, identifying an agent with its entry time.
The strategy of a flow particle was then described by a unit flow. 
This approach turns out to be inconvenient for our more general setting, however. 

A \emph{strategy} for an agent consists of a pair $(P,w)$, where $P$ is an $s$-$t$-path, and $w \in \Rplus^{V(P)}$ denotes the amount of time that the agent will wait at each node in the path.
Let $\strategies$ denote the set of all possible strategies.
We view $\strategies$ as a measurable space, where a set $Q \subseteq \strategies$ is measurable if $\{ w \in \Rplus^{V(P)} : (P,w) \in Q \}$ is Lebesque measurable for every $s$-$t$-path $P$.
A \emph{strategy profile} $\stratprof$ is a measurable map from $\agents$ to $\strategies$.
We use $\P^\stratprof(a)$ to denote the first component of $\stratprof(a)$, i.e., the path that agent $a$ chooses.
For each $v \in V$, we define $\Wait^\stratprof_v$ to be the partial function that defines $\Wait^\stratprof_v(a)$ to be the time agent $a$ waits at $v$, if $v \in \P(a)$.
We may write $\Wait^\stratprof(a)$ for the vector $(\Wait^\stratprof_v(a))_{v \in V(\Path^\stratprof(a))}$.
We will typically omit the explicit dependence on $\stratprof$ in our notation whenever it is unambiguous.

The measurability condition on $\stratprof$ implies that for any $\theta_1 \leq \theta_2$, $s$-$t$-path $P$, and any Lebesgue measurable set $R \subseteq \Rplus^{V(P)}$, 
\[ \{ a \in \agents: \entry{a} \in [\theta_1, \theta_2], \P(a) = P \text{ and } \Wait(a) \in R\} \]
is a measurable set.

Note that $\mu(\{ a \in \agents: \theta_1 \leq \entry{a} \leq \theta_2\}) = u_0(\theta_2 - \theta_1)$ for all $\theta_1 \leq \theta_2$, given the network inflow rate of $u_0$.
In particular, the set of particles entering the network at some time $\theta$ is always a null set.

\medskip

An \emph{outcome} of the game for a given strategy profile $\stratprof$ 
specifies, for each particle $a$, their precise departure time from each node $v$ on their path $\P(a)$.
This must correspond to a flow over time with waiting as described above.
We now make this precise.

We specify an outcome by a flow over time with waiting $(F^+, F^-)$, and partial functions $d_v: \agents \to \Rplus$ for each $v \in V$.
The value $d_v(a)$ is defined only when $v \in V(\P(a))$, and in that case, it describes the time at which agent $a$ departs $v$ and enters the arc $e=vw \in \P(a)$ that follows, or the time that the agent departs the network if $v=t$.
We call each $d_v$ a \emph{departure time function}. 

In order for $(F^+, F^-, d)$ to represent a valid outcome of a given strategy profile $\stratprof$, we require the following to hold.
For each arc $e$, let $z_e$ be the queue volume for $e$ associated with $(F^+, F^-)$.

\begin{itemize}
    \item Departure times must be consistent with queue delays and node waiting times. 
        Consider any agent $a$, and arc $e=vw \in \P(a)$. 
    We must have that 
    \[ d_w(a) = d_v(a) + \dq_e(a) + \tau_e + \Wait_w(a), \]
    where $\dq_e(a)$ is the amount of time that $a$ waits on the queue on arc $e$. 

    The value of $\dq_e(a)$ is essentially determined by the queue volume at the time $d_v(a)$ that agent $a$ enters $e$, with the additional complication that if an atom of particles enters $e$ at this same moment, a tiebreaking rule is required.
    We tiebreak according to entry time into the network. 
    (No tiebreaking is required between agents with the same entry time, since this is always a null set.)
    So we have
    \begin{equation}\label{eq:queue-waiting}
        \dq_e(a) := \tfrac{1}{\nu_e}\bigl(z_e(d_v(a)) - \mu(\{ a' \in \agents: e \in \Path(a'), d_v(a') = d_v(a) \text{ and } \entry{a'} > \entry{a}\})\bigr) . 
    \end{equation}

\item The cumulative flow $F^+_e(\xi)$ entering an arc $e=vw$ by some time $\xi$ matches with $\stratprof$ and $d_v$.
    That is, 
    \[ F_e^+(\xi) = \mu(\{ a \in \agents : e \in \P(a) \text{ and } d_v(a) \leq \xi\}).
    \]
\end{itemize}

\subsection{Network loading}
It is not immediately obvious how to construct the outcome $(F^+, F^-, d)$, nor even that they exist or are unique. 
The demonstration of this is via a \emph{network loading} procedure.
This is fairly standard, and there are no major conceptual issues, but previous discussions of network loading that we are aware of do not allow for waiting, and this does introduce some minor technical complications. We defer the proof to the appendix.

\begin{restatable}{thm}{loading}
\label{thm:loading}
    Given any strategy profile $\stratprof$, there is a unique associated outcome $(F^+, F^-, d)$.
\end{restatable}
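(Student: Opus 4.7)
The plan is to construct $(F^+, F^-, d)$ by forward induction on time, with step size $\tau_{\min} := \min_{e \in E} \tau_e > 0$, exploiting the key observation that any agent traversing an arc $e=vw$ satisfies $d_w(a) \geq d_v(a) + \tau_e \geq d_v(a) + \tau_{\min}$. I take the \emph{state up to time $\theta$} to consist of the restrictions $F_e^+|_{[0,\theta]}$ for each $e \in E$, together with the partial functions $d_v$ on $\{a \in \agents : d_v(a) \leq \theta\}$ for each $v \in V$. Derived quantities on $[0,\theta]$ are then fixed by this state: $z_e|_{[0,\theta]}$ via~\eqref{eq:queues_operate_at_capacity}, which only refers to $F_e^+|_{[0,\theta]}$, and $F_e^-|_{[0,\theta+\tau_e]}$ via $F_e^-(\xi+\tau_e) = F_e^+(\xi) - z_e(\xi)$ together with $F_e^- \equiv 0$ on $[0,\tau_e)$.

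For the inductive step, suppose the state up to time $\theta$ is uniquely determined. At the source, $d_s(a) = \entry{a} + \Wait_s(a)$ is prescribed directly by $\stratprof$, extending $d_s$ to all relevant agents with $d_s(a) \leq \theta+\tau_{\min}$. At any other node $w$, and for any agent $a$ with $d_w(a) \leq \theta+\tau_{\min}$ whose predecessor arc on $\Path(a)$ is $e=vw$, the key observation forces $d_v(a) \leq \theta$; hence $z_e(d_v(a))$ is known, $\dq_e(a)$ is determined through the tiebreaking formula~\eqref{eq:queue-waiting}, and $d_w(a) = d_v(a) + \dq_e(a) + \tau_e + \Wait_w(a)$ is the unique valid choice. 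Finally, $F_e^+$ extends uniquely to $[0,\theta+\tau_{\min}]$ from these newly computed departure times via its defining identity. Measurability of all constructed objects propagates through the induction, since the operations used --- additions, countable maxima, and preimages through the measurable $\stratprof$ --- all preserve measurability.

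Validity of the resulting $(F^+, F^-, d)$ then reduces largely to bookkeeping: $F_e^+$ is nondecreasing and right-continuous by its definition (which uses $d_v(a) \leq \xi$), and these properties descend to $F_e^-$; relaxed flow conservation~\eqref{eq:weak_flow_conservation} follows because every agent entering a non-sink node eventually departs it after finite total delay, with the source identity holding in rate by the definition of $\agents$; and~\eqref{eq:queues_operate_at_capacity} is built into the construction of $z_e$. The main obstacle I expect is the careful handling of atoms. Because waiting at nodes is now permitted, a positive-measure set of agents may enter an arc $e$ at precisely the same moment $d_v(a)$, so the $\entry{a}$-tiebreaking in~\eqref{eq:queue-waiting} must be shown to yield a coherent outcome: concretely, $a \mapsto \dq_e(a)$ must be measurable on each such atom, and as $a$ ranges through the atom ordered by $\entry{a}$ the induced departures off the queue must sweep continuously over a time interval of length $\mu(\mathrm{atom})/\nu_e$ at rate exactly $\nu_e$, so that the resulting $F_e^-$ is nondecreasing, right-continuous, and consistent with~\eqref{eq:queues_operate_at_capacity} across the atom's draining interval.
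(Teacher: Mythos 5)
Your proposal is correct and matches the paper's proof in all essentials: both are forward inductions in time with step size $\tau_{\min} = \min_e \tau_e$, both exploit the key observation that a departure from the head of $e$ at time $\xi$ is determined entirely by the state up to time $\xi - \tau_e$, and both then close the loop via \eqref{eq:queue-waiting} and \eqref{eq:queues_operate_at_capacity}. The only cosmetic difference is that the paper bookkeeps via the agent sets $A^\pm_e(\xi)$ and first extends $A^-_e$ before $A^+_e$, whereas you work directly with $F^+_e$ and the departure-time functions $d_v$ --- the two parametrizations carry the same information.
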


\subsection{A form of approximate dynamic equilibria}\label{sec:delta-trajectories}

We now recall the notion of \emph{earliest arrival labels}, ubiquitous in the study of Nash flows over time (see \cite{cominetti2015existence,cominetti2021long,koch2011nash}
among others).
Let $\stratprof$ be a strategy profile, with outcome $(F^+, F^-, d)$, and let $(z_e)_{e \in E}$ be the queue volume functions associated with this.
Then for any $v \in V$, the earliest arrival label $\l_v: \Rplus \to \Rplus$ maps an entry time $\theta$ to an earliest possible time a hypothetical particle departing at time $\theta$ could arrive at $v$, taking into account queueing delays induced by other agents using the current strategy profile.
They can be defined via the Bellman equations
\begin{equation}\label{eq:bellman}
\l_w(\theta) =\begin{cases} \theta & \text{ if } w=s\\
    \min_{e=vw} \l_v(\theta) + \tau_e + z_e(\l_v(\theta))/\nu_e & \text{ otherwise.}
\end{cases}
\end{equation}
Note that $z_e(\l_v(\theta))/\nu_e$ is the queue waiting time a hypothetical particle departing the source at time $\theta$ and arriving at the earliest possible time $\l_v(\theta)$ experiences on edge $e=vw$.
There is no issue to worry about in terms of tiebreaking, since all particles with $d_v(a) = \l_v(\theta)$ will have entry time at most $\theta$, and so do delay our hypothetical particle.

\paragraph{``Exact'' dynamic equilibria.}
A dynamic equilibrium has a simple definition in our notation.
It is that $d_v(a) = \l_v(\entry{a})$ for all $a \in \agents$ and $v \in \P(a)$.
That is, each agent arrives and departs at each node on its path at an earliest possible time (in particular, the agent arrives at the sink at the earliest possible time) taking into account queueing delays.

Given the vector $\l$ of earliest arrival labels of a dynamic equilibrium, 
we will follow \cite{OSV21} in calling $\l$ an \emph{equilibrium trajectory}.
We will discuss properties of dynamic equilibria and equilibrium trajectories in more detail in \Cref{sec:exact}.

\paragraph{$\epsilon$-equilibria.}
We can easily interpret the general notion of an $\epsilon$-approximate Nash equilibrium (more briefly, an $\epsilon$-equilibrium) in our model.
Every agent should have a travel time that is at most $\epsilon$ larger than the best travel time they could achieve, taking into account the actions of all other agents. 
In other words, a strategy profile is an $\epsilon$-equilibrium for some $\epsilon > 0$ if the outcome satisfies
\begin{equation}\label{eq:epsilon-eq}
    d_t(a) \leq \l_t(\entry{a}) + \epsilon \quad \text{ for all $a \in \agents$.}
\end{equation}

\paragraph{Strict $\delta$-equilibria.}
If we consider some arbitrary node $v$ in an $\epsilon$-equilibrium, it need not be the case that every agent $a$ that uses $v$ in their path arrives at $v$ within $\epsilon$ of the earliest possible arrival time.
The reason is that the agent may be able to ``catch up'' by the time it reaches the sink.
For example, if an arc entering the sink has large capacity, but at some point in time has a large queue, 
then agents could join the back of this queue over a larger interval of time, but exit the queue over a shorter interval.
%
%

It will be useful for our purposes to consider the stronger notion where this property does hold.
Define a \emph{strict $\delta$-equilibrium} as a strategy profile where the outcome satisfies
\begin{equation}\label{eq:delta-eq}
 d_v(a) \leq \l_v(\entry{a}) + \delta \quad \text{ for all $a \in \agents$ and $v \in \P(a)$.} 
\end{equation}

Given a strict $\delta$-equilibrium, the corresponding earliest arrival labels $\l$ will be of particular importance for us (as they were in the case of exact dynamic equilibria).
If $\l$ arises from a strict $\delta$-equilibrium, we will say simply that $\l$ is a \emph{$\delta$-trajectory}.
%
%


\subsection{Properties of exact equilibria}\label{sec:exact}
We now briefly summarize some useful facts about the structure of (exact) equilibria.
For more details, we refer to \cite{cominetti2015existence} and \cite{koch2011nash} on thin flows and the piecewise-linear structure; to \cite{cominetti2021long} and \cite{OSV21} for long-term behavior; and to \cite{OSV21} for the vector-field view and uniqueness and continuity of equilibria.

In most previous discussions of dynamic equilibria in networks of Vickrey bottlenecks, there is no strategy profile in the sense we have defined it for our model, where each particle chooses a single path.
Rather, an equilibrium is described by a flow over time $(F^+, F^-)$ (without waiting), which induces the earliest arrival labels $\l(\theta)$ and associated queue volumes $z_e$.
Since there is no waiting, $z_e$ is continuous for each $e$.
Let $\lq_e(\theta) = z_e(\l_v(\theta))/\nu_e$ for each $e=vw \in E$.
An arc $e=vw$ is called \emph{active} at entry time $\theta$ if $\l_w(\theta) = \l_v(\theta) + \tau_e + \lq_e(\theta)$. 
This means that a particle departing the source at time $\theta$ has a shortest path to $w$ that uses arc $e$, and that $e$ defines $\l_w(\theta)$ in the Bellman equations~\eqref{eq:bellman}.
Then one definition of a dynamic equilibrium is that $(F^+_e)'(\l_v(\theta)) = 0$ whenever $e=vw$ is not active at entry time $\theta$.
This matches our earlier definition in \Cref{sec:delta-trajectories} for our model where waiting is allowed: agents must arrive at the earliest possible time at each node on their path.
It turns out that if one defines $x_e(\theta) := F_e^+(\ell_v(\theta))$ for all $e=vw$ and $\theta$, then for a dynamic equilibrium, $x(\theta)$ is an $s$-$t$-flow of value $u_0\theta$, for each $\theta$.
From our perspective, $x_e(\theta)$ can be viewed as the measure of agents with entry time at most $\theta$ which choose arc $e$ in their strategy.

Conveniently, as we describe below, $\l$ alone, without reference to the defining flow over time, suffices to describe an exact dynamic equilibrium (which is not the case for approximate equilibrium concepts).
Consider some $e=vw$.
If $\l_w(\theta) < \l_v(\theta) + \tau_e$, then even without a queue, $e$ is not active at entry time $\theta$.
Further, it can be argued that in a dynamic equilibrium, $\lq_e(\theta) = \max\{\l_w(\theta) - \l_v(\theta) - \tau_e, 0\}$. 
So information about whether arc $e=vw$ is active or not, whether it has a queue or not (from the perspective of a particle entering the network at time $\theta$), and the length of that queue, is completely determined by $\l(\theta)$.

\paragraph{Active and resetting arcs.}
For any $\linit \in \R^{V}$, let 
\begin{equation}
    \begin{aligned}
        E'_{\linit} &:= \{ e =vw \in E : \linit_w \geq \linit_v + \tau_e\}, \quad \text{and}\\
        E^*_{\linit} &:= \{ e =vw \in E : \linit_w > \linit_v + \tau_e\}.
    \end{aligned}
\end{equation}
So $e$ is active at entrance time $\theta$ if $e \in E'_{\l(\theta)}$, and has a queue if $e \in E^*_{\l(\theta)}$. 
We also call the arcs with a queue \emph{resetting} arcs.

\paragraph{Thin flows.}
It has been shown~\cite{cominetti2015existence,koch2011nash} that a flow over time is in equilibrium if and only if the resulting pair $(x,\l)$ satisfies the following \emph{thin flow conditions} for almost every $\theta$:
setting $x' = x'(\theta)$, $\l' = \l'(\theta)$, $E' = E'_{\l(\theta)}$ and $E^* = E^*_{\l(\theta)}$, 
%

\begin{equation} \label{eq:thin}
\begin{alignedat}{2}
    x' &\text{ is a static $s$-$t$ flow of value $u_0$,}\\
\l'_{s} &= 1, && \\ 
\l'_w &= \min_{e = vw \in E'} \rho_e(\l'_v, x'_e) \quad && \text{ for all } w \in V \setminus \set{s},
\\ 
\l'_w &= \rho_e(\l'_v, x'_e) && \text{ for all } e = vw \in E' \text{ with } x'_e > 0, 
\end{alignedat}
\end{equation}
\[\text{ where } \qquad \rho_e(\l'_v, x'_e) \coloneqq \begin{cases}
\frac{x'_e}{\nu_e} & \text{ if } e = vw \in E^*,\\
\max\bigl\{\l'_v, \frac{x'_e}{\nu_e}\bigr\} & \text{ if } e = vw \in E'\backslash E^*.  
\end{cases}\]
Note that the conditions are fully determined by the pair $(E', E^*)$, with $E^* \subseteq E'$. As long as \begin{inparaenum}[(i)] \item each node $v$ is reachable from $s$ in $E'$, \item each arc $e \in E^*$ lies on an $s$-$t$-path in $E'$, and \item
no arc of $E^*$ lies on a directed cycle in $(V,E')$, \end{inparaenum} these equations always have a solution, and $\l'$ is uniquely determined~\cite{cominetti2015existence,koch2012phd}.
We will sometimes call this unique $\l'$ (leaving out $x'$) the \emph{thin flow direction}.


We call a pair $(E', E^*)$ satisfying (i)-(iii) above a \emph{valid configuration}.

Furthermore, we call a vector $\lab \in \R^V$ \emph{valid} if $(E'_{\lab}, E^*_{\lab})$ is a valid configuration. 
We will use $\Omega \subseteq \R^V$ to denote the set of valid labels. 

Define $X: \Omega \to \R^V$ be the vector field for which $X(\linit)$ is the unique solution to the thin flow equations for $(E'_{\linit}, E^*_{\linit})$, for all $\linit \in \Omega$.
Then put differently $\l'(\theta) = X(\l(\theta))$ for almost every $\theta$.
Since $X(\linit)$ depends only on $E'_{\linit}$ and $E^*_{\linit}$, it is piecewise constant, and indeed with a very specific structure.
Each arc $e=vw$ divides $\Omega$ into two open halfspaces separated by the hyperplane $\{ \linit \in \Omega: \linit_w - \linit_v = \tau_e \}$.
So an equilibrium trajectory $\l$ has a piecewise linear structure, with its direction only changing upon hitting a hyperplane.
Each maximal piecewise-linear segment of $\l$ is called a \emph{phase}.

We can define an equilibrium trajectory starting from any initial point $\linit \in \Omega$, not necessarily an empty network. 
This can be interpreted, with some care, as starting with some initial queues present; if $\l_w(0) - \l_v(0) - \tau_e > 0$, this value represents a queue delay that an agent starting at time 0 and traversing $e$ via a shortest path would experience on the arc, not the queue length at time $0$. 

\paragraph*{Generalized subnetworks.}
A further generalization we will need in our arguments is the notion of a \emph{generalized subnetwork}, 
 following \cite{OSV21}.
A \emph{generalized subnetwork} of our network $G$ is defined by valid configuration $(\Etilde, \Einf)$.  
Given such a pair, we can define a new vector field $X^{(\Etilde, \Einf)}(\cdot)$, by defining its value at position $\linit$ to be the solution to the thin flow equations determined by the pair $(\Etilde \cap E'_{\linit}, \Einf \cup E^*_{\linit})$ (as opposed to $(E'_{\linit}, E^*_{\linit})$). 
(The point $\linit \in \R^V$ is valid for the generalized subnetwork if $(\Etilde \cap E'_{\linit}, \Einf \cup E^*_{\linit})$ is a valid configuration.)
Only arcs in $\Etilde \setminus \Einf$ will have a corresponding hyperplane; arcs in $E \setminus \Etilde$ act always as being inactive, and arcs in $\Einf$ are viewed as always having a queue.
We can define an equilibrium trajectory in this generalized subnetwork in the same way as for the full network; a trajectory $\l$ that follows $X^{(\Etilde, \Einf)}$ almost everywhere.
%

\paragraph{Long-term behavior.}
Given an equilibrium trajectory $\l$ in some generalized subnetwork $(\Etilde, \Einf)$, we say that $\l$ has reached \emph{steady state} by time $T$ if $(\l(\theta): \theta \geq T)$ is within a single phase, i.e., the trajectory does not hit any hyperplane after time $T$.
This means that queues change linearly from time $T$ forward (in particular, if $\Einf = \emptyset$ and the network has sufficient capacity, then in a steady state queues will remain constant~\cite{cominetti2021long}).

With $\tilde\Omega$ being the set of valid labels for the generalized subnetwork, 
say that a label $\linit \in \tilde\Omega$ is a \emph{steady-state label} if the equilibrium trajectory starting from $\linit$ is immediately at steady state. 
Let $I$ denote the set of steady-state labels. 
It can be shown that there is a unique ``steady-state direction'' $\lambda$ so that for every $\linit \in I$, the equilibrium trajectory starting from $\linit$ is $\l(\theta) = \linit + \lambda \theta$. See \cite{OSV21} for details.

We will need the following result from \cite{OSV21}, which builds on an earlier result by \cite{cominetti2021long}, and shows that equilibrium trajectories always reach a steady state.
(We need a slightly more refined form of it than the main statement in the paper; we discuss the details of how this theorem follows from \cite{OSV21} in the appendix.)
\begin{restatable}{thm}{steadystate}\emph{({\cite{OSV21}})} 
\label{thm:steady-state}
Consider a generalized subnetwork determined by a valid configuration $(\Etilde, \Einf)$, and let $\Iloc$ be the corresponding steady-state set.
Then there exists some $\T$ such that for any valid starting label $\linit$, the equilibrium trajectory $\l$ starting from $\linit$ reaches steady state in time at most \modified{$\T \cdot d(\linit, I)$}.
\end{restatable}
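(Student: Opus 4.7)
The plan is to derive this refined statement from the main long-term convergence theorem of \cite{OSV21}, which asserts that every equilibrium trajectory in a generalized subnetwork reaches steady state in finite time, together with the Lipschitz continuity of trajectories with respect to initial conditions established there. The refinement needed is the linear-in-distance bound $\T \cdot d(\linit, \Iloc)$; the original statement only provides that the reach-time is finite, without an explicit dependence on $d(\linit, \Iloc)$.

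The argument proceeds in two phases. First, I would establish a uniform bound on the reach-time for trajectories starting at unit distance from $\Iloc$. Since the steady-state dynamics are translation-invariant along $\lambda$, one can restrict attention to a compact cross-section of $\{\linit \in \Omega : d(\linit, \Iloc) = 1\}$ transverse to $\lambda$. On this compact set, the main result of \cite{OSV21} provides a pointwise finite reach-time $T(\linit)$; upper semi-continuity of $T$, which follows from trajectory continuity together with the piecewise-constant structure of the vector field $X$, then yields a uniform upper bound $T_0$.

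Second, I would extend this uniform bound to arbitrary distances via a local analysis near $\Iloc$. The configurations $(\Etilde \cap E'_{\linit}, \Einf \cup E^*_{\linit})$ are determined entirely by the signs of the quantities $\linit_w - \linit_v - \tau_e$, so $X$ is piecewise constant on the cells of a hyperplane arrangement. Given $\linit$ with $r := d(\linit, \Iloc) < 1$, let $m \in \Iloc$ be the nearest steady-state label, and compare the trajectory from $\linit$ to the steady-state trajectory $\theta \mapsto m + \lambda \theta$. Since the trajectory from $\linit$ traverses at most finitely many configuration cells before stabilizing, and the displacement from the reference trajectory in each cell is controlled by $r$ via Lipschitz dependence on initial conditions, the total reach-time is $O(r)$, yielding a constant $\T$ with the claimed property.

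The main obstacle is managing the transitions between configurations during the approach to steady state. When the trajectory starts very close to a hyperplane, a small perturbation can push it across a cascade of several hyperplanes in rapid succession, each corresponding to a different thin-flow direction. Controlling this cascade requires a careful inductive analysis over the finite set of valid configurations, to guarantee that the total time before $\l(\theta)$ settles into the stable configuration of its projection $m$ is $O(r)$ rather than growing with the number of configuration transitions.
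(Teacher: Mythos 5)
Your approach diverges substantially from the paper's, and it has genuine gaps. The paper derives the linear-in-distance bound directly from the \emph{explicit} quantitative form of \cite[Theorem~4.14]{OSV21}: that result provides a potential function $\Phi$ increasing at rate at least some constant $\eta$, bounded above by $\OPT$, and the reach-time is controlled by $(\OPT - \Phi(0))/\eta$ plus a term governed by initial arc slacks. Since $\Phi$ is Lipschitz and equals $\OPT$ on $I$, both $\OPT - \Phi(0)$ and the slacks are $O(d(\linit,I))$, giving the linear bound essentially for free. Your proposal instead tries to reconstruct a linear bound from soft arguments (compactness and semicontinuity), which is a genuinely different and much more delicate route.

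Two concrete gaps. First, the compactness claim in your Phase~1 is unjustified and likely false: $\{ \linit \in \Omega : d(\linit, I) = 1 \}$ modulo translation along $\lambda$ need not be compact, because $I$ is not in general a line in direction $\lambda$ — it is a polyhedral set that can have other unbounded recession directions (indeed it is invariant under adding any multiple of the all-ones vector, and can be full-dimensional in some cells), so the level set at distance~$1$ is generically an unbounded polyhedral surface. Second, even granting a compact slice, upper semicontinuity of the reach-time $T(\linit)$ is not established and is circular as you set it up: to argue that nearby starting points reach steady state by approximately the same time, you need to know that a trajectory that arrives close to $I$ then reaches $I$ quickly — but that is precisely the linear-near-$I$ bound you defer to Phase~2, while Phase~2 explicitly acknowledges that the cascade-of-cells analysis is left undone. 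You also do not handle the regime $d(\linit, I) > 1$, for which neither phase applies. In short, what is needed is the quantitative content of \cite[Theorem~4.14]{OSV21} (potential increasing at a uniformly positive rate, slack-based control of the remaining phases), and your argument does not recover it.
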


\paragraph{Continuity.}
In \cite{OSV21}, it was shown that there is a \emph{unique} equilibrium trajectory for any given starting point $\l(0) \in \tilde\Omega$, and 
that the trajectory $\l$ depends \emph{continuously} on the starting point $\l(0)$.
We will make crucial use of this.
\begin{theorem}[{\cite[Theorem 3.2]{OSV21}}]\label{thm:continuity_of_Nash}
Fix a generalized subnetwork and let $\Psi: \tilde\Omega \to L^\infty([0, \infty))$ be the map that takes $\linit \in \tilde\Omega$ to the unique equilibrium trajectory $\l$ satisfying $\l(0) = \linit$. Then $\Psi$ is a continuous map, where we imbue $L^\infty([0, \infty))$ with the supremum norm.
\end{theorem}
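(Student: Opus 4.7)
The plan is to reduce the claim to continuity on a compact time interval via Theorem~\ref{thm:steady-state}, and then establish finite-horizon continuity by induction over the piecewise-linear phases of the trajectory.

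For the reduction, fix $\linit \in \tilde\Omega$ and consider any sequence $\linit_n \to \linit$ in $\tilde\Omega$. Since $\dist(\cdot, \Iloc)$ is continuous, Theorem~\ref{thm:steady-state} guarantees that, for all $n$ sufficiently large, both $\l := \Psi(\linit)$ and $\l_n := \Psi(\linit_n)$ reach steady state by a common time $T^* := \T \cdot (\dist(\linit, \Iloc) + 1)$. Since the steady-state direction $\lambda$ is unique for the fixed generalized subnetwork $(\Etilde, \Einf)$, any trajectory evolving in steady state follows a line of slope $\lambda$; thus $\l(\theta) - \l_n(\theta)$ is (up to a vanishing correction depending on when each trajectory first entered steady state) constant for $\theta \geq T^*$. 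Consequently, $\|\l - \l_n\|_{L^\infty([0, \infty))}$ is controlled by $\|\l - \l_n\|_{L^\infty([0, T^*])}$, and the task reduces to continuity on a compact interval.

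For the compact-interval continuity, I would induct over the phases of $\l$ on $[0, T^*]$. The trajectory is piecewise-linear with finitely many phases, where a new phase begins exactly when $\l$ crosses a hyperplane $H_e = \{ y \in \R^V : y_w - y_v = \tau_e \}$ for some $e \in \Etilde \setminus \Einf$. Within one phase, if $\l_n$ currently lies in the same region of $\tilde\Omega$ as $\l$, then both trajectories follow the same constant thin-flow direction $X(\cdot)$, so their pointwise distance is preserved exactly. At a phase transition, a transversal crossing of $\l$ is accompanied by a crossing of $\l_n$ at a time differing by at most $O(\|\l - \l_n\|)$, and during this brief desynchronization the two trajectories follow different directions whose magnitudes are uniformly bounded on $\tilde\Omega$. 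A direct bookkeeping argument then shows that the pointwise distance grows by at most a fixed constant multiple per transition, and finiteness of the number of phases gives a uniform bound proportional to $\|\linit_n - \linit\|$ on $[0, T^*]$.

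The main technical obstacle is non-transversal or simultaneous crossings at a phase boundary: $\l(\theta_i)$ may lie on several hyperplanes at once, so $\l_n$ can enter a valid configuration different from the one entered by $\l$. To handle this, one must exploit the compatibility of thin flows across configuration boundaries. Concretely, at a boundary label $\linit$, multiple valid configurations $(\Etilde \cap E'_{\linit'}, \Einf \cup E^*_{\linit'})$ are realized by nearby $\linit'$, but their thin-flow directions are constrained by the equations \eqref{eq:thin} to remain consistent with a common limiting direction at $\linit$. One then argues that even when $\l_n$ enters a different cell than $\l$, the displacement it accumulates before the two trajectories re-synchronize (either by landing in the same cell or reaching another phase boundary) is still $O(\|\l - \l_n\|)$. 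Making this compatibility argument quantitative\,---\,essentially, proving a local Lipschitz-like behavior of $\Psi$ near intersections of hyperplanes\,---\,is the core difficulty, and relies on the uniqueness and stability properties of thin flows from \cite{cominetti2015existence, koch2012phd, OSV21}.
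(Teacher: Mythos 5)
This theorem is not proved in the paper at all---it is imported verbatim from \cite{OSV21} (cited as Theorem~3.2 there) and used as a black box throughout. There is therefore no ``paper's own proof'' against which to compare your argument; the paper simply states the result and relies on its earlier proof in \cite{OSV21}. Your attempt is thus a blind reconstruction of a result from a different paper.

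As a reconstruction, your first paragraph is sound and does capture a genuine reduction: once both trajectories are in steady state they both move with the fixed direction $\lambda$, so the difference is frozen, and the $L^\infty([0,\infty))$ norm is achieved on the compact window $[0,T^*]$. The gap is in the second and third paragraphs. You correctly identify that the entire difficulty concentrates at hyperplane intersections (non-transversal or simultaneous crossings), and you correctly observe that the argument must rest on a ``no bifurcation'' compatibility property of thin-flow directions across adjacent cells of the hyperplane arrangement. But you then only assert that this compatibility forces the accumulated displacement to be $O(\|\l - \l_n\|)$, with no argument given; this is precisely the content of the theorem, so the proposal is circular at its crux. Moreover, your sketch quietly upgrades the claim: the theorem asserts continuity (a modulus $h$ with $h(r)\to 0$), while your ``fixed constant multiple per transition'' bookkeeping would yield a local Lipschitz bound. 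The present paper deliberately uses the weaker modulus-of-continuity formulation $h(\cdot)$ (see the definition preceding \eqref{eq:r4}) and remarks in the conclusion that the quantitative dependence is poor and that even linear dependence is unknown; so asserting Lipschitz behavior at a crossing, without justification, is exactly the step that is both missing and possibly false in general. A correct proof of this result must handle the case where $\l$ spends an interval on a hyperplane (or passes through a face of the arrangement of positive codimension), where nearby trajectories $\l_n$ may visit different cells in a different order, and must do so without assuming a uniform per-crossing error constant.
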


\section{Technical overview}\label{sec:overview}

 \subsection{The main convergence result}
 We are now ready to state our main result in its precise form. All our results here apply to an arbitrary fixed instance; note that constants hidden in $O(\cdot)$ typically depend on that instance.
 
 \begin{theorem}\label{thm:main-full}
  Let $\linit$ be the labeling corresponding to the empty network, and let $\lstar$ be the equilibrium trajectory starting from $\linit$.
  Then for every $\epsilon > 0$, there is a $\delta > 0$ such that every $\delta$-trajectory $\l$ starting from $\linit$ stays within distance $\epsilon$ of $\lstar$, i.e., 
  \[\norm{\l(\theta) - \lstar(\theta)} \leq \epsilon \quad \text{ for all } \theta \in \Rplus.\]
 \end{theorem}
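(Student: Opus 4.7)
The proof naturally decomposes into a finite-time phase and an infinite-tail phase. Fix $\Tglob$ so that by Theorem \ref{thm:steady-state}, $\lstar$ has entered its steady state by time $\Tglob$ and thereafter traces the ray $\lstar(\Tglob)+\lambda(\theta-\Tglob)$ inside $\Iloc$. The aim is to show that for $\delta$ sufficiently small, a $\delta$-trajectory $\l$ stays $\epsilon/2$-close to $\lstar$ on $[0,\Tglob]$, and within $\epsilon$ of the steady-state ray for all $\theta\geq\Tglob$.

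For the finite-time phase, I would discretize $[0,\Tglob]$ into short intervals $[kh,(k+1)h]$ and argue inductively. On each interval, compare $\l$ to the \emph{exact} equilibrium trajectory $\lstarvar_k$ starting from $\l(kh)$. For $h$ small, one can show $\lstarvar_k$ stays within $O(h\delta)$ of $\l$, since the thin-flow equations governing $\lstarvar_k$ differ by only $O(\delta)$ from the relations holding along $\l$. Continuity of $\Psi$ (Theorem \ref{thm:continuity_of_Nash}) then compares $\lstarvar_k$ to the exact trajectory started from $\lstar(kh)$, and a Gronwall-style accumulation over $\Tglob/h$ steps bounds the total error linearly in $\delta$.

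The substantive work is the infinite-tail phase, where without some contraction property ongoing perturbations of size $\delta$ could accumulate unboundedly. The key claim I would aim to establish is a return-to-$\Iloc$ estimate: there exist constants $c,C,h>0$ depending only on the instance such that
\[
  \dist(\l(\theta+h),\,\Iloc)\;\leq\;\max\!\bigl\{\,\dist(\l(\theta),\Iloc)-ch,\;C\delta\,\bigr\}
\]
for every $\delta$-trajectory and every $\theta$. This asserts linear pull toward $\Iloc$ outside an $O(\delta)$-tube, and trapping inside the tube. Combined with the $O(\delta)$-bound on $\dist(\l(\Tglob),\Iloc)$ from phase one, $\l$ stays within $O(\delta)$ of $\Iloc$ for all subsequent time; and since within this tube all exact trajectories move uniformly in direction $\lambda$, $\l$ itself moves at average rate close to $\lambda$, keeping its distance to the specific ray $\lstar|_{[\Tglob,\infty)}$ uniformly $O(\delta)$.

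The main obstacle is the contraction estimate above. Theorem \ref{thm:steady-state} and the single-point continuity result of \cite{OSV21} are inherently ``single-shot'' statements, whereas a $\delta$-trajectory is perturbed continuously. Upgrading them to a quantitative geometric contraction requires a careful short-interval coupling showing that the combinatorial configuration $(E'_{\l(\theta)},E^*_{\l(\theta)})$ visited by $\l$ remains compatible with a fixed generalized subnetwork for long enough stretches that the vector field governing $\l$ can be meaningfully compared to its exact analogue, uniformly across the finitely many valid configurations that may arise.
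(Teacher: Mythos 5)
Your decomposition into a finite-time phase and an infinite-tail phase matches the paper's structure at a high level, and you correctly identify the central challenge: ongoing perturbations should not accumulate. But two substantive gaps remain, and in both cases you are asserting precisely the part that requires the real work.

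First, for the finite-time phase, the claim that ``for $h$ small, one can show $\lstarvar_k$ stays within $O(h\delta)$ of $\l$'' is not justified and in general false. The vector field $X$ is piecewise constant with jump discontinuities at the hyperplanes $H_e$, and a $\delta$-trajectory $\l$ need not satisfy approximate thin-flow conditions instantaneously (the paper only establishes approximate thin-flow relations in the \emph{aggregated} form $\Delta x/\Delta\theta$, $\Delta\l/\Delta\theta$, over long intervals, in \Cref{lem:similar_to_thin_flow}). If $\l(kh)$ is $O(\delta)$-close to a hyperplane, $\l$ and the exact trajectory $\lstarvar_k$ can land on opposite sides and follow vector fields that differ by $\Theta(1)$, giving per-step error $\Theta(h)$ rather than $O(h\delta)$. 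The paper's fix is to induct on the number $j=|\tilde E \setminus \Einf|$ of ``live'' hyperplanes: when $\l$ is far from some hyperplane $H_e$, it is compared to the equilibrium trajectory of the \emph{generalized subnetwork} with $e$ removed or promoted to $\Einf$, where by induction the comparison works with error $\epsilon_{j-1}$. Your discretization-plus-Gronwall argument has no analogue of this structural induction, and without it the local comparison fails exactly where it matters.

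Second, your treatment of the infinite tail elides the central difficulty. The contraction estimate you propose, $\dist(\l(\theta+h),\Iloc) \leq \max\{\dist(\l(\theta),\Iloc)-ch,\,C\delta\}$, corresponds roughly to the paper's \Cref{lem:staying_close_to_steady_state_region}, and the route there (if $\l$ wanders away from $\Iloc$, apply the far-from-steady-state analysis and \Cref{thm:steady-state} to pull it back) is in the right spirit. But you then close with ``since within this tube all exact trajectories move uniformly in direction $\lambda$, $\l$ itself moves at average rate close to $\lambda$''. This is not an inference; it is the hard theorem. Staying in an $O(\delta)$-tube around $\Iloc$ does not by itself constrain the long-term drift direction, because the tube is unbounded and the perturbations are ongoing. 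The paper devotes \Cref{lem:similar_to_thin_flow}, \Cref{lem:hyperplane_dist_similar_to_thin_flow} and \Cref{lem:quotients_close} to exactly this: it shows $(\Delta x/\Delta\theta,\Delta\l/\Delta\theta)$ approximately satisfies some thin-flow conditions, supplements these with the direct comparisons to $\lambda$ available under the ``central near $\thetass$'' case split, and then adapts the Cominetti--Correa--Larr\'e cut-based uniqueness argument for thin flows into a ``robust'' form to derive a contradiction if $\Delta\l/\Delta\theta$ deviates too far from $\lambda$. None of this appears in your proposal, and it is where almost all the technical content of the proof lives.
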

 
 
 \subsection{Implications for \texorpdfstring{$\epsilon$}{epsilon}-equilibria and packet models}
To show that we can use the above theorem to obtain convergence for our two applications $\epsilon$-approximate equilibria and packet routings, we prove that both these concepts can be modeled as strict $\delta$-equilibria, where $\delta$ depends on $\epsilon$ in the first case and on the packet size in the latter case.

\begin{restatable}{thm}{epsilonequiisdeltatrajectory}
\label{thm:epsilon_equi_is_delta_trajectory}
Let $\stratprof$ be an $\epsilon$-equilibrium for some $\epsilon > 0$. 
Then $\stratprof$ is a strict $\delta$-equilibrium for $\delta = O(\epsilon)$. 
\end{restatable}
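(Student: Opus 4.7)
Fix an agent $a$ with entry time $\theta := \entry{a}$ and path $\P(a) = (v_0 = s, v_1, \dots, v_k = t)$, with arcs $e_i := v_{i-1}v_i$. Write $\Delta_i := d_{v_i}(a) - \l_{v_i}(\theta)$; the $\epsilon$-equilibrium hypothesis gives $\Delta_k \le \epsilon$, and I want to show $\max_i \Delta_i = O(\epsilon)$. My plan is to combine a backward local recursion along $\P(a)$ with a global alternative-strategy argument at each intermediate node.

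The local step unfolds $d_{v_{i+1}}(a) = d_{v_i}(a) + \tau_{e_{i+1}} + \dq_{e_{i+1}}(a) + \Wait_{v_{i+1}}(a)$ together with the Bellman bound $\l_{v_{i+1}}(\theta) \le \l_{v_i}(\theta) + \tau_{e_{i+1}} + z_{e_{i+1}}(\l_{v_i}(\theta))/\nu_{e_{i+1}}$. Using $\Wait_{v_{i+1}}(a) \ge 0$ and that the atomic tiebreaking term in \eqref{eq:queue-waiting} is nonnegative, one obtains
\[
\Delta_i \;\le\; \Delta_{i+1} + \tfrac{1}{\nu_{e_{i+1}}}\bigl(z_{e_{i+1}}(\l_{v_i}(\theta)) - z_{e_{i+1}}(d_{v_i}(a))\bigr).
\]
Since $z_e$ drains at rate at most $\nu_e$, iterating this inequality alone only produces an additive, network-dependent bound. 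To get a multiplicative one I need an additional global input.

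For the global step I introduce $\hat\l^{v,\xi}_t$, the earliest arrival at $t$ starting from $v$ at time $\xi$, defined via Bellman equations analogous to \eqref{eq:bellman} using the queue volumes from $\stratprof$'s outcome. A short induction using the monotonicity of $g_e(\xi) := \xi + \tau_e + z_e(\xi)/\nu_e$ (queues drain at rate at most $\nu_e$) shows that $\hat\l^{v,\cdot}_t$ is non-decreasing. Since $a$ could have deviated at $v_i$ to the quickest onward route, $d_t(a) \ge \hat\l^{v_i, d_{v_i}(a)}_t$; combined with $\hat\l^{v_i, \l_{v_i}(\theta)}_t \ge \l_t(\theta)$ (Bellman via $v_i$) and the $\epsilon$-equilibrium condition, one obtains the chain
\[
\l_t(\theta) \;\le\; \hat\l^{v_i, \l_{v_i}(\theta)}_t \;\le\; \hat\l^{v_i, d_{v_i}(a)}_t \;\le\; d_t(a) \;\le\; \l_t(\theta) + \epsilon,
\]
so the total ``catch-up'' $\hat\l^{v_i, d_{v_i}(a)}_t - \hat\l^{v_i, \l_{v_i}(\theta)}_t$ available to a delayed agent from $v_i$ to $t$ is at most $\epsilon$.

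The closing step converts this global catch-up budget into a pointwise bound on $\Delta_i$: $\hat\l^{v_i,\cdot}_t$ can stay constant only on intervals where every arc on a shortest $v_i$-$t$ route drains at full capacity (and hence receives no inflow), which in an equilibrium is a network-controlled condition. Beyond such ``flat'' intervals, $\hat\l^{v_i,\cdot}_t$ strictly increases, and its total increase is capped by $\epsilon$; accounting for this, together with the local recursion, yields $\Delta_i \le C\epsilon$ for a network-dependent constant $C$. The main obstacle is precisely this last step: the naive per-arc Lipschitz bound on queue delays loses a constant per arc and only gives $\delta \le \epsilon + O(1)$. Achieving the multiplicative $O(\epsilon)$ bound requires a careful charging argument showing that each unit of catch-up that a delayed agent exploits on an intermediate arc consumes a corresponding portion of the $\epsilon$-budget at the sink, so the two scale linearly together.
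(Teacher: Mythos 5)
Your plan correctly identifies the obstacle but does not overcome it, and the gap you concede is genuine and fatal to the argument as written. The chain $\l_t(\theta) \le \hat\l^{v_i, \l_{v_i}(\theta)}_t \le \hat\l^{v_i, d_{v_i}(a)}_t \le d_t(a) \le \l_t(\theta) + \epsilon$ is fine, so the onward ``catch-up'' budget from $v_i$ is indeed at most $\epsilon$. But this does not bound $\Delta_i = d_{v_i}(a) - \l_{v_i}(\theta)$ by $O(\epsilon)$: the map $\xi \mapsto \hat\l^{v_i,\xi}_t$ can be \emph{exactly constant} over an interval whose length has nothing to do with $\epsilon$ (e.g., when the queue on an arc downstream of $v_i$ drains at full capacity over a long window; this is precisely the ``catch-up'' scenario the paper uses to motivate strict $\delta$-equilibria in the first place). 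So knowing that the increase of $\hat\l^{v_i,\cdot}_t$ on $[\l_{v_i}(\theta), d_{v_i}(a)]$ is at most $\epsilon$ gives no upper bound at all on the length of that interval. Your closing ``charging argument'' would need to show that the interval itself is short, and the sketch you give does not do so; in particular, the claim that flatness is a ``network-controlled condition'' that can be paid for out of the $\epsilon$-budget at the sink is exactly the thing that must be proved, and it fails for the same reason the naive per-arc Lipschitz bound fails.

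The paper resolves this by reasoning about \emph{mass of overtaking agents} at each node rather than about catch-up along onward routes. Concretely, it first shows (their Lemma~\ref{lem:not_too_much_flow_overtakes}) that in an $\epsilon$-equilibrium, for any fixed agent $a$ and node $v \in \P(a)$, the measure of agents $b$ with $\entry{b} > \entry{a}$ but $\l_v(\entry{b}) < d_v(a)$ is at most $3\epsilon\capsum$; the proof of that lemma works entirely at the sink $t$, where the $\epsilon$-equilibrium condition directly controls arrival times. Since the source inflow rate is $u_0$, this forces $d_v(a) \le \l_v\bigl(\entry{a} + 3\epsilon\capsum/u_0\bigr)$, by monotonicity of $\l_v$. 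The paper then proves a separate approximate Lipschitz property for the earliest-arrival labels of $\epsilon$-equilibria (their Theorem~\ref{thm:epsilon_lipschitz}, via a symmetric ``bounded overtaking'' lemma), which converts the bound $d_v(a) \le \l_v(\entry{a} + 3\epsilon\capsum/u_0)$ into $d_v(a) \le \l_v(\entry{a}) + O(\epsilon)$. This route avoids the flat-interval problem entirely, because the quantity being controlled (measure of agents crossing a node) is always directly comparable to $\epsilon$, whereas the quantity you try to control (increase of an onward arrival label) can be insensitive to the interval's length. If you want to salvage your approach, you would essentially need to re-derive the overtaking bound anyway, at which point the local recursion becomes unnecessary.
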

(The big-O hides network-dependent constants.)
To prove this, we show that for an $\epsilon$-equilibrium, (i) the mass of particles that could in principle overtake a fixed agent at a given node is $O(\epsilon)$, and (ii) the earliest arrival labels fulfill an approximate Lipschitz-property. 
If an agent $a$ were to arrive at some node much later than the quickest path would allow, then by approximate Lipschitz-continuity, the measure of other agents that would be able to overtake $a$ would be too large.

\medskip

We will fix a packet-routing model that is similar to the one discussed in \cite{hoefer2011competitive}; see \Cref{subsec:packet_model} for a formal definition of the model. 
In this model, a packet enters the next arc of its path only once it has been fully processed by the previous arc.
Given an equilibrium in this packet model, say with packets of size $\beta$, 
we can view each packet as consisting of a measure $\beta$ of infinitesimal flow particles, each taking the same path. 
In order to maintain the temporal integrity of a packet, we exploit the flexibility of waiting at nodes in our model.
If a packet is being processed by some arc $e=vw$, we hold all particles of the packet at $w$ as long as any of the particles are still being processed by arc $e$.
Once all these particles arrive, they depart all at once onto the next arc of the path.
This results in a joint strategy choice for all particles, with waiting at nodes.

\begin{restatable}{thm}{packets} \label{thm:packets}
Suppose we are given an equilibrium of the packet model with packet size $\beta$, and consider the corresponding flow over time strategy profile $\stratprof$.
Then $\stratprof$ is a strict $\delta$-equilibrium for $\delta = O(\beta)$.
\end{restatable}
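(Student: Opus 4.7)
The plan is to reduce this to \Cref{thm:epsilon_equi_is_delta_trajectory}: it suffices to prove that the embedded strategy profile $\stratprof$ is an $\epsilon$-equilibrium for some $\epsilon = O(\beta)$, and that theorem then upgrades this into the desired strict $O(\beta)$-equilibrium. Hence the goal reduces to showing $d_t(a) \leq \l_t(\entry{a}) + O(\beta)$ for every particle $a \in \agents$, where the constants hidden in $O(\cdot)$ are allowed to depend on the network (in particular on $|V|$ and $\capmin$).

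Fix a packet $p$ of the given packet-routing equilibrium, let $\theta_p$ be its entry time into the network, and set $T_p := \max_{a \in p} d_t(a)$, the time at which the last particle of $p$ exits the sink in the embedding. Since $\l_t$ is nondecreasing, $\entry{a} \geq \theta_p$ for every $a \in p$, and $d_t(a) \leq T_p$ by definition, it suffices to prove
\[ T_p \;\leq\; \l_t(\theta_p) + O(\beta). \]

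To prove this, I will exploit the packet-level equilibrium property: deviating $p$ to any alternative $s$-$t$-path cannot strictly decrease its arrival time. Let $P^* = (v_0, v_1, \ldots, v_k)$ be a shortest $s$-$t$-path witnessing $\l_t(\theta_p)$ in the queues $z_e$ induced by $\stratprof$. The claim is that if packet $p$ had routed along $P^*$ while all other packets kept their strategies fixed, its resulting arrival time $T_p^*$ would satisfy $T_p^* \leq \l_t(\theta_p) + O(\beta)$. I prove this inductively along $P^*$: letting $\theta_i$ be the time at which the deviated packet enters the arc $e_i = v_i v_{i+1}$, the induction shows $\theta_i - \l_{v_i}(\theta_p) = O(\beta)$ uniformly in $i$. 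Comparing the two recurrences (packet versus continuum particle), three contributions appear at each step: the packet's own processing delay $\beta/\nu_{e_i}$ on $e_i$; the packet-integrity waiting at $v_{i+1}$, also bounded by $\beta/\nu_{e_i}$; and the difference in queue delay between $p$'s experience and that of the hypothetical infinitesimal continuum particle on $e_i$. This last difference has two sources, each of size $O(\beta)$: rerouting a single packet modifies the queue volume on $e_i$ by at most $\beta$, and the inductive shift of $p$'s entry time away from $\l_{v_i}(\theta_p)$ by $O(\beta)$ changes the relevant queue reading by a further $O(\beta)$. Summing over the at most $|V|$ arcs of $P^*$ yields $T_p^* \leq \l_t(\theta_p) + O(\beta)$, and combining this with the packet-equilibrium inequality $T_p \leq T_p^*$ gives the bound on $T_p$.

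The main technical obstacle I anticipate is the queue-shift estimate inside the inductive step: the queue volume $z_{e_i}$ is not globally Lipschitz in time, because atoms of other packets may enter instantaneously in the embedding, so one must carefully account for the inflow into and outflow out of $e_i$ over the short time window $[\l_{v_i}(\theta_p), \theta_i]$ of length $O(\beta)$, using that in the embedding the total packet mass that can enter $e_i$ across such a window is itself only $O(\beta)$ (up to a network-dependent constant), since packets arriving into $e_i$ must first traverse incoming arcs whose service rates are bounded. Once this queue-shift bound is in place, the remaining analysis is routine bookkeeping, and \Cref{thm:epsilon_equi_is_delta_trajectory} delivers the theorem.
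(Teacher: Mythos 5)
Your high-level reduction --- show $\stratprof$ is an $O(\beta)$-equilibrium and invoke \Cref{thm:epsilon_equi_is_delta_trajectory} --- matches the paper's exactly, and comparing the deviated packet's trajectory to a continuum earliest-arrival path is the right spirit. But the central queue-shift estimate has a genuine gap. You assert that rerouting a single packet modifies the queue on each arc by at most $\beta$. In the FIFO packet model this is false as stated: removing $p$ from its original path accelerates every packet that queued behind it, these accelerations propagate and compound multiplicatively downstream (precisely the mechanism quantified in \Cref{lem:common_lemma} and \Cref{claim:simple_bound}), and with the index-based tiebreaking rule they can reorder arrivals at later arcs and thereby delay yet other packets. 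The change in the queue your deviated packet encounters on $e_i$ is only $O(\beta)$ with an exponential-in-$|V|$ constant, and establishing this is itself the bulk of the technical work. Moreover, the $\l$-labels and queues you compare against are those of the original profile $\stratprof$, while the packet you track lives in the deviated profile --- so your induction is inherently a \emph{cross-profile} sensitivity bound. The paper's decomposition is designed precisely to localize the multiplicative sensitivity analysis inside a \emph{single} profile: \Cref{lem:common_lemma} compares, within the deviated profile $\stratprof'$, an agent's actual arrival time against the arrival time it would have if it dropped the packet-integrity waiting; the only cross-profile step is then the one-line packet-equilibrium inequality applied to the \emph{last} agent $a_\ell$ (the packet's decision maker), followed by the ``follow the previous packet'' bootstrap (\Cref{claim:packet_following}) to pass to all agents of the packet. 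Your one-shot induction would bypass that bootstrap, but only if the cross-profile queue-shift estimate were carried out carefully, which is exactly what is missing.

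Two smaller imprecisions: the claim $\theta_i - \l_{v_i}(\theta_p) = O(\beta)$ ``uniformly in $i$'' is misleading --- the hidden constant grows exponentially in $i$, which is acceptable only because the path is simple so $i \le |V|$; and the ``packet-integrity waiting'' term in your list of per-step contributions does not arise when $p$ deviates inside the packet model (it belongs to the embedding), so your accounting conflates the packet game with the flow-over-time embedding.
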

We show that $\stratprof$ is an $\epsilon$-equilibrium for some $\epsilon = O(\delta)$; the claim then follows from \Cref{thm:epsilon_equi_is_delta_trajectory}.
The intuition for why this holds is simply that the ``last'' particle in each packet takes an earliest arrival path\footnote{``Last'' rather than ``first'' because of details of the specific packet model; for other natural packet models this might hold for the first particle instead.}; and for other particles, assuming some Lipschitzness, things cannot go too badly wrong.

 \subsection{Proof overview of the main convergence result}
 It will be somewhat convenient for our purposes to invert the dependence between $\epsilon$ and $\delta$. 
 We will think of $\delta > 0$ as being given, and we must choose $\epsilon$ depending on $\delta$ so that any $\delta$-trajectory $\ell$ remains $\epsilon$-close to the equilibrium trajectory $\lstar$, and moreover, the dependence of $\epsilon$ on $\delta$ must be such that $\epsilon  \to 0$ as $\delta \to 0$.
 From this perspective, $\delta$ is some ``small'' quantity, and $\epsilon$ will be some typically much larger quantity --- but nonetheless still ``quite small'' in the sense that it goes to $0$ as $\delta$ goes to $0$.
 Our arguments will involve producing a sequence of parameters that are ``quite small'' in the same sense, eventually leading to our choice of $\epsilon$. 
 We make this precise with the following definition.
 
 \begin{definition}
 We call a function $r \colon \R_{> 0} \to \R_{\geq 0}$ a \emph{small parameter} if $r(\delta) \to 0$ as $\delta \to 0$, and its definition only depends on the network $G$ and previously defined small parameters.
 
 We will typically omit the dependence on $\delta$ if there is no ambiguity, and write simply $r$ rather than~$r(\delta)$.
 \end{definition}
 
Let $\Tglob$ denote the time an equilibrium trajectory in the network $G$ starting from the empty network requires to reach steady state.
Our main convergence theorem comes as a consequence of the following technical theorem.
\begin{restatable}{thm}{technical}
\label{thm:technical}
 There exists a family of small parameters $(\epsilon_j)_{j \in \set{0, \dots, \abs{E}}}$ such that the following holds, for any $\delta$ small enough.
 Let $(\Etilde, \Einf)$ be a valid configuration and $j \coloneqq \lvert\Etilde \setminus \Einf\rvert$. 
 Fix an interval $[\theta_0, \theta_1]$ and a $\delta$-trajectory $\l$.
Let $\lstar$ 
be the equilibrium trajectory for the generalized subnetwork defined by $(\Etilde, \Einf)$ starting with $\lstar(\theta_0)$ being a valid labeling closest to $\l(\theta_0)$, and denote by $T$ be the time required for $\lstar$ to reach steady state. 
 Then supposing that \begin{compactenum}[(i)] 
     \item for every $\linit \in B_{2 \delta}(\l([\theta_0, \theta_1]))$, $E'_{\linit} \subseteq \Etilde$ and $E^*_{\linit} \supseteq \Einf$, and
 \item $\min\{\theta_1 - \theta_0, T\} \leq \Tglob + 1$, 
 \end{compactenum}
 we have 
 \[\norm{\l(\theta) - \lstar(\theta)} \leq \epsilon_j \quad \text{ for all } \theta \in [\theta_0, \theta_1]. \]
\end{restatable}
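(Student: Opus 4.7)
The plan is to prove the theorem by induction on $j = \lvert\tilde E \setminus E^\infty\rvert$, defining the small parameters $(\epsilon_j)_{j=0}^{|E|}$ recursively so that each $\epsilon_j$ depends only on $\delta$, network constants, and $\epsilon_{j-1}$. Throughout, I will repeatedly use that the interval length (or the time to steady state) is bounded by $\Tglob+1$, so that accumulated errors remain controllable.

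\textbf{Base case $j=0$.} We have $\tilde E = E^\infty$, and condition~(i) together with the always-true inclusion $E^*_\lambda \subseteq E'_\lambda$ forces the effective configuration of $\ell$ to be identically $(\tilde E, \tilde E)$ throughout the $2\delta$-neighborhood of $\ell([\theta_0,\theta_1])$. The thin-flow equations~\eqref{eq:thin} therefore admit a single unique direction $\lambda^\circ$, and $\ell^*$ is linear with this direction. Using the $\delta$-trajectory property (an approximate form of the Bellman equations~\eqref{eq:bellman}) together with the queue dynamics~\eqref{eq:queues_operate_at_capacity}, I will show that the instantaneous direction of $\ell$ agrees with $\lambda^\circ$ up to $O(\delta)$-error on every subinterval where $\ell$ is linear; aggregating these errors over an interval of length at most $\Tglob+1$ yields a uniform bound of order $\delta$, which defines $\epsilon_0$.

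\textbf{Inductive step $j\geq 1$.} Let $\theta' \in (\theta_0, \theta_1]$ be the first time $\ell^*$ crosses a hyperplane, or set $\theta' := \theta_1$ if no crossing occurs. On $[\theta_0, \theta']$ the trajectory $\ell^*$ remains within a single phase with fixed effective configuration $(\tilde E^{\mathrm{eff}}, E^{\infty,\mathrm{eff}})$; here an argument analogous to the base case, combined with \Cref{thm:continuity_of_Nash}, yields closeness of $\ell$ and $\ell^*$ on this sub-interval. If $\theta' < \theta_1$, the crossing corresponds to some arc $e \in \tilde E \setminus E^\infty$ flipping status: either $e$ acquires a queue, in which case we set $\tilde E' := \tilde E$, $E^{\infty'} := E^\infty \cup \{e\}$; or $e$ drops out of the active set, in which case we set $\tilde E' := \tilde E \setminus \{e\}$, $E^{\infty'} := E^\infty$. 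In either case $\lvert\tilde E' \setminus E^{\infty'}\rvert = j-1$. I then let $\ell^{*'}$ be the equilibrium trajectory in the reduced generalized subnetwork $(\tilde E', E^{\infty'})$ starting at a valid labeling closest to $\ell(\theta')$, verify conditions~(i) and~(ii) for this new configuration on $[\theta', \theta_1]$, and invoke the induction hypothesis to obtain $\lVert\ell(\theta) - \ell^{*'}(\theta)\rVert \leq \epsilon_{j-1}$ on $[\theta', \theta_1]$. Finally, \Cref{thm:continuity_of_Nash} applied inside $(\tilde E, E^\infty)$ shows $\ell^{*'}$ and $\ell^*$ remain close after $\theta'$, and a triangle inequality gives the required bound $\epsilon_j$.

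\textbf{Main obstacle.} The principal challenge is handling the phase transition at $\theta'$ in the inductive step. The $\delta$-trajectory $\ell$ need not cross the relevant hyperplane at exactly $\theta'$, which can briefly create a mismatch between the effective configuration of $\ell$ and that of the reference trajectory $\ell^{*'}$ on the reduced subnetwork. Verifying condition~(i) for the reduced configuration despite such misalignments requires quantitative control over how quickly the effective configuration of a $\delta$-trajectory can change near a hyperplane, together with accurate estimates on the distance between $\ell(\theta')$ and the nearest valid labeling for $(\tilde E', E^{\infty'})$. Condition~(ii) must also be re-verified for the sub-interval $[\theta', \theta_1]$ by exploiting monotonicity of the steady-state time with respect to restriction of the subnetwork (via \Cref{thm:steady-state}); careful bookkeeping is needed to ensure the small parameters do not blow up over the chain of up to $|E|$ inductive steps.
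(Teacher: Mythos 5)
Your high-level plan (induction on $j$, with base case $j=0$ and inductive step peeling off one hyperplane) resembles the paper's outer structure, but the two pivotal ideas that make the proof work are absent, and in their place you substitute steps that do not go through.

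First, your base case, and more generally the plan of showing that the instantaneous direction of $\ell$ agrees with the thin-flow direction up to $O(\delta)$ and then "aggregating over $[\theta_0,\theta_1]$", cannot work because the interval $[\theta_0,\theta_1]$ may be arbitrarily long. Condition (ii) only bounds $\min\{\theta_1-\theta_0,T\}$; when $T$ is small, $\theta_1-\theta_0$ is unconstrained. An $O(\delta)$-error in direction, accumulated over an unbounded interval, gives no bound at all. The paper explicitly flags this as the central difficulty: one must prove \emph{self-correction} of $\ell$ near steady state, not just approximate agreement of directions. This is done via a ``robust uniqueness'' argument for thin flows: one controls the ratios $\Delta\ell_v/\lambda_v$ directly through a cut-based contradiction (Lemmas~\ref{lem:similar_to_thin_flow}--\ref{lem:quotients_close}), modeled on the Cominetti--Correa--Larr\'e uniqueness proof. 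This is the engine of the base case $j=0$ (where the equilibrium is immediately in steady state and no hyperplane is available to recurse on), and it is entirely missing from your proposal.

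Second, your inductive step uses the wrong trigger. You recurse at the time $\theta'$ when $\ell^*$ first crosses a hyperplane $H_e$, and then remove $e$ from the configuration. But at such a moment $\ell^*(\theta')\in H_e$, hence $\ell(\theta')$ is \emph{close} to $H_e$, not far from it; condition (i) for the reduced configuration $(\tilde E', E^{\infty'})$ demands the opposite, that the whole $2\delta$-tube around $\ell([\theta',\theta_1])$ lie on one fixed side of $H_e$. In general $\ell$ (and $\ell^*$) will re-cross $H_e$, so this hypothesis is simply false. The paper's induction is driven by a different quantity: it identifies intervals on which $\ell$ itself remains far from \emph{some} hyperplane (a consequence of being far from the steady-state set $I$, via the geometric Lemma~\ref{lem:hyperplaneseparation}), and only then is the corresponding arc frozen and $j$ reduced. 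Near $I$, where no hyperplane can be removed, the cut argument takes over. Your "main obstacle" correctly observes a misalignment problem at $\theta'$, but the resolution is not more careful bookkeeping; it is a different induction trigger, plus a new argument (robust thin-flow uniqueness) that does not appear in your plan at all.
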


 A few remarks:
 \begin{itemize}
     \item While $\l(\theta)$ need not be a valid labeling, we can show that it always remains close to a valid labeling. 
 So in this theorem, $\|\lstar(\theta_0) - \l(\theta_0)\| = O(\delta)$. 
     \item Condition (ii), while slightly awkward, will be convenient for inductive purposes. 
        In some cases, we will apply the theorem inductively to an interval of length at most $\Tglob+1$, 
        and in other cases, to a potentially unbounded interval but where the time $T$ is guaranteed to be small.
     \item The equilibrium trajectory $\lstar$ in this theorem is not (in general) an equilibrium of the original network, but rather of the generalized subnetwork determined by $(\Etilde, \Einf)$.
         This is again for inductive purposes; the arcs in $\Einf$ are treated as if they can never empty out, and arcs not in $\Etilde$ are simply not present and cannot be used.
        If we are able to focus on a smaller number of hyperplanes, we can proceed inductively. 
         It may initially seem paradoxical that we show that $\l$ stays close to $\lstar$, if $\lstar$ is not the equilibrium trajectory in the full network that, in the end, we are showing that $\l$ remains close to.
        The resolution is in condition (i), which is very strong.
        At the end of the day, this condition will only hold for intervals where $\lstar$ is close to the equilibrium trajectory of the full network.
     \item This technical theorem implies our main theorem, \Cref{thm:main-full}, fairly immediately.
         Simply take $(\Etilde, \Einf) = (E, \emptyset)$, $\theta_0 = 0$ and $\theta_1$ arbitrarily large.
         The trajectory $\lstar$ is the equilibrium in the original network, starting from the empty network, and so 
         condition (ii) is satisfied by the definition of $\Tglob$.
         Condition (i) is vacuous, and so we obtain the desired claim, with $\epsilon = \epsilon_{|E|}$.
 \end{itemize}
 
 The inductive proof of \Cref{thm:technical} can be broken into two main parts. 
 Unless otherwise indicated, any reference to an equilibrium trajectory (in particular the steady-state direction $\lambda$) refers to such a trajectory in the generalized subnetwork defined by $(\Etilde, \Einf)$, and $\Omega$ refers to the set of valid labels in this generalized subnetwork. 
 
 \paragraph{Part I: Before reaching (near to) steady state.}
 This first part is heavily inductive, and makes little direct use of the properties of $\delta$-trajectories.
 The induction is on $j = \lvert\Etilde \setminus \Einf\rvert$, that is, the number of hyperplanes determining our vector field $X$; see \Cref{fig:figure} for an illustration of these vector fields and some key features of the proof.
 
 Let $\Iloc \subseteq \Omega$ be the steady-state set, and $\lambda$ be the steady-state direction of the generalized subnetwork.

   \begin{figure}[t]
   \centering
   \includegraphics[page=1]{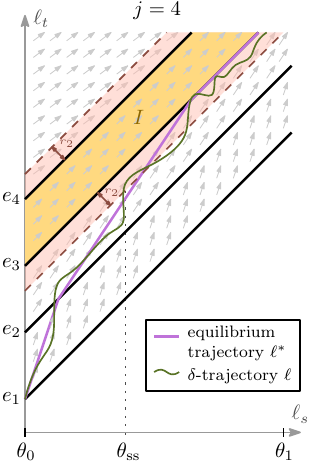}\;\;
   \includegraphics[page=2]{figure}\;\;
   \includegraphics[page=3]{figure}
   \caption{As an example consider a network with only two nodes $s$ and $t$ but four parallel arcs $e_1$ to $e_4$ with transit time $\tau_{e_i}=2i + 1$ and capacities $1$. The network inflow rate is $u_0 = 3$. \emph{On the left} all hyperplanes are present and the equilibrium trajectory reaches steady state as soon as arcs $e_1$, $e_2$ and $e_3$ are active. To prove \Cref{thm:technical} we consider inductively also generalized networks with less arcs. \emph{In the middle} $e_3$ and $e_4$ are removed and therefore $\tilde E = \set{e_1, e_2}$ and $\Einf = \emptyset$. We consider an interval $[\theta_0, \theta_1]$ such that all other hyperplanes keep distance to $\l$. We split the interval at $\thetass$ which is the first point in time $\l$ comes $\r_2$ close to the steady-state set $\Iloc$. Here, we also illustrated the equilibrium trajectory $\lss$, which starts within $B_{\r_3}(\l(\thetass)) \cap \Iloc$ and therefore stays in steady state. $\lstar$ and $\lss$ are close due to the continuity of equilibrium trajectories; see \Cref{thm:continuity_of_Nash}. 
   \emph{On the right} we choose the hyperplanes of $e_2$ and $e_3$, which means that $e_4$ is removed and $e_1$ is promoted to a free arc. Hence $\tilde E = \set{e_1, e_2, e_3}$ and $\Einf = \set{e_1}$.}
   \label{fig:figure}
   \end{figure}

 \medskip
 
 We first consider the behavior when sufficiently far from $\Iloc$.
 A geometric argument shows that this means that $\l(\theta)$ is reasonably far from some hyperplane. 
 Let us sketch this argument.
 If all $j$ hyperplanes do not have a common intersection, then necessarily there is a (network-dependent) lower bound on the distance between hyperplanes, and so $\l(\theta)$ must be ``far'' from at least one hyperplane. 
 Otherwise, if the intersection of all hyperplanes is nonempty, all points in this common intersection can be shown to be part of $\Iloc$.
One can always find a constant $\centerdist$ such that the distance between a given point and the intersection is at most $\centerdist$ times the distance to the farthest hyperplane.
 So being ``far'' from the common intersection of all hyperplanes means being (relatively) far from some hyperplane.
 
 However, we may have a situation where over an interval $\l(\theta)$ remains far from $\Iloc$, but not far from any single hyperplane; rather, we are always far from some hyperplane, but this hyperplane changes over time.
 So we divide the interval into ``periods'', where in each period we are far from a single particular hyperplane; we do this in such a way that each period is not too short.
 We then apply the theorem inductively for each period, one after the other.
 This is a somewhat lossy process; 
 we can control the distance that $\l$ deviates from $\lstar$ over the period in terms of the distance they are apart at the beginning of the period (here the continuity of equilibrium trajectories as stated in \Cref{thm:continuity_of_Nash} is central), but these bounds get worse as we consider more and more periods.
 Fortunately, we can bound the number of periods, because of the fact that $\lstar$ converges to steady state (along with our assumption (i)) gives a bound on the amount of time $\lstar$ (and then inductively, $\l$) stays away from $\Iloc$.
 Since we also argue that each period is not too short, this suffices.
 
In slightly more detail, suppose that $\l(\theta)$ is far from some hyperplane, say the one associated with arc $e'=v'w'$, on an interval $[\theta'_0, \theta'_1]$ of length at most $\Tglob +1$. Therefore, we can proceed inductively on this interval.
 If $e'$ is inactive at $\l(\theta'_0)$, we consider the generalized subnetwork defined by $(\Etilde \setminus \{e'\}, \Einf)$;
 if $e'$ is active (hence has a queue) at $\l(\theta'_0)$, we consider the generalized subnetwork defined by $(\Etilde, \Einf \cup \{e'\})$.
 We apply the theorem inductively to deduce that $\l$ remains close to $\lstarind$, the exact equilibrium of the generalized subnetwork starting from  a closest valid point to $\l(\theta'_0)$. As long as $\l(\theta)$ is further than $\epsilon_{j-1}$ from the hyperplane, then we can in addition deduce that $\lstarind$ does not hit the hyperplane either, and so due to continuity of equilibrium trajectories (\Cref{thm:continuity_of_Nash}) $\lstarind$ stays close to $\lstar$ on this interval, and we have what we want for this particular interval.
 
 
 \medskip
 Denote the first point in time that $\l$ gets within distance $\r_2$ of $\Iloc$ (for some suitable small parameter $\r_2$) by $\thetass$.
The next claim is that $\l$ remains (somewhat) close to $\Iloc$ for the remainder of the evolution:
for some small parameter $\r_3$ (which may be much larger than $\r_2$), 
$\dist(\l(\theta), \Iloc) < \r_3$ for all $\theta \in [\thetass, \theta_1]$.
In order to reach a distance $\r_3$ from $\Iloc$, there will need to be an interval $[\thetastart, \thetaend]$ where $\dist(\l(\thetastart), \Iloc) \leq \r_2$, $\dist(\l(\thetaend), \Iloc) \geq \r_3$, and $\dist(\l(\theta), \Iloc) \geq \r_2$ for all $\theta \in [\thetastart, \thetaend]$.
Since $\l$ remains far from $\Iloc$ in this interval, we can apply what we have already shown to deduce that $\l$ remains close to the equilibrium trajectory $\lstarvar$ starting from a point $\lstarvar(\thetastart)$ close to $\l(\thetastart)$.
But this equilibrium trajectory will reach steady state very quickly, by \Cref{thm:steady-state}.
By choosing $\r_3$ large enough compared to $\r_2$ (but still with $\r_3 \to 0$ as $\delta \to 0$),
we can ensure that this happens by some time $\theta' < \thetaend$.
This exploits that $\delta$-trajectories can be shown to be approximately Lipschitz in a certain sense, and so the interval $[\thetastart, \thetaend]$ cannot be too short if $\r_3$ is large.
But this means that $\l(\theta')$, being close to $\lstarvar(\theta')$, is close to $\Iloc$\,---\,a contradiction.

 \paragraph{Part II: While close to steady state.}
 At this point, we have deduced that $\l(\theta)$ is close to $\lstar(\theta)$ until some time $\thetass$, and that $\l$ remains close to $\Iloc$ from time $\thetass$ forwards.
It remains to argue that $\l(\theta)$ remains within distance $\epsilon_j$ of $\lstar(\theta)$ for all $\theta \geq \thetass$ (for some small parameter $\epsilon_j$). The main part is to prove that $\l$ stays within $\epsilon_j$ distance to an equilibrium trajectory $\lss(\theta) = \lss(\thetass) + (\theta - \thetass)\lambda$ with $\lss(\thetass) \in \Iloc$ close to $\l(\thetass)$ (and therefore close to $\lstar(\thetass)$).
 Notice that we are asking for something quite strong.  
 It is not enough to show that $\l$ moves in ``roughly the right direction''; even a small error in the direction, if maintained, would lead to a large error after a long enough period of time, and we might be considering an arbitrarily long interval. 
 Some amount of self-correction is required; if $\l$ deviates from $\lstar$ in some direction by a significant amount, it should not deviate further in this direction (but could drift away in some other direction).
 
Consider some (possibly large) $\theta \in (\thetass, \theta_1]$, and let $\Delta \theta \coloneqq \theta - \thetass$, and $\Delta \l \coloneqq \l(\theta) - \l(\thetass)$.
 Let us also define $\Delta x_e$ to be the measure of particles that use arc $e=vw$, and enter the arc at some time in the interval $(\l_v(\thetass), \l_v(\theta)]$. 
Observe that if $\l$ was an exact equilibrium, then $(\Delta \l/\Delta \theta, \Delta x / \Delta \theta)$ is a solution to the thin flow equations \eqref{eq:thin} for configuration $(\Etilde, \Einf)$.
 Since thin flows are unique (in terms of $\lambda$)~\CCL,
it follows that $\Delta \l / \Delta \theta = \lambda$.
(If thin flows were not label-unique, then a second distinct solution $(\tilde{y}, \tilde{\lambda})$ with $\tilde{\lambda} \neq \lambda$ would yield a distinct equilibrium trajectory then $\tilde{\l}(\theta) = \tilde\l(\thetass) + \tilde{\lambda}\Delta \theta$ receding from $\lstar$ at a linear rate. Uniqueness of thin flows is thus certainly a necessary fact, though far from sufficient, for our desired convergence claim.)
 
One part of our approach can be viewed as taking the proof of \cite{cominetti2015existence} on the uniqueness of thin flows, and making it ``more robust'' in certain ways. 
 To explain this, we begin by sketching the basic idea of this proof (modified slightly to suit our present purposes).
Suppose for a contradiction that $(\tilde{y}, \tilde{\lambda})$ is a second solution to \eqref{eq:thin} for configuration $(\Etilde, \Einf)$, with $\tilde{\lambda} \neq \lambda$.
Suppose that 
$S := \{ v \in V : \tilde{\lambda}_v/\lambda_v < 1\}$ is nonempty and proper (if it is not, meaning that $\tilde{\lambda}_v \geq \lambda_v$ for all $v$, swap the role of $\lambda$ and $\tilde{\lambda}$, after which $S$ must be proper, given that clearly $s \notin S$).
 One can then make the following key observations, as a consequence of the thin flow equations:
 \begin{itemize}
    \item 
        All arcs $e=vw$ entering $S$ have $\tilde{y}_e \leq y_e$, and the inequality is strict if $\tilde{y}_e > 0$. (Briefly: if $\tilde{y}_e > 0$, then the thin flow equations \eqref{eq:thin} require that $\tilde{\lambda}_w \geq \tilde{\lambda}_v$; 
    since $e$ enters $S$, it follows that $\lambda_w > \lambda_v$, and then the thin flow equations require that $y_e = \lambda_w \nu_e$ and $\tilde{y}_e \leq \tilde{\lambda}_w \nu_e < \lambda_w\nu_e = y_e$.)

    \item All arcs $e=vw$ leaving $S$ have $\tilde{y}_e \geq y_e$, and the inequality is strict if $y_e > 0$. The argument for this is similar to the above.
 \end{itemize}
 Since $y$ and $\tilde{y}$ are both $s$-$t$-flows of the same value, $y(\delta^+(S)) - y(\delta^-(S)) = \tilde{y}(\delta^+(S)) - \tilde{y}(\delta^-(S))$.
This yields an immediate contradiction if $\tilde{y}(\delta^+(S)) > 0$ or $y(\delta^-(S))>0$. 
A small further argument rules out the case that these crossing flows are both zero.
 
 We proceed with a similar cut-based argument in order to reach a contradiction if $\|\Delta \l - \lambda \Delta\theta\|$ is very large.
 In order to do this, we first demonstrate that \emph{some} of the thin flow conditions in \eqref{eq:thin} hold \emph{approximately}. 
Here we directly invoke properties of strict $\delta$-equilibria.
For instance, we are able to show the following (see \Cref{lem:similar_to_thin_flow}):
\begin{itemize}
    \item $\Delta x / \Delta \theta$ is approximately an $s$-$t$-flow of value $u_0$; the appropriate flow conservation constraints hold at each node, up to an $O(\delta)$ error.
    \item For $e=vw \in \Einf$, $|\Delta x_e - \nu_e \Delta \l_w| \leq \nu_e \delta$. 
    \item For $e=vw \in \Etilde \setminus \Einf$, we can show that $\Delta x_e \leq  \nu_e \Delta \l_w+ \nu_e \delta$.
        The thin flow equations imply the exact version of this (without the $\nu_e\delta$ error term), though this is a somewhat weak implication. In particular, we cannot directly show an approximate version of the statement that for $(\tilde y, \tilde{\lambda})$ a thin flow, and $e=vw \in \Etilde$ with $\tilde \lambda_w > \tilde \lambda_v$, $\tilde y_e = \nu_e \tilde{\lambda}_w$.
\end{itemize} 

We then define a cut $(S, V \setminus S)$ based on the ratios $\Delta \l_v/ \lambda_v$, including nodes whose ratio is below some threshold in $S$.
Our goal is then to show that $\Delta x(\delta^+(S)) / \Delta \theta$ is significantly larger than $y(\delta^+(S))$ and $\Delta x(\delta^-(S))/ \Delta \theta$ is significantly smaller than $y(\delta^-(S))$, a contradiction to the fact that $\Delta x /\Delta \theta$ is approximately an $s$-$t$-flow of the same value as $y$.
In order to obtain the desired contradiction, we need to use the above properties, and also some further conclusions that can be drawn from induction.
Significant technical complications arise due to the approximate nature of the information we have on $\l$.



\section{Proof of convergence of \texorpdfstring{$\delta$}{delta}-trajectories}\label{sec:proof}

In this section we prove the technical version of our main theorem, which we repeat below. 
Recall that $\Tglob$ denotes the time an equilibrium trajectory in $G$ starting from the empty network needs to reach steady state.

\technical*

Before proving the theorem, let us first note that our main \Cref{thm:main-full} is an immediately corollary.
\begin{proof}[Proof of \Cref{thm:main-full}]
Choose $\delta$ small enough such that $\epsilon_{\abs{E}} \leq \epsilon$ and consider the interval $[\theta_0, \theta]$ for any $\theta \in \Rplus$. Let $(E, \emptyset)$ be the valid configuration of the original network.
Since $T=\Tglob$ and $\Etilde \setminus \Einf =E$, both conditions of \Cref{thm:technical} are fulfilled.
Thus,
\[\norm{\l(\theta) - \lstar(\theta)} \leq \epsilon_{\abs{E}} \leq \epsilon.\]
\end{proof}

The remainder of this section constitutes the proof of \Cref{thm:technical}.
We will prove the theorem 
by induction on $j$, so
assume that the claim holds for all $j' < j$.
Fix a $\delta$-trajectory $\l$, and an interval $[\theta_0, \theta_1]$ such that conditions (i) and (ii) of \Cref{thm:technical} are fulfilled.
Let $\Omega$ and $\Iloc$ refer respectively to the set of valid labelings and the the steady-state set for the generalized subnetwork defined by $(\Etilde, \Einf)$.

We will define a point in time $\thetass \in [\theta_0, \theta_1]$, such that until that point in time the trajectory has a guaranteed distance to steady state described by some small parameter and at $\thetass$, the trajectory is reasonably close. 
We will first consider in Part I the interval $[\theta_0, \thetass]$, and also show that throughout $[\thetass, \theta_1]$, $\l$ remains close to $\Iloc$.
In Part II we show that $\l$ also remains close to the exact equilibrium trajectory after time $\thetass$.

\subsection{Some useful basic results}
We begin with some basic lemmas that will be useful in various parts of our proof. 
We defer the proofs to the appendix.

The first lemma concerns some basic geometry about the hyperplane arrangements that define the vector field $X(\cdot)$.
For any $e=vw \in E$, let $H_e$ denote the hyperplane determined by edge $e$, namely,
\[ H_e \coloneqq \{ l \in \R^V : l_w - l_v = \tau_e \}. \]
Also, define $H_F \coloneqq \bigcap_{e \in F} H_e$ for any $F \subseteq E$.

We say that a set of edges $F \subseteq E$ is \emph{compatible} if $H_F \neq \emptyset$; that is, all hyperplanes corresponding to edges in $F$ intersect in a nonempty common subspace. 

\begin{restatable}{lemma}{hyperplaneseparation}
\label{lem:hyperplaneseparation}
    There exist constants $\centerdist$ and $\centercompat$ (depending only on the instance) so that the following hold.
    \begin{enumerate} \renewcommand\labelenumi{(\roman{enumi})}\renewcommand\theenumi\labelenumi
        \item 
            For any compatible set $F \subseteq E$, 
            $\dist(\linit, H_F) \leq \centerdist \max_{e \in F} \dist(\linit, H_e)$ for all $\linit \in \R^V$. 
            \label{it:hyperplane_far}
        \item The set of hyperplanes that intersect $B_{\centercompat}(\linit)$ are compatible, for any $\linit \in \R^V$.
        \label{it:hyperplanes_compatible}
    \end{enumerate}
\end{restatable}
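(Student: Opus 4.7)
The plan is a finite-dimensional linear-algebra argument combined with a ``max over finitely many subsets of $E$'' step. I would start by recording the explicit formula for the distance to a hyperplane $H_e$ in the infinity norm: for $e = vw$ and any $\linit \in \R^V$,
\[ \dist(\linit, H_e) = \tfrac{1}{2}\abs{\linit_w - \linit_v - \tau_e}, \]
which follows because the optimal $\linit' \in H_e$ keeps all coordinates outside $\set{v,w}$ fixed and splits the residual $\linit_w - \linit_v - \tau_e$ symmetrically between the $v$- and $w$-coordinates. In particular, $\dist(\linit, H_e)$ is controlled by a single linear functional of $\linit$.

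For part (i), I would fix a compatible $F \subseteq E$ and introduce the linear map $L \colon \R^V \to \R^F$ given by $(L\linit)_e = \linit_w - \linit_v$ for $e = vw \in F$, together with $\tau_F \coloneqq (\tau_e)_{e \in F}$. Compatibility of $F$ means $\tau_F \in L(\R^V)$, so choosing any $\linit^* \in H_F$ gives $H_F = \linit^* + \ker L$. Then $\dist(\linit, H_F)$ is by definition the quotient infinity-norm of $\linit - \linit^*$ on $\R^V/\ker L$, while the formula above yields $\max_{e \in F}\dist(\linit, H_e) = \tfrac{1}{2}\norm{L(\linit - \linit^*)}_\infty$, which is another norm on the same finite-dimensional quotient (the induced map $\bar L$ on $\R^V/\ker L$ is injective). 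Equivalence of norms on a finite-dimensional space yields a constant $c_F$ with $\dist(\linit, H_F) \leq c_F \max_{e \in F}\dist(\linit, H_e)$; setting $\centerdist$ to be the maximum of $c_F$ over the finitely many compatible $F \subseteq E$ finishes this part.

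For part (ii), I would consider any incompatible $F \subseteq E$, so $\tau_F \notin L(\R^V)$. Since $L(\R^V)$ is a linear subspace of $\R^F$ and therefore closed, the quantity $d_F \coloneqq \dist_\infty(\tau_F, L(\R^V))$ is strictly positive, and the formula for $\dist(\cdot, H_e)$ rewrites this as $\inf_{\linit}\max_{e \in F}\dist(\linit, H_e) = d_F/2$. Thus every $\linit$ satisfies $\max_{e \in F}\dist(\linit, H_e) \geq d_F/2$, and picking $\centercompat$ strictly less than $\tfrac{1}{2}\min\set{d_F : F \subseteq E \text{ incompatible}}$ (or anything positive if no incompatible $F$ exists) suffices: if the set $F^*$ of hyperplanes meeting $B_{\centercompat}(\linit)$ were incompatible, one would have $d_{F^*}/2 \leq \max_{e \in F^*}\dist(\linit, H_e) \leq \centercompat < d_{F^*}/2$, a contradiction. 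The only non-routine points are observing that $L(\R^V)$ is closed in $\R^F$ (automatic in finite dimensions), so that $d_F > 0$, and that both $\centerdist$ and $\centercompat$ are universal because $E$ is finite; I do not expect any genuine obstacles beyond this.
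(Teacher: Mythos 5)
Your argument is correct, and it is a genuinely different (and in my view cleaner) route than the paper's. For part (i) the paper builds an auxiliary $\ell^1$-type norm out of the $2$-norm unit normals $n_e$ to the hyperplanes $H_e$, expresses the distances in that norm, and then invokes equivalence of norms. You instead start from the explicit closed-form $\dist(\linit, H_e) = \tfrac12\abs{\linit_w - \linit_v - \tau_e}$ (correct, and worth recording), package the constraints into the linear map $L\colon \R^V \to \R^F$, and observe that $\linit \mapsto \dist(\linit, H_F)$ and $\linit \mapsto \max_{e\in F}\dist(\linit, H_e)$ are both, after translating by $\linit^*\in H_F$, norms on the quotient $\R^V/\ker L$; norm equivalence then does all the work at once, and the max over the finitely many compatible $F$ gives $\centerdist$. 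This sidesteps the basis construction entirely. For part (ii) the paper defines $\sigma$ as a minimum of distances $\dist(H_e, H_F)$ over certain pairs and derives a contradiction that \emph{uses} part (i); your proof instead lower-bounds $\max_{e\in F}\dist(\linit, H_e)$ directly for each incompatible $F$ via $d_F = \dist_\infty(\tau_F, L(\R^V)) > 0$, and is therefore logically independent of (i) — a small but pleasant structural gain. I do not see any gaps: the formula for the $\ell^\infty$ distance to $H_e$ is verified by the splitting argument you sketch, $\bar L$ is injective on the quotient so $\tfrac12\norm{\bar L(\cdot)}_\infty$ is indeed a norm there, $L(\R^V)$ is closed (finite dimension), and the finiteness of $E$ makes both constants well defined.
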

\medskip

It is easy to see that for any valid configuration $(E', E^*)$, a solution $(x', \l')$ to the thin flow equations satisfies $\l'_v \leq \TrajLip$ for all $v \in V$, where $\TrajLip := \max\Bigl\{1, u_0/\min_{e \in E} \nu_e\Bigr\}$.
It follows that if $\l$ is an equilibrium trajectory, then $\l$ is $\kappa$-Lipschitz (with respect to the infinity norm).

It is not hard to construct examples of $\delta$-trajectories which are discontinuous. 
Nonetheless, we have the following generalization to $\delta$-trajectories of a weak form of Lipschitz continuity.
 \begin{lemma}\label{lem:deltalip}
     There exist constants $\TrajLipDelta$ and $\jump$ (depending only on the instance) so that the following holds.
     For any $\delta$-trajectory $\l$, and 
     any $\theta_1 < \theta_2$,
     \[ 
         \norm{\l(\theta_2) - \l(\theta_1)} \leq  \TrajLipDelta(\theta_2 - \theta_1) + \jump \delta.
     \]
 \end{lemma}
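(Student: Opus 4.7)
The plan is to build the weak Lipschitz bound in three steps: (i) use the strict $\delta$-equilibrium condition to control waiting times at each node, (ii) leverage this to get a two-sided bound on how fast each queue volume $z_e$ can change in time, and (iii) iterate along an earliest-arrival path via the Bellman recursion.

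For step (i), I would first observe that $\Wait_v(a) \leq \delta$ for every agent $a$ and every $v \in \P(a)$. Indeed, the actual arrival time $t_{a,v}$ of $a$ at $v$ is at least $\l_v(\entry{a})$, since $\l_v(\entry{a})$ is by definition the earliest possible arrival time at $v$ given the queues generated by $\stratprof$; combined with the strict $\delta$-equilibrium condition $d_v(a) = t_{a,v} + \Wait_v(a) \leq \l_v(\entry{a}) + \delta$, this yields $\Wait_v(a) \leq \delta$ (the same argument handles $v=s$ using $\l_s(\theta)=\theta$).

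For step (ii), I claim that there exist constants $C, C'$ depending only on the instance such that $|z_e(\xi_2) - z_e(\xi_1)| \leq C' |\xi_2 - \xi_1| + C\delta$ for all arcs $e$ and $\xi_1 < \xi_2$. Writing $z_e(\xi) = F_e^+(\xi) - F_e^-(\xi + \tau_e)$ and using that queues operate at capacity (hence $F_e^-$ is $\nu_e$-Lipschitz), the lower bound $z_e(\xi_2) - z_e(\xi_1) \geq -\nu_e(\xi_2 - \xi_1)$ is immediate. For the upper bound, any particle entering arc $e = vw$ in $[\xi_1, \xi_2]$ must have arrived at $v$ at some time in $[\xi_1 - \delta, \xi_2]$ by step (i). Since arrivals at any node $v$ occur at a total rate at most $u_0$ if $v = s$ (external inflow) or at most $\sum_{e' \in \delta^-(v)} \nu_{e'}$ otherwise (again by Lipschitzness of $F^-_{e'}$), setting $C := \max\{u_0,\, \max_v \sum_{e' \in \delta^-(v)} \nu_{e'}\}$ gives $F_e^+(\xi_2) - F_e^+(\xi_1) \leq C(\xi_2 - \xi_1 + \delta)$. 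Taking $C' := \max(C,\nu_{\max})$ completes this step.

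For step (iii), to bound $\l_w(\theta_2) - \l_w(\theta_1)$ from above, let $P = (s = v_0, v_1, \ldots, v_k = w)$ be an earliest-arrival path at entry time $\theta_1$, with arcs $e_i = v_{i-1}v_i$. Define $\tilde\l_0 := \theta_2$ and $\tilde\l_i := \tilde\l_{i-1} + \tau_{e_i} + z_{e_i}(\tilde\l_{i-1})/\nu_{e_i}$; by the Bellman equations, $\l_w(\theta_2) \leq \tilde\l_k$. Setting $\Delta_i := \tilde\l_i - \l_{v_i}(\theta_1)$ (so $\Delta_0 = \theta_2 - \theta_1$) and inserting step (ii) into the telescoping identity $\Delta_i = \Delta_{i-1} + (z_{e_i}(\tilde\l_{i-1}) - z_{e_i}(\l_{v_{i-1}}(\theta_1)))/\nu_{e_i}$ gives the recurrence $|\Delta_i| \leq |\Delta_{i-1}|(1 + C'/\nu_{e_i}) + C\delta/\nu_{e_i}$. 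Iterating at most $k \leq |V|-1$ times produces $|\Delta_k| \leq A|\theta_2 - \theta_1| + B\delta$ for network constants $A, B$. The reverse direction $\l_w(\theta_1) - \l_w(\theta_2) \leq A(\theta_2 - \theta_1) + B\delta$ follows symmetrically by using an earliest-arrival path at $\theta_2$ instead. Taking the maximum over $w$ yields the lemma with $\TrajLipDelta := A$ and $\jump := B$.

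The main obstacle is really step (ii): because $z_e$ is not monotone in time, one must separately control both growth and shrinkage, and the growth bound hinges delicately on step (i)\,---\,without the waiting-time control, arbitrarily large atoms of particles could enter an arc simultaneously, causing $F_e^+$ to have jumps and breaking any hope of a uniform bound on $|z_e(\xi_2) - z_e(\xi_1)|$.
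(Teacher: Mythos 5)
Your proof is correct, but it takes a genuinely different route from the paper's. The paper actually proves this lemma as an immediate corollary of a stronger result, Theorem~5.5 (\texttt{thm:epsilon\_lipschitz}), which establishes the same approximate-Lipschitz bound under the \emph{weaker} hypothesis that $\stratprof$ is merely an $\epsilon$-equilibrium (in the sense of arrival at $t$), not a strict $\delta$-equilibrium. Since every strict $\delta$-equilibrium is an $\epsilon$-equilibrium with $\epsilon = \delta$, the lemma follows. The paper's key technical ingredient is Lemma~5.3 (\texttt{lem:bounded\_overtaking}): in any $\epsilon$-equilibrium, the mass of particles that a fixed agent can overtake at a node is $O(\epsilon \capsum)$. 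This gives a bound on $\lq_{e_i}(\theta_2) - \lq_{e_i}(\theta_1)$ --- the change in queue \emph{waiting time as seen from entry time} --- which feeds directly into the same shortest-path Bellman recursion you use.

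Your proof instead leans on the \emph{full strength} of the strict $\delta$-equilibrium condition: you observe that it forces $\Wait_v(a) \le \delta$ for every agent at every node, and this gives you a two-sided control on $z_e$ as a function of \emph{wall-clock time} (bounding the inflow rate into each arc via the bounded arrival rate at each node), which you then convert into a bound on $\lq$ along the path. This is a clean and direct argument, and the bookkeeping in step (iii) is sound. The trade-off is exactly what you would expect: your argument does \emph{not} extend to plain $\epsilon$-equilibria, where waiting times at intermediate nodes can exceed $\epsilon$ if the agent can ``catch up'' by $t$, whereas the paper wants the $\epsilon$-equilibrium version for its own sake (it is used in the proof of Theorem~5.1 / \texttt{thm:epsilon\_equi\_is\_delta\_trajectory}). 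One small inaccuracy in your write-up: the ``reverse direction'' at the end is not quite symmetric (running the recursion with roles of $\theta_1$ and $\theta_2$ swapped changes the sign of the initial gap and does not give the claimed inequality directly), but this is harmless because monotonicity of the earliest-arrival labels already gives $\l_w(\theta_1) \le \l_w(\theta_2)$, so the lower bound direction is trivial.
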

 Later we prove this for $\epsilon$-equilibria; see \Cref{thm:epsilon_lipschitz}. Since every strict $\delta$-equilibrium is a $\delta$-equilibrium, this lemma follows immediately.

\medskip

If $\l$ is a $\delta$-trajectory, it need \emph{not} to be the case that $\l(\theta) \in \Omega$ for each $\theta$.
The reason is that it can be that an arc $e=vw$ has a queue, but there is no path in $E'_{\l(\theta)}$ from $w$ to $t$; particles at the back of the queue on $e$ at time $\l_v(\theta)$ cannot arrive at time $\l_t(\theta)$, but this is allowable. 
However, this is a minor technical issue, as $\l(\theta)$ will always be close to a point in $\Omega$.
\begin{restatable}{lemma}{almostvalid}
\label{lem:almost-valid}
    Let $\l$ be any $\delta$-trajectory.
    Then 
    $\dist(\l(\theta), \Omega) \leq |V|\delta$
    for any $\theta \in \Rplus$. 
\end{restatable}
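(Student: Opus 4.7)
The plan is to explicitly construct a valid labeling $\tilde\l \in \Omega$ within $|V|\delta$ of $\l(\theta)$.

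First I would argue that two of the three validity conditions already hold for $\l(\theta)$ itself. Since $\l$ arises from the Bellman equations~\eqref{eq:bellman}, every non-source $w$ has a predecessor arc $e^w = u_w w$ attaining the minimum, and this arc lies in $E'_{\l(\theta)}$; the collection $\{e^w\}_{w\neq s}$ is a spanning arborescence from $s$, giving condition~(i). For~(iii), any directed cycle in $E'_{\l(\theta)}$ would yield $0 \geq \sum \tau_e > 0$ by summing the defining inequalities, a contradiction. So $(E'_{\l(\theta)}, E^*_{\l(\theta)})$ can fail validity only in condition~(ii): there may exist $e = vw \in E^*_{\l(\theta)}$ such that $w$ does not reach $t$ in $E'_{\l(\theta)}$.

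Next, for each such problematic $e = vw \in E^*_{\l(\theta)}$, I would produce an ``almost-active'' continuation path from $w$ to $t$. The positive queue volume $z_e(\l_v(\theta))$ guarantees the existence of an agent $a$ using $e$ whose departure time from $v$ is close to $\l_v(\theta)$; combining the strict $\delta$-equilibrium condition with the approximate Lipschitz bound of \Cref{lem:deltalip} shows that $\entry{a}$ lies within $O(\delta)$ of $\theta$, hence $\l(\entry{a})$ within $O(\delta)$ of $\l(\theta)$. The remainder of $a$'s path $w = w_0, w_1, \dots, w_m = t$ then satisfies $\l_{w_i}(\theta) \geq \l_{w_{i-1}}(\theta) + \tau_{e_i} - O(\delta)$ for each consecutive arc $e_i = w_{i-1}w_i$, so each such arc is $O(\delta)$-close to lying in $E'_{\l(\theta)}$.

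I would then form an auxiliary subgraph $F \subseteq E$ consisting of the Bellman arborescence together with all such continuation arcs collected over $e \in E^*_{\l(\theta)}$. Summing the almost-active inequalities around any potential cycle in $F$ yields $\sum \tau_e \leq O(|V|\delta)$, contradicting $\tau_e > 0$ once $\delta$ is small enough, so $F$ is a DAG. Letting $d(v)$ denote the longest-path length from $s$ to $v$ in $F$ (at most $|V|-1$), I would set $\tilde\l_v := \l_v(\theta) + d(v)\delta$. This gives $\|\tilde\l - \l(\theta)\|_\infty \leq (|V|-1)\delta$, and forces every arc in $F$ into $E'_{\tilde\l}$: the Bellman arborescence still provides reachability from $s$, and each $e \in E^*_{\l(\theta)}$ inherits an $s$-$t$-path through its continuation in $F$.

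The main obstacle I anticipate is that the perturbation may alter $E'$ and $E^*$: an arc with $\l_w - \l_v$ very close to $\tau_e$ could leave $E'_{\tilde\l}$, or newly enter $E^*_{\tilde\l}$ and thereby demand that its head also reach $t$. I would address this by enlarging $F$ to include all such near-boundary arcs (any arc that newly satisfies $\tilde\l_w > \tilde\l_v + \tau_e$ must already have $\l_w \geq \l_v + \tau_e - O(|V|\delta)$ at time $\theta$, and the same witnessing-agent analysis applies to produce a continuation) so that the depth function respects them; conditions~(i) and~(iii) for $\tilde\l$ then carry over because the arborescence survives the perturbation and $\tau_e > 0$ still precludes cycles in $E'_{\tilde\l}$.
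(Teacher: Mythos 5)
Your approach shares the paper's key insight: to fix a potential failure of property~(ii), use an agent at the back of the queue on a resetting arc $e=vw$ to certify a $w$-$t$ continuation path with slack $O(\delta)$ per arc, then perturb the labels by $O(|V|\delta)$ so that this path becomes active. Perturbing \emph{upward}, however, introduces a genuine gap.

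Setting $\tilde\l_v := \l_v(\theta) + d(v)\delta$ pushes every arc of $F$ into $E^*_{\tilde\l}$, since $d$ increases strictly along arcs of $F$; in particular, tight Bellman arborescence arcs that carry no flow and have no queue become resetting. For such an arc $uw \in F$, property~(ii) now demands a $w$-$t$ path in $E'_{\tilde\l}$, but the witnessing-agent argument cannot supply one: it relies on a positive queue volume $z_e(\l_u(\theta)) > 0$, whereas a tight or near-boundary arc may carry no flow near time $\l_u(\theta)$ at all. Concretely, take a dead-end node $c$ reached only through a tight Bellman chain $s\to b\to c$ (with, say, $\tau_{bc}$ very large), used by no agent, whose only outgoing arc $ct$ is inactive by a $\Theta(1)$ margin; then after your perturbation $bc$ becomes resetting while $ct$ stays far from active, so $\tilde\l \notin \Omega$, and enlarging $F$ does not help since $ct$ is nowhere near the hyperplane and has no witnessing agent. (There is also an unresolved circularity: enlarging $F$ changes $d$ and hence $\tilde\l$, which can push yet more arcs across their hyperplanes, and you give no termination argument.) The paper avoids both problems by perturbing \emph{downward}, iteratively decreasing labels on the set $T$ of nodes reachable from the problematic set $Q$ along active arcs: arcs entering $T$ can only become less active, tight active arcs inside $T$ have both endpoints decreased in lockstep, and the slack of each $Q$-node strictly decreases --- bounded initially by $|V|\delta$ via the same witnessing-agent argument --- so the total decrease stays within $|V|\delta$.
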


\subsection{Part I: Reaching (near to) steady state and staying there}

With help of the small parameter $\r_2$ defined in the following lemma, we can give an exact definition of $\thetass$. Let $\thetass \in [\theta_0, \theta_1]$ be maximal such that $\l(\theta) \notin B_{\r_2}(\Iloc)$ for all $\theta \in [\theta_0, \thetass)$. We will show that $\l$ remains close to $\lstar$ on the interval $[\theta_0, \thetass]$.
The following lemma, applied with $[\theta_2, \theta_3] = [\theta_0, \thetass]$, will do most of the work; we will use it again when showing that we remain close to $\Iloc$ after time $\thetass$.

\begin{lemma} \label{lem:close_outside_of_steady_state_region}
For sufficiently small $\delta$, there exist two small parameters $r_1$ and $\r_2$ with $\r_1 \leq \r_2$ such that the following holds.
Fix any compact interval $[\theta_2, \theta_3]$ with $\theta_3 - \theta_2 \leq \Tglob + 1$ such that 
$\l(\theta) \notin B_{\r_1}(\Iloc)$ for all $\theta \in [\theta_2, \theta_3)$.
Let $\lstarvar$ be the equilibrium trajectory (in the generalized subnetwork defined by $(\Etilde, \Einf)$) starting from a feasible configuration closest to $\l(\theta_2)$.
Then
\[\norm{\l(\theta) - \lstarvar(\theta)} \leq \r_2 \quad \text{ for all } \theta \in [\theta_2, \theta_3).\]
\end{lemma}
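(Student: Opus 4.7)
The plan is to derive this lemma inside the induction on $j$ of Theorem \ref{thm:technical}, using the inductive hypothesis for smaller $j' < j$. The rough strategy is that being far from $\Iloc$ forces $\l$ to be far from some hyperplane $H_{e}$ with $e \in F := \Etilde \setminus \Einf$, so locally we can ``freeze'' that hyperplane (removing the arc from $\Etilde$ or promoting it into $\Einf$) and recurse.

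The first step is a geometric translation: for an appropriately chosen small parameter $\r = \Theta(\r_1)$, whenever $\dist(\l(\theta), \Iloc) \geq \r_1$ there is some $e \in F$ with $\dist(\l(\theta), H_e) \geq \r$. If $F$ is compatible, every point of $H_F$ is a steady-state label, since on $H_F$ the thin flow direction for $(\Etilde, \Einf)$ is $\lambda$; hence $H_F \subseteq \Iloc$, and Lemma \ref{lem:hyperplaneseparation}(i) gives $\max_{e \in F}\dist(\l(\theta), H_e) \geq \dist(\l(\theta), H_F)/\centerdist \geq \r_1/\centerdist$. If $F$ is incompatible, then $\max_{e \in F}\dist(\cdot, H_e)$ is bounded below on $\R^V$ by a network-dependent constant (a compactness argument), and the claim is automatic once $\r$ is small enough.

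Next, using the weak Lipschitz bound (Lemma \ref{lem:deltalip}), I would partition $[\theta_2, \theta_3)$ into $N$ consecutive pieces $[\alpha_k, \alpha_{k+1})$ on each of which a single hyperplane $H_{e_k}$ remains at distance at least $\r/2$ from $\l$. Each piece has length at least $(\r/2 - \jump\delta)/\TrajLipDelta$, so $N = O((\Tglob + 1)/\r)$. On the $k$-th piece, Lemma \ref{lem:almost-valid} lets us choose a valid label within $|V|\delta$ of $\l(\alpha_k)$, still at distance $\Omega(\r)$ from $H_{e_k}$; hence either $e_k \in E^*_{\l(\alpha_k)}$ or $e_k \notin E'_{\l(\alpha_k)}$, and we define $(\Etilde^{(k)}, \Einf^{(k)})$ by adding $e_k$ to $\Einf$ or removing it from $\Etilde$ accordingly. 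This subnetwork has $j-1$ free hyperplanes, so the inductive hypothesis of Theorem \ref{thm:technical}, applied on $[\alpha_k, \alpha_{k+1}]$ (of length at most $\Tglob + 1$), gives $\|\l(\theta) - \lstarind^{(k)}(\theta)\| \leq \epsilon_{j-1}$, where $\lstarind^{(k)}$ is the equilibrium trajectory of $(\Etilde^{(k)}, \Einf^{(k)})$ starting from a valid label closest to $\l(\alpha_k)$. Provided $\r/4 > \epsilon_{j-1} + |V|\delta$, the trajectory $\lstarind^{(k)}$ never crosses $H_{e_k}$, so it also coincides with the equilibrium trajectory of the original $(\Etilde, \Einf)$ from that same starting point.

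Finally, I would chain the pieces together via Theorem \ref{thm:continuity_of_Nash}. Let $\eta_k := \|\l(\alpha_k) - \lstarvar(\alpha_k)\|$, with $\eta_0 = O(\delta)$ by Lemma \ref{lem:almost-valid} and the definition of $\lstarvar$. At step $k$ the starting label of $\lstarind^{(k)}$ is within $\eta_k + O(\delta)$ of $\lstarvar(\alpha_k)$, so Theorem \ref{thm:continuity_of_Nash} applied to $(\Etilde, \Einf)$, restricted to a compact set of valid starting labels containing all the points of interest (and to the compact time window $[0, \Tglob+1]$), yields a continuity modulus $\omega$ with $\omega(x) \to 0$ as $x \to 0$ such that $\|\lstarind^{(k)}(\theta) - \lstarvar(\theta)\| \leq \omega(\eta_k + O(\delta))$ on $[\alpha_k, \alpha_{k+1}]$. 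Combined with the bound from the previous paragraph, this produces the recurrence $\eta_{k+1} \leq \epsilon_{j-1} + \omega(\eta_k + O(\delta))$; iterating at most $N$ times and setting $\r_2$ to the resulting bound finishes the proof. The main obstacle is that $N$ depends on $\r$, hence on $\r_1$, so the parameters must be chosen carefully: $\r_1$ has to go to $0$ slowly enough with $\delta$ that even after $N$ compositions of $\omega$ applied to $\epsilon_{j-1}$ and $O(\delta)$ the result still tends to $0$ with $\delta$, while fast enough that eventually $\r_1 \leq \r_2$, as required.
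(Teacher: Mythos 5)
Your proposal is correct and follows essentially the same strategy as the paper's proof: translate ``far from $\Iloc$'' into ``far from some hyperplane'' via \Cref{lem:hyperplaneseparation}, split the interval into periods each governed by one distant hyperplane, apply the inductive hypothesis of \Cref{thm:technical} in the reduced generalized subnetwork on each period, and chain the error bounds across periods via the continuity modulus from \Cref{thm:continuity_of_Nash}, finally choosing $\r_1 \to 0$ slowly enough that the $N$-fold composition of the continuity modulus remains a small parameter. The only cosmetic difference is in the geometric step: you split on whether $F = \Etilde \setminus \Einf$ is compatible (using a linear-algebra/compactness argument in the incompatible case), whereas the paper invokes \Cref{lem:hyperplaneseparation}(ii) to conclude that if all hyperplanes were within $\r_1/\centerdist$ of $\l(\xi)$ they would be compatible, then derives a contradiction; these are equivalent.
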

\begin{proof}
If $j=0$, then $\Etilde = \Einf$ and thus the vector field $X^{(\Etilde, \Einf)}$ is constant, with every vector in $\R^V$ being a valid labeling.
So $\Iloc = \R^V$, implying that the lemma holds trivially, with $\r_1 = \r_2 = 0$; we must have $\theta_2 = \theta_3$ in order to satisfy the conditions of the lemma, and $\lstarvar(\theta_2) = \l(\theta_2)$. 
So assume $j \geq 1$ in the remainder of the proof.

Let $\Hloc$ be the set of hyperplanes that corresponds to the arcs in $\Eloc \setminus \Einf$ and let $\centerdist$ and $\centercompat$ be  the two constants from \Cref{lem:hyperplaneseparation}.
\begin{claim} \label{claim:hyperplane_faraway}
    Assuming $\delta$ is sufficiently small, 
for every $\xi \in [\theta_2, \theta_3]$ there exists at least one hyperplane $H$ in $\Hloc$ with $\dist(\l(\xi), H) > \tfrac{\r_1}{\centerdist}$. 
\end{claim}
\begin{nestedproof}
    Assume $\delta$ is small enough that $\tfrac{\r_1}{\centerdist} \leq \centercompat$; this is possible, since $\centerdist$ and $\centercompat$ do not depend on $\delta$.
Suppose there exists a point in time $\xi \in [\theta_2, \theta_3]$ for which all hyperplanes in $\Hloc$ individually intersect $B_{\nicefrac{\r_1}{\centerdist}}(\l(\xi))$. In this case \Cref{lem:hyperplaneseparation}~\ref{it:hyperplanes_compatible} states that all hyperplanes in $\Hloc$ are compatible. In other words all these hyperplanes intersect in a nonempty common subspace. 
Observe that in this case this intersection $H_{\Eloc \setminus \Einf}$ is contained in $\Iloc$.
This follows since any point in the intersection of all hyperplanes is feasible for a valid configuration, i.e. $H_{\Eloc \setminus \Einf} \subseteq \Omega$. Thus, at any such point, we can start a trajectory. Such a trajectory is in steady state immediately, since it can move in any direction without ever hitting another hyperplane. Thus, $H_{\Eloc \setminus \Einf} \subseteq \Iloc$.
Moreover, by \Cref{lem:hyperplaneseparation}~\ref{it:hyperplane_far}  the distance of a point to the intersection of some hyperplanes can be upper bounded by using the maximal distance to one of the hyperplanes. For us this yields $\dist(\l(\xi), H_{\Eloc \setminus \Einf}) \leq \Gamma \max_{H \in \Hloc} \dist(\l(\xi), H) \leq r_1$. The last inequality follows by the assumption that all hyperplanes intersect $B_{\nicefrac{\r_1}{\centerdist}}(\ell(\xi))$.
But by the conditions of \Cref{lem:close_outside_of_steady_state_region} $\l(\xi) \notin B_{\r_1}(I) \supseteq B_{\r_1}(H_{\Eloc \setminus \Einf})$.
Thus, there does not exist a point in time $\xi$ such that all hyperplanes intersect  $B_{\nicefrac{\r_1}{\centerdist}}(\l(\xi))$. In other words, there is always a hyperplane $H \in \Hloc$ with $\dist(\l(\xi), H) > \tfrac{\r_1}{\centerdist}$.
\end{nestedproof}

We will later choose $\r_1$ large enough that $\tfrac{\r_1}{\centerdist}$ will be larger than $\epsilon_{j-1}$, which itself is much bigger than $2 \delta$. 
Hence, \Cref{claim:hyperplane_faraway} enables us to use induction by removing the distant hyperplane. However, it might be the case that this hyperplane becomes close or might even be touched or crossed by $\l$ later on. For that reason we split the interval into periods, such that in each period we are far from some particular hyperplane.

For defining these periods let us start with $\thetastart = \theta_2$.
Consider the arc $e = vw \in \Eloc \setminus \Einf$ for which its hyperplane $H_e \in \Hloc$ is furthest from the current point $\l(\thetastart)$. 
This distance is at least $\tfrac{\r_1}{\centerdist}$.
The period lasts until the time where the distance between $\l(\theta)$ and $H_e$ decreases to $\epsilon_{j-1}$ or until $\theta_3$. We denote the end of this period by $\thetaend$ and in the case of $\thetaend < \theta_3$, we start the next period from there.

We can apply induction on each period $[\thetastart, \thetaend]$ by setting $\Etildeind \coloneqq \Eloc \setminus \set{e}$ and $\Einfind = \Einf$ in the case that $\l$ is on the side of $H_e$ for which $e$ is inactive ($\l_w(\thetastart) < \l_v(\thetastart) + \tau_e$) or by setting $\Eloc_\textrm{ind} \coloneqq \Eloc$ and $\Einf_\textrm{ind} = \Einf  \cup \set{e}$ in the case that $\l$ is on the side of $H_e$ for which $e$ is queueing ($\l_w(\thetastart) > \l_v(\thetastart) + \tau_e$). (Note that $\ell$ remains on the same side of the hyperplane $H_e$ for the whole interval $[\thetastart, \thetaend]$.)
Either way
\begin{equation} \label{eq:induction_applied}
\norm{\l(\xi) - \lstarind(\xi)} \leq \epsilon_{j-1} \quad \text{ for all } \xi \in [\thetastart, \thetaend],
\end{equation}
where $\lstarind$ is the equilibrium trajectory in the network defined by $(\Etildeind, \Einfind)$ with the starting point $\lstarind(\thetastart)$ set to be a valid labeling closest to $\l(\thetastart)$.

Note as a crucial observation that even though $\lstarind$ lives in the generalized subnetwork defined by $(\Etildeind, \Einfind)$ it is identical to the equilibrium trajectory starting at the same point but in the original subnetwork (defined by $(\Eloc, \Einf)$) within the interval $[\thetastart, \thetaend)$. This follows from \eqref{eq:induction_applied} along with the fact that $d(\l(\xi), H_e) > \epsilon_{j-1}$ for all $\xi \in [\thetastart, \thetaend)$.

Next we bound the number of periods $k$ within $[\theta_2, \theta_3]$. In each period, except the last one, $\l$ moves at least a distance of $\tfrac{\r_1}{\centerdist} - \epsilon_{j-1}$.
By choosing $\r_1$ big enough, we can assume this to be bigger than $\tfrac{\r_1}{2\centerdist} + \jump \delta$, where $\jump$ is the offset-coefficient defined in \Cref{lem:deltalip}. Therefore \Cref{lem:deltalip} implies that the length of each period is at least 
 $\tfrac{\r_1}{2\centerdist\TrajLipDelta}$. Hence,
\[N_{\r_1} \coloneqq \left\lfloor\frac{(\Tglob + 1) 2\centerdist \TrajLipDelta}{\r_1}\right\rfloor + 1\]
 is an upper bound on $k$.

Finally, we combine all this to get a bound on the maximal distance between $\l(\xi)$ and $\lstarvar(\xi)$ for $\xi \in [\theta_2, \theta_3]$.
To do so we define an increasing sequence of small parameters $(\rsub_i)_{i \in \Znonneg}$ which bound the distance between $\l$ and $\lstarvar$ for all $\theta$ within the $i$-th period, as follows.

At time $\theta_2$, $\l$ and $\lstarvar$ are within a distance of $\delta \abs{V}$ due to \Cref{lem:almost-valid}. Therefore, we set $\rsub_0 \coloneqq \delta \abs{V}$.

Next, consider the $i$-th period $[\thetastart, \thetaend]$ and suppose $\norm{\l(\thetastart) - \lstarvar(\thetastart)} \leq \rsub_{i-1}$.
By \Cref{lem:almost-valid} (and the triangle inequality) it holds that $\norm{\lstarind(\thetastart) - \lstarvar(\thetastart)} \leq \rsub_{i-1} + \delta \abs{V}$. By \Cref{thm:continuity_of_Nash} an equilibrium trajectory depends continuously on its start value. Hence, there exists a small parameter $C_i$ such that
\[\norm{\lstarind(\xi) - \lstarvar(\xi)} \leq C_i \quad \text{ for all } \xi \in [\thetastart, \thetaend].\]
This together with \eqref{eq:induction_applied} and the triangle inequality implies
\[\norm{\l(\xi) - \lstarvar(\xi)} \leq C_i + \epsilon_{j-1} \eqqcolon \rsub_{i} \quad \text{ for all } \xi \in [\thetastart, \thetaend].\]

Note that $\rsub_i$ is defined for all $i \in \Znonneg$ independent of the upper bound on periods $N_{\r_1}$. 
In other words, for $\delta \to 0$ all $\rsub_i$ go to $0$ (independently of $\r_1$) and $N_{\r_1}$ goes to $\infty$. So we can choose $\r_1$ to go to $0$ slowly enough such that 
\begin{equation}
	\label{eq:r2}
\r_2 \coloneqq \max \{\rsub_{N_{\r_1}}, \r_1 \} \tag{r2}
\end{equation}
still goes to $0$ and therefore is a small parameter.
%
%
\end{proof}

%
We would like to apply this lemma with $[\theta_2,\theta_3] = [\theta_0, \thetass]$ and $\lstarvar =\lstar$ in order to deduce that $\l$ stays within distance $\r_2$ of $\lstar$.
However we need to ensure that the condition that $\thetass - \theta_0 \leq \Tglob + 1$ is satisfied.
Rather than showing this directly, we instead apply the lemma to the interval
 $[\theta_0, \thetaend]$ with $\thetaend = \min \set{\thetass, \theta_0 + T}$, where $T$ is the time required for $\lstar$ to reach steady state. 
 By condition (ii) of \Cref{thm:technical}, we have that $\min \set{\theta_1 - \theta_0, T} \leq \Tglob + 1$; so the interval $[\theta_0, \thetaend]$ does satisfy the condition of the lemma.
It remains to show that $\thetaend = \thetass$.

Suppose not, meaning that $\thetaend < \thetass$.
By the definition of $T$, $\lstar(\thetaend) \in \Iloc$, and $\norm{\l(\thetaend), \lstar(\thetaend)} \leq \r_2$ by the lemma.
But this implies $\dist(\l(\thetaend), \Iloc) \leq \r_2)$, giving us the contradiction that $\thetass \geq \thetaend$.

So at this point, we have by \Cref{lem:close_outside_of_steady_state_region} that 
\[\norm{\l(\theta) - \lstar(\theta)} \leq \r_2 \leq \epsilon_j \quad \text{ for all } \theta \in [\theta_0, \thetass].\]
Next, we show in \Cref{lem:staying_close_to_steady_state_region} that we will stay close to the steady-state set for the remaining interval $[\thetass, \theta_1]$.
 

\begin{lemma}[Staying close to $\Iloc$] \label{lem:staying_close_to_steady_state_region}
    There exists a small parameter $\r_3$ such that:
\[\l(\theta) \in B_{\r_3}(\Iloc) \quad \text{ for all } \theta \in [\thetass, \theta_1].\]
\end{lemma}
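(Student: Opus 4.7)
The plan is to argue by contradiction. For a small parameter $\r_3$ to be fixed at the end, suppose some $\thetaend \in (\thetass, \theta_1]$ satisfies $\dist(\l(\thetaend), \Iloc) > \r_3$. Since $\dist(\l(\thetass), \Iloc) \leq \r_2$ (as established in the paragraphs preceding this lemma), I would set
\[
\thetastart := \sup\{\theta \in [\thetass, \thetaend] : \dist(\l(\theta), \Iloc) \leq \r_2\},
\]
so that $\thetastart < \thetaend$ and $\dist(\l(\theta), \Iloc) > \r_2$ for every $\theta \in (\thetastart, \thetaend]$. By \Cref{lem:deltalip}, which limits left-limit jumps to $\jump\delta$, one also has $\dist(\l(\thetastart), \Iloc) \leq \r_2 + \jump\delta$.

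The next step is to apply \Cref{lem:close_outside_of_steady_state_region} to $[\thetastart, \thetaend]$, after possibly shrinking $\thetaend$ to $\min\{\thetaend, \thetastart + \Tglob + 1\}$ so the length bound is met (and nudging $\thetastart$ infinitesimally to the right if necessary to ensure $\l(\thetastart) \notin B_{\r_1}(\Iloc)$; since $\dist(\l(\theta), \Iloc) > \r_2 \geq \r_1$ on $(\thetastart, \thetaend]$ this costs nothing quantitatively). The conclusion is
\[
\|\l(\theta) - \lstarvar(\theta)\| \leq \r_2 \quad \text{for all } \theta \in [\thetastart, \thetaend],
\]
where $\lstarvar$ is the equilibrium trajectory in the generalized subnetwork $(\Etilde, \Einf)$ with $\lstarvar(\thetastart)$ being a valid labeling closest to $\l(\thetastart)$. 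By \Cref{lem:almost-valid} and the triangle inequality, $\dist(\lstarvar(\thetastart), \Iloc) \leq \r_2 + (\jump + |V|)\delta$, so by \Cref{thm:steady-state} the trajectory $\lstarvar$ reaches steady state at some time
\[
\hat\theta \leq \thetastart + \T\bigl(\r_2 + (\jump+|V|)\delta\bigr).
\]

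Finally, I would fix
\[
\r_3 := \r_2 + 2\jump\delta + 2\TrajLipDelta\,\T\bigl(\r_2 + (\jump+|V|)\delta\bigr),
\]
which is a small parameter. In the truncated case $\thetaend - \thetastart = \Tglob + 1$ already exceeds $\T(\r_2 + (\jump+|V|)\delta)$ for $\delta$ small. Otherwise, from $\dist(\l(\thetaend), \Iloc) > \r_3$ and the triangle inequality, $\|\l(\thetaend) - \l(\thetastart)\| > \r_3 - \r_2 - \jump\delta$, and \Cref{lem:deltalip} yields $\thetaend - \thetastart > (\r_3 - \r_2 - 2\jump\delta)/\TrajLipDelta = 2\T(\r_2 + (\jump+|V|)\delta)$. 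Either way $\hat\theta \in (\thetastart, \thetaend]$, so $\lstarvar(\hat\theta) \in \Iloc$ combined with $\|\l(\hat\theta) - \lstarvar(\hat\theta)\| \leq \r_2$ gives $\dist(\l(\hat\theta), \Iloc) \leq \r_2$, contradicting $\dist(\l(\hat\theta), \Iloc) > \r_2$.

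The main obstacle is the tension in the choice of $\r_3$: it must be large enough (in comparison with $\r_2$ and $\delta$) to force $\hat\theta$ into the interval $(\thetastart, \thetaend]$, yet still vanish as $\delta \to 0$. What makes this resolvable is the \emph{linear} dependence of the time-to-steady-state in \Cref{thm:steady-state} on the distance to $\Iloc$; a superlinear bound there would prevent $\r_3$ from being a small parameter.
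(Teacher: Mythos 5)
Your proposal is correct and follows essentially the same line of argument as the paper's proof: both find an interval on which $\l$ leaves $B_{\r_2}(\Iloc)$ and travels outward, apply \Cref{lem:close_outside_of_steady_state_region} to stay close to an equilibrium trajectory $\lstarvar$ started near $\l(\thetastart)$, use \Cref{lem:almost-valid} and \Cref{thm:steady-state} to show that $\lstarvar$ reaches $\Iloc$ within a short time $T^*$, and derive a contradiction because $\l$ would then have to be back within $\r_2$ of $\Iloc$ at that moment. The only differences are minor: the paper fixes the test interval to have length exactly $T^*$ rather than truncating at $\Tglob+1$, and uses the slightly tighter constant $\r_3 = \r_2 + T^*\TrajLipDelta + 2\jump\delta$ in place of your $\r_3$ with an extra factor of $2$.
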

\begin{proof}
Let $T^* = \T \cdot (\r_2 + \jump\delta + \abs{V}\delta)$ be the maximal time an equilibrium trajectory starting within distance $(\r_2 + \jump\delta + \abs{V}\delta)$ from $\Iloc$ needs to reach steady state; see \Cref{thm:steady-state}.
We choose 
\begin{equation}
\label{eq:r3}
\r_3 \coloneqq \r_2 + T^* \TrajLipDelta + 2\jump\delta, \tag{r3}
\end{equation}
where $\TrajLipDelta$ and $\jump$ are the constants from \Cref{lem:deltalip}.
Since $\r_2$ and hence $T^*$ are small parameters, $\r_3$ is too.

Suppose $\l$ leaves $B_{\r_3}(\Iloc)$ at any point in time within $[\thetass, \theta_1]$. Since $\l(\thetass) \in B_{\r_2 + \jump\delta}(\Iloc)$ there exists a time interval $[\thetastart, \thetaend]$ with $\l(\theta) \notin B_{\r_2}(\Iloc)$ for all $\theta \in (\thetastart, \thetaend]$ 
on which $\l$ covers at least a distance of $\r_3 - (\r_2 + \jump\delta) = T^* \TrajLipDelta + \jump\delta$ (as it moves from the boundary of $B_{\r_2 + \jump\delta}(\Iloc)$ to outside of $B_{\r_3}(\Iloc)$).
 
By \Cref{lem:deltalip} this interval has length at least $T^*$.
To obtain a contradiction consider the equilibrium trajectory $\lstarvar$ starting with $\lstarvar(\thetastart)$ as close as possible to $\l(\thetastart)$.
By \Cref{lem:almost-valid}, $\dist(\lstarvar(\thetastart), \l(\thetastart)) \leq \abs{V} \delta$. 
Hence, $\dist(\lstarvar(\thetastart), \Iloc) \leq \r_2 + \jump\delta + \abs{V} \delta$ which implies that $\lstarvar$ reaches steady state within time $T^*$.

But applying \Cref{lem:close_outside_of_steady_state_region} to the interval $[\thetastart, \thetastart + T^*]$ (note that by taking $\delta$ sufficiently small, this is shorter than $\Tglob + 1$) we obtain that
\[\norm{\l(\thetastart + T^*) - \lstarvar(\thetastart + T^*)} \leq \r_2.\]
This is a contradiction as $\thetastart + T^* \in [\thetastart,\thetaend]$ and thus
$\l(\thetastart + T^*) \notin B_{\r_2}(\Iloc)$ but $\lstarvar(\thetastart + T^*) \in \Iloc$.
\end{proof}

\subsection{Part II: Staying close to the equilibrium trajectory while close to steady state}
In this section we will prove the following lemma, which says that we stay close to the equilibrium trajectory $\lss$ starting from a nearby point in $\Iloc$.

\begin{lemma} \label{thm:close_to_Nash}
 For sufficiently small $\delta$ , there exists a small parameter $\reight$ such that
\[ \|\l(\theta) - \lss(\theta)\| \leq \reight \quad \text{ for all } \theta \in [\thetass, \theta_1],\]
where $\lss$ is an equilibrium trajectory starting from a point $\lss(\thetass) \in B_{\r_3}(\l(\thetass)) \cap \Iloc$.
\end{lemma}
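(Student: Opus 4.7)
The plan is to reduce the claim to bounding $\Delta\l - \lambda\,\Delta\theta$, where $\Delta\l := \l(\theta) - \l(\thetass)$, $\Delta\theta := \theta - \thetass$, and $\lambda$ is the steady-state direction of the generalized subnetwork $(\Etilde, \Einf)$. Since $\lss$ is an exact equilibrium trajectory starting in $\Iloc$, we have $\lss(\theta) - \lss(\thetass) = \lambda\,\Delta\theta$, so
\[ \|\l(\theta) - \lss(\theta)\| \leq \|\Delta\l - \lambda\,\Delta\theta\| + \|\l(\thetass) - \lss(\thetass)\| \leq \|\Delta\l - \lambda\,\Delta\theta\| + \r_3. \]
It therefore suffices to exhibit a small parameter $\rcont$ with $\|\Delta\l - \lambda\,\Delta\theta\| \leq \rcont$ on $[\thetass, \theta_1]$, and then set $\reight := \rcont + \r_3$. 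The crucial point is that $\rcont$ must be independent of $\Delta\theta$, so the argument needs to exhibit \emph{self-correction} rather than merely control a rate of deviation.

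To prepare the main argument, define $\Delta x_e$ as the measure of particles entering arc $e = vw$ with departure time from $v$ in $(\l_v(\thetass), \l_v(\theta)]$. I first establish an ``approximate thin flow'' property for the pair $(\Delta\l/\Delta\theta, \Delta x/\Delta\theta)$ in $(\Etilde, \Einf)$: flow conservation at each internal node holds up to an $O(\delta/\Delta\theta)$ error (only an $O(\delta)$-mass of particles can be in transit between entering and leaving a node at the two endpoints of the interval, by the strict $\delta$-equilibrium property combined with \Cref{lem:deltalip}); for $e \in \Einf$ the near-equality $|\Delta x_e/\Delta\theta - \nu_e\Delta\l_w/\Delta\theta| \leq \nu_e\delta/\Delta\theta$ follows from the queue dynamics~\eqref{eq:queues_operate_at_capacity} since the queue never empties; for $e \in \Etilde \setminus \Einf$ the same queue-at-capacity constraint yields the one-sided bound $\Delta x_e/\Delta\theta \leq \nu_e\Delta\l_w/\Delta\theta + \nu_e\delta/\Delta\theta$; and $\Delta x_e = 0$ for $e \notin \Etilde$ by condition~(i) of \Cref{thm:technical} together with the inductive hypothesis, since no particle on a $\delta$-optimal path uses such an arc.

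With this in place, I run a cut-based uniqueness argument in the spirit of \CCL. Let $(y, \lambda)$ be the exact thin flow for $(\Etilde, \Einf)$ and assume toward contradiction that $\|\Delta\l - \lambda\,\Delta\theta\| > \rcont$. After possibly swapping the roles of $\Delta\l/\Delta\theta$ and $\lambda$, let $S \subseteq V$ consist of those nodes with ratio $(\Delta\l_v/\Delta\theta)/\lambda_v$ below a carefully chosen threshold strictly less than $1$. Since $\Delta\l_s/\Delta\theta = 1 = \lambda_s$ we have $s \notin S$, and the assumed lower bound on $\|\Delta\l - \lambda\,\Delta\theta\|$ forces $S$ to be nonempty for a suitable threshold. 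The exact thin flow conditions on $(y, \lambda)$ then imply that for $e = vw \in \delta^-(S)$ with $y_e > 0$ we have $y_e = \nu_e\lambda_w$, which combined with the approximate upper bound on $\Delta x_e$ yields a strictly positive slack $y_e - \Delta x_e/\Delta\theta$; symmetrically, for $e \in \delta^+(S)$ with $\Delta x_e > 0$ one obtains a strictly positive slack $\Delta x_e/\Delta\theta - y_e$. Summing these slacks across the cut contradicts the approximate $s$-$t$-flow conservation of $\Delta x/\Delta\theta$ whenever $\rcont$ dominates the $O(\delta/\Delta\theta)$ error terms; the regime of very small $\Delta\theta$ is handled directly by \Cref{lem:deltalip}.

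The hard part will be quantifying the slack in the cut inequalities in terms of the \emph{absolute} deviation $\|\Delta\l - \lambda\,\Delta\theta\|$ rather than the rate deviation $\|\Delta\l/\Delta\theta - \lambda\|$; this is precisely what allows $\rcont$ to be independent of $\Delta\theta$. A secondary subtlety is the treatment of ``borderline'' cut arcs --- those with $y_e = 0$ or with $\lambda_w = \lambda_v$ and $y_e > 0$ --- which provide no strictness on their own. For these, \Cref{lem:staying_close_to_steady_state_region} (that $\l(\theta)$ remains within $\r_3$ of $\Iloc$ on $[\thetass, \theta_1]$) together with the structural description of $\Iloc$ in terms of the hyperplanes of $\Etilde \setminus \Einf$ ensures that their contribution to the cut discrepancy is itself a small parameter and can be absorbed into $\rcont$.
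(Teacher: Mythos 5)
Your overall strategy --- reduce to bounding $\|\Delta\l - \lambda\,\Delta\theta\|$, establish approximate thin-flow conditions for $(\Delta\l, \Delta x)$, and run a \CCL-style cut argument over a set $S$ defined by the ratios $\Delta\l_v/\lambda_v$ --- is indeed the spine of the paper's proof (it is the content of \Cref{lem:similar_to_thin_flow} and \Cref{lem:quotients_close}). But there are two substantial gaps, and the place where you flag ``the hard part'' is exactly where your sketch does not actually close.

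First, the proof in the paper splits into a case analysis on whether $\l$ is \emph{central near} $\thetass$, i.e., whether $|\l_w(\theta) - \l_v(\theta) - \tau_e| \leq \r_5$ for all $e = vw \in \Etilde\setminus\Einf$ on a short window $[\thetass,\thetass+\r_7]$. When $\l$ is \emph{not} central, the cut argument is not used at all; instead one picks the hyperplane that $\l$ has drifted away from, drops it, and appeals to the inductive hypothesis of \Cref{thm:technical} on the subnetwork $(\Etildeind,\Einfind)$, using \Cref{lem:Ebigger_and_Esmaller_in_ss} to check that the steady-state direction pushes $\l$ further from that hyperplane, so the induction applies over the full interval (\Cref{lem:notcentral}). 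Your sketch assumes the cut argument always applies; it does not, and the non-central case genuinely requires induction rather than the $\delta$-equilibrium properties.

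Second, and more critically, to get \Cref{claim:arcs} --- no arc of $E^= \cup E^>$ leaves $S$ and none of $E^= \cup E^<$ enters $S$ --- the paper needs \Cref{lem:hyperplane_dist_similar_to_thin_flow}, which bounds $\Delta\l_w - \Delta\l_v$ directly against $(\lambda_w - \lambda_v)(\Delta\theta - \r_7)$ up to $O(\r_5)$, and whose proof again invokes the inductive hypothesis (promoting an arc with a growing queue to $\Einf$, or dropping an arc that becomes inactive). Your sketch does not have this lemma. Your ``borderline arcs'' remark (absorbing $E^=$-type arcs using $\l$'s proximity to $\Iloc$ and the structure of $\Iloc$) does not cover the dangerous cases --- an $E^>$-arc leaving $S$ or an $E^<$-arc entering $S$ --- and the approximate thin-flow bounds alone cannot rule these out, because condition \ref{it:Delta_x_upper_bound} of \Cref{lem:similar_to_thin_flow} is only an upper bound for non-$\Einf$ arcs (there is no approximate analogue of $y_e = \nu_e\lambda_w$ for queueing but not $\Einf$-arcs), and conditions for non-queueing arcs give no lower bound at all. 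Relatedly, ``after possibly swapping the roles of $\Delta\l/\Delta\theta$ and $\lambda$'' is not available here: $\Delta\l/\Delta\theta$ is not an exact thin flow, so you cannot symmetrize the way \CCL{} does, and the paper avoids this by anchoring $s$ ($\Delta\l_s/\lambda_s = \Delta\theta$ exactly) and splitting by a threshold rather than by sign. Finally, working with $O(\delta/\Delta\theta)$ rate errors is a red flag; the paper's bookkeeping keeps everything in absolute ($\Delta$-) form precisely so the errors stay $O(\delta)$ regardless of $\Delta\theta$.
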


With this lemma in hand, completing the proof of \Cref{thm:technical} will be very easy: $\l$, $\lstar$ and $\lss$ are close to each other at time $\thetass$, and by continuity of dynamic equilibria (\Cref{thm:continuity_of_Nash}), $\lstar$ must stay close to $\lss$, and hence $\l$.
We start with various definitions and auxiliary lemmas that we will need in the proof of \Cref{thm:close_to_Nash}.  

Since $\lss$ starts in the steady-state set, $\lss(\theta) = \lss(\thetass) + \lambda(\theta - \thetass)$ for all $\theta \geq \thetass$, with $\lambda$ denoting the steady-state direction.
Let $y$ be a corresponding $s$-$t$-flow so that $(y, \lambda)$ is a solution to the thin flow equations \eqref{eq:thin} for the configuration $(\Eloc, \Einf)$.

For any arc $e=vw \in E$ and all points in time $\theta$ we define
\[\lx_e(\theta) \coloneqq \mu(\{a : e \in \P(a) \text{ and } \d_v(a) \leq \l_v(\theta) \}). \]
This means $\lx_e(\theta)$ denotes the cumulative flow that entered arc $e$ until time $\l_v(\theta)$.
The focus is on the behavior of $\l$ after time $\thetass$. For that reason we define
\[\Delta \theta \coloneqq \theta - \thetass, \qquad \Delta \l(\theta) \coloneqq \l(\theta) - \l(\thetass), \quad \text{ and } \quad \Delta \lx(\theta) \coloneqq \lx(\theta) - \lx(\thetass)\]
for any given $\theta \in [\thetass, \theta_1]$.
For the sake of clarity we write $\Delta \l$, $\Delta \lx$ instead of $\Delta \l(\theta)$, $\Delta \lx(\theta)$ whenever the choice of $\theta$ is unambiguous.

The following crucial claim says that $(\Delta \lx / \Delta \theta, \Delta \l / \Delta \theta)$ \emph{approximately} satisfies \emph{some} of the thin flow equations \eqref{eq:thin} for configuration $(\Etilde, \Einf)$ which have $(y,\lambda)$ as their exact solution. If $(\Delta \lx / \Delta \theta, \Delta \l / \Delta \theta)$ satisfied all the thin flow equations exactly, we would already have (using label-uniqueness of the thin flow equations) that $\Delta \l =  \lambda \Delta \theta$, meaning that the trajectories $\l(\theta) $ and $ \lss(\theta)$ move in parallel from the start point on, i.e. their distance can be bounded by $\r_3$.  

\begin{lemma}
\label{lem:similar_to_thin_flow}
Fix $\theta \in [\thetass, \theta_1]$. It holds that:
\begin{enumerate} \renewcommand\labelenumi{(\roman{enumi})}\renewcommand\theenumi\labelenumi
    \item For $e =vw \in \Eloc$, $\Delta \lx_e \leq \nu_e \Delta \l_w + \nu_e \delta$. \label{it:Delta_x_upper_bound}
    \item For $e =vw \in \Einf$, $\abs{\Delta \lx_e - \nu_e \Delta \l_w} \leq \nu_e \delta$. \label{it:Delta_x_lower_bound}
    \item For $e \notin \Eloc$, $\Delta \lx_e = 0$. \label{it:Delta_x_inactive}
    \item $\Delta \lx$ is approximately an $s$-$t$-flow of value $u_0 \cdot \Delta \theta$: \label{it:Delta_x_flow}
        \[\abs{\sum_{e \in \delta^-(v)} \Delta \lx_e - \sum_{e \in \delta^+(v)} \Delta \lx_e  + u_0\Delta \theta \bigl(\ind_s(v) - \ind_t(v)\bigr)} \leq 2 \delta \nu_\Sigma \quad \text{ for all } v \in V.\] 
    Here, $\ind_w(v) = 1$ if $v = w$ and $0$ otherwise, and $\nu_\Sigma \coloneqq \sum_{e \in E} \nu_e + u_0$.
\end{enumerate}
\end{lemma}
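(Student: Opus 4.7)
The plan is to prove each part directly from the strict $\delta$-equilibrium inequalities $\l_v(\entry{a}) \leq d_v(a) \leq \l_v(\entry{a}) + \delta$ at each node of each agent's path, together with condition~(i) of \Cref{thm:technical} and the FIFO queue dynamics; I take for granted the monotonicity of $\l$ in the entry time. For part~(iii), condition~(i) applied to the perturbation $l = \l(\theta')$ with $l_w$ shifted by $+2\delta$ and $l_v$ shifted by $-2\delta$ forces $\l_w(\theta') - \l_v(\theta') < \tau_e - 4\delta$ for every $\theta' \in [\theta_0, \theta_1]$ whenever $e = vw \notin \Etilde$. On the other hand, any agent $a$ using $e$ satisfies $d_w(a) \geq d_v(a) + \tau_e \geq \l_v(\entry{a}) + \tau_e$ and $d_w(a) \leq \l_w(\entry{a}) + \delta$, which gives $\l_w(\entry{a}) - \l_v(\entry{a}) \geq \tau_e - \delta$, a contradiction for $\delta$ sufficiently small; hence $\Delta \lx_e = 0$.

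For parts~(i) and~(ii), I track the exit time $d^+_e(a) := d_v(a) + \tau_e + z_e(d_v(a))/\nu_e$ of each agent from arc $e = vw$. An agent contributes to $\Delta \lx_e$ iff $d_v(a) \in (\l_v(\thetass), \l_v(\theta)]$; FIFO monotonicity of exit times in entry times, together with the Bellman inequality $\l_w(\thetass) \leq \l_v(\thetass) + \tau_e + z_e(\l_v(\thetass))/\nu_e$, shows every such agent exits no earlier than $\l_w(\thetass)$, while strict $\delta$-equilibrium at $w$ gives $d^+_e(a) \leq d_w(a) \leq \l_w(\theta) + \delta$ (using monotonicity of $\l$ to infer $\entry{a} \leq \theta$ from $d_v(a) \leq \l_v(\theta)$). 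Since the outflow rate on $e$ is capped at $\nu_e$, the mass exiting in this window is at most $\nu_e(\Delta \l_w + \delta)$, proving~(i). For~(ii), condition~(i) combined with $e \in \Einf$ forces $\l_w(\theta') > \l_v(\theta') + \tau_e$ with a buffer of $4\delta$ on all of $[\theta_0, \theta_1]$; then the Bellman equation keeps $z_e$ strictly positive on the relevant range, so $F^-_e$ grows at the exact rate $\nu_e$ over $[\l_w(\thetass), \l_w(\theta)]$. The outflow mass $\nu_e \Delta \l_w$ agrees with $\Delta \lx_e$ up to a $\nu_e \delta$-sized boundary term, yielding the two-sided bound.

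Part~(iv) reduces to direct accounting. For interior $v$ and each agent $a$ whose path passes through $v$ via incoming arc $e' = uv$ and outgoing arc $e = vw$, $a$ contributes $+1$ to $\Delta \lx_{e'}$ iff $d_u(a) \in (\l_u(\thetass), \l_u(\theta)]$ and $+1$ to $\Delta \lx_e$ iff $d_v(a) \in (\l_v(\thetass), \l_v(\theta)]$. Both conditions are equivalent, up to a $\delta$-sized slack at each endpoint (since $d_u(a) - \l_u(\entry{a})$ and $d_v(a) - \l_v(\entry{a})$ both lie in $[0,\delta]$), to $\entry{a} \in (\thetass, \theta]$. The contributions cancel outside $\delta$-wide neighborhoods of $\thetass$ and $\theta$, and summing the capacity-bounded flow across all incident arcs yields an error of at most $2\delta \nu_\Sigma$. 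At $v = s$ and $v = t$ an analogous argument handles the $u_0 \Delta \theta$ term, since $d_s(a) \in [\entry{a}, \entry{a} + \delta]$ implies the mass waiting at the source at any instant is at most $u_0 \delta$. The trickiest step is the lower bound $d^+_e(a) \geq \l_w(\thetass)$ in parts~(i) and~(ii), which requires converting a queue-volume inequality into a label inequality via the Bellman equation and, for~(ii), additionally invoking condition~(i) to certify $z_e > 0$ throughout; everything else is bookkeeping of $\delta$-scale boundary errors.
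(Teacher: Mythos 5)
Your overall plan mirrors the paper's: use strict $\delta$-equilibrium inequalities, Bellman, FIFO monotonicity, and capacity bounds, and track $\delta$-sized boundary errors. Parts (i), (iii), and (iv) are correct in spirit (modulo routine details like taking $\max\{\entry{a},\thetass\}$ in (iii) when $\entry{a}<\theta_0$, and atoms at endpoints).

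There is, however, a genuine gap in your argument for part (ii), and you have misidentified where the difficulty lies. You flag the ``trickiest step'' as the lower bound $d^+_e(a) \geq \l_w(\thetass)$, but that direction is the easy one (it is just the Bellman inequality plus FIFO, exactly as in part (i)). The real difficulty in (ii) is the \emph{lower} bound $\Delta x_e \geq \nu_e\Delta\l_w - \nu_e\delta$, which you dismiss as a ``$\nu_e\delta$-sized boundary term''. With the queue always busy, the mass leaving $e$ during $(\l_w(\thetass),\l_w(\theta)]$ is exactly $\nu_e\Delta\l_w$; but some of that mass entered $e$ \emph{before} $\l_v(\thetass)$ and hence is not counted in $\Delta x_e$. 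These boundary agents exit in the window $(\l_w(\thetass),\, \l_v(\thetass)+\tau_e+q_e(\thetass)]$, whose length is the ``gap'' $g(\thetass) := \l_v(\thetass)+\tau_e+q_e(\thetass)-\l_w(\thetass) \geq 0$; $e$ need not lie on a shortest $s$-$w$-path, so this gap has no a priori bound. The missing idea is to show $g(\thetass)\leq\delta$: pick an agent $a$ essentially at the back of the queue on $e$ at time $\l_v(\thetass)$ (such an agent exists because condition (i) of \Cref{thm:technical} forces $q_e(\thetass) > 2\delta$), observe $\entry{a}\leq\thetass$, and use the strict $\delta$-equilibrium at $w$ to get $d_w(a)\leq\l_w(\thetass)+\delta$, while FIFO forces $d_w(a)$ to be nearly $\l_v(\thetass)+\tau_e+q_e(\thetass)$. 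A limiting argument then yields $g(\thetass)\leq\delta$. Without this step the claimed two-sided bound does not follow.
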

It is helpful to compare with \eqref{eq:thin} (for the configuration $(\Eloc, \Einf)$). 
\ref{it:Delta_x_inactive} and \ref{it:Delta_x_flow} together say that $\Delta x / \Delta \theta$ is approximately an $s$-$t$-flow of the correct value supported on $\Eloc$.
Consider an arc $e=vw \in \Einf$; then the thin flow equations say that $\lambda_w = y_e / \nu_e$, and \ref{it:Delta_x_lower_bound} is an approximate version of this statement.
For an arc $e=vw \in \Eloc \setminus \Einf$, the thin flow equations say that $\lambda_w = \max\{ \lambda_v, y_e/\nu_e\}$. 
This we do not have a full replacement for, but \ref{it:Delta_x_upper_bound} is 
an approximate version of the weaker implication that $y_e \leq \nu_e \lambda_w$.
\begin{proof} First, recall that by definition of shortest path labels it holds in general that
\[ \l_w(\xi)  \leq \l_v(\xi) + \tau_e + \lq_e(\xi) \quad \text{ for all } \xi \in \Rplus.\]
We show each statement individually.
\begin{enumerate}\renewcommand\labelenumi{(\roman{enumi})}\renewcommand\theenumi\labelenumi
\item 
We now define a function $\dx_e$ which maps any agent $a$ for which $e \in \P(a)$ to the mass of particles that traverse $e$ ahead of $a$.
Here, we need to take care of tiebreaking if there is a mass departure from $v$.
We can write
\[ \dx_e(a) \coloneqq \mu(\{ a' : e \in \P(a'),  \d_v(a') \leq \d_v(a), \text{ and either } \d_v(a') < \d_v(a) \text{ or } \entry{a'} \leq \entry{a} \}). \]
The tiebreaking rule is thus that agents with the same departure time from $v$ as $a$ are ahead if their entrance time into the network is smaller.
Remember that $\dq_e(a)$ is defined as the waiting time experienced by $a$ on $e$,  see \eqref{eq:queue-waiting}.

Now we want to, morally speaking, choose the ``last agent'' that entered arc $e$ by time $\l_v(\theta)$.
There does not need to be such a last agent, so we employ a limiting argument. 
Let 
\[A \coloneqq \set{a' \in \agents: \d_v(a') \leq \l_v(\theta), e \in \P(a')}, \]
Note that if $A = \emptyset$, no agents have entered the arc, and so $\Delta \lx(\theta) = 0$, which shows \ref{it:Delta_x_upper_bound} immediately, since $\Delta \l_w \geq 0$. So suppose $A \neq \emptyset$.

For any $\epsilon > 0$, we can choose an agent $\aepsilon \in A$ so that 
\begin{multline*}
    \mu(\{ a' \in \agents : e \in \P(a'), \d_v(\aepsilon) < \d_v(a') < \l_v(\theta)\} \cup \\
    \{a' \in \agents: e \in \P(a'), \d_v(a') = \d_v(\aepsilon) \text{ and } \entry{a'} > \entry{\aepsilon} \}) < \epsilon.
\end{multline*}

For agent $\aepsilon$, we can write $\nu_e\dq_e(\aepsilon)$ as the queue volume at time $\l_v(\thetass)$, plus the cumulative mass of particles that have entered $e$ in front of $\aepsilon$ but after $\thetass$, minus the cumulative mass that have departed the queue on $e$ in the time $[\l_v(\thetass), \d_v(\aepsilon)]$.
Due to the capacity constraint on the arc, the mass that has left $e$ during $[\l_v(\thetass), \d_v(\aepsilon)]$ is not larger than $\nu_e(d_v(\aepsilon) - \l_v(\thetass))$. Hence,
\begin{equation}
    \nu_e\dq_e(\aepsilon) \geq \nu_e\lq_e(\thetass) + \dx_e(\aepsilon) -\lx_e(\thetass) - \nu_e(\d_v(\aepsilon) -  \l_v(\thetass)). \label{eq:dqlb}
\end{equation}

By the choice of agent $\aepsilon$ it holds that $\dx_e(\aepsilon) \geq \lx_e(\theta) - \epsilon$.
Then
\begin{align*}
    \nu_e \l_w(\theta) &\geq \nu_e \l_w(\entry{\aepsilon}) &&\text{($\entry{\aepsilon} \leq \theta$ by the def.\ of $A$)}\\
                &\geq \nu_e(\d_w(\aepsilon) - \delta) &&\text{($\l$ is a $\delta$-trajectory)}\\
                &\geq \nu_e(\d_v(\aepsilon) + \dq_e(\aepsilon) + \tau_e - \delta)&&\text{($e \in \P(\aepsilon)$)}\\
                &\geq \dx_e(\aepsilon) - \lx_e(\thetass)  + \nu_e(\l_v(\thetass) + \lq_e(\thetass) + \tau_e) - \nu_e\delta \;\;  &&\text{(\cref{eq:dqlb})} \\
                &\geq \lx_e(\theta) -  \epsilon - \lx_e(\thetass) + \nu_e\l_w(\thetass) - \nu_e\delta && \text{(def.\ of earliest arrival labels)}\\
                &\geq \Delta\lx_e(\theta) +\nu_e\l_w(\thetass) -\nu_e\delta - \epsilon.&&
\end{align*}
Taking $\epsilon \to 0$, we deduce the desired inequality.

\item Let $e = vw \in \Einf$, i.e., an arc on which $\l$ has a positive queue during the whole interval. This holds since, by assumption of \Cref{thm:technical}, the $\delta$-trajectory $\l$ keeps a minimum distance of more than $2\delta$ from each hyperplane corresponding to some $e \in \Einf$ and thus maintains a queue of length (waiting time) more than $2\delta$.

Moreover, we can again express the queue mass $\nu_e\lq_e(\theta)$ by the flow mass at time $\l_v(\thetass)$, plus the flow entering $e$ within $[\l_v(\thetass), \l_v(\theta)]$, less the flow leaving within this interval.
As $e \in \Einf$, the mass that has left during this interval is precisely $\nu_e(\l_v(\theta) - \l_v(\thetass))$. Formally, we obtain
\begin{equation} \label{eq:equb}
\begin{aligned}
    \nu_e\lq_e(\theta) &= \nu_e \lq_e(\thetass)  + \lx_e(\theta) -\lx_e(\thetass )- \nu_e(\l_v(\theta) - \l_v(\thetass)) \\
                &=\nu_e\lq_e(\thetass) + \Delta \lx_e(\theta) - \nu_e\l_v(\theta) + \nu_e\l_v(\thetass).
\end{aligned}
\end{equation}

At time $\thetass$ there is a positive queue on $e$ with waiting time of more than $2\delta$, i.e., $q_e(\thetass)> 2\delta$.
Hence the following agent set is non-empty (it contains the agents that leave the queue in the interval $[\l_v(\thetass), \l_v(\thetass)+\delta]$, since they reach the end of the arc in $[\l_v(\thetass)+\tau_e, \l_v(\thetass)+\tau_e+\delta]$ and additionally wait at most $\delta$ before leaving $w$.)
\[A \coloneqq \set{a' \in \agents \colon d_w(a') \leq \l_v(\thetass) + \tau_e + \lq_e(\thetass), e \in \P(a')}.\]
It holds that $\sup_{a' \in A} d_w(a') = \l_v(\thetass) + \tau_e + \lq_e(\thetass)$ as a probe particle entering at time $\l_v(\thetass)$ will catch up with agents on the back of the queue and they will leave the arc almost at the same time.
(One would expect that the $\sup$ could be a $\max$ and we can just pick the agent that is last in line at time $\l_v(\thetass)$. However, it could be that there is no such agent, as all agents that would have been there, took another path. In some sense the queue might be open at its end.)
Hence, for any small $\epsilon > 0$ we can pick an agent $\aepsilon \in A$ with 
\begin{equation} \label{eq:last_agent_in_queue}
d_w(a) \geq \l_v(\thetass) + \tau_e + \lq_e(\thetass) - \epsilon.
\end{equation}
Since $\l$ is a $\delta$-trajectory it holds that $d_w(a) \leq \l_w(\entry{a}) + \delta$.
Moreover, $a$ entered arc $e$ not after $\l_v(\thetass)$ (otherwise it would have entered $v$ strictly later than a probe particle but arrive at $w$ earlier, which is impossible.) Hence $\entry{a} \leq \thetass$.

Thus,
\begin{align*}
\nu_e\l_w(\theta) &\leq \nu_e\l_v(\theta) + \nu_e(\tau_e + \lq_e(\theta)) &&\text{(definition of earliest arrival labels)}\\
             &= \Delta \lx_e(\theta) + \nu_e(\l_v(\thetass) +\tau_e + \lq_e(\thetass)) &&\text{(\cref{eq:equb})}\\
             &\leq \Delta\lx_e(\theta) + \nu_e(d_w(a) + \epsilon) &&\text{(\cref{eq:last_agent_in_queue})}\\
             &\leq \Delta\lx_e(\theta) + \nu_e(\l_w(\entry{a}) + \delta + \epsilon) &&\text{($\l$ is a $\delta$-trajectory)}\\
             &\leq \Delta \lx_e(\theta) + \nu_e(\l_w(\thetass) + \delta + \epsilon. &&\text{($\entry{a} \leq \thetass)$}
\end{align*}
Again taking $\epsilon \to 0$, we obtain the desired lower bound on $\Delta \lx_e(\theta)$. Together with \ref{it:Delta_x_upper_bound} this proves \ref{it:Delta_x_lower_bound}.

\item Let $e$ be an arc that is not in $\Eloc$. It holds that $\l_w(\xi) < \l_v(\xi) + \tau_e - 2\delta$ for all $\xi \in [\thetass, \theta]$ as by the requirement of \Cref{thm:technical} the hyperplane corresponding to $e$ has distance of more than $2 \delta$.
Consider the following agent set
\[A \coloneqq \set{a' \in \agents \colon d_v(a') \in [\l_v(\thetass), \l_v(\theta)], e \in \P(a')}.\]
If this agent set is empty that means that $\Delta \lx_e(\theta) = 0$. So suppose there is an agent $a \in A$. Let $\theta' \coloneqq \max\set{\entry{a}, \thetass}$.
Clearly, $d_v(a) \geq \l_v(\theta')$. We obtain
\[d_w(a) \geq d_v(a) + \tau_e + \dq_e(a) \geq \l_v(\theta') + \tau_e > \l_w(\theta') + 2\delta \geq \l_w(\entry{a}) + 2\delta.\]
This is a contradiction to the $\delta$-trajectory property.

\item
    For a vector $f \in \Rplus^{\Eloc}$ and a node $v \in V$, let $\nabla_v f$ denote the net flow at $v$; 
    \[ \nabla_v f := \sum_{e \in \delta^-(v)} f_e - \sum_{e \in \delta^+(v)} f_e. \]
    We will show that 
    \[ 0 \leq \nabla_v x(\theta) + u_0\theta\cdot \ind_s(v) \leq 2\delta \capsum \qquad \text{for any } \theta \geq \thetass \text{ and } v \neq t; \]
    and also that 
\[ -  2\delta \capsum \leq \nabla_t x(\theta) - u_0 \theta\cdot \ind_t(v) \leq 0 \qquad \text{for any } \theta \geq \thetass. \]
    Applying this as well for $\theta=\thetass$ and subtracting yields the claimed bounds on the net flow of $\Delta x$ at any node.

    Let 
 \[ K_v := \bigl\{ a \in \agents_v : \entry{a} \leq \theta \text{ and } d_v(a) > \l_v(\theta) \bigr\}. \]
    We will relate $\nabla_v x(\theta)$ to $\mu(K_v)$ for each $v$.
    
    First consider $v \in V \setminus \{s,t\}$.
    Let $\agents_v$ be the set of agents whose path $\P(a)$ includes node $v$, and
    for $a \in \agents_v$, let $u(a)$ be the node in $\P(a)$ appearing just before $v$.
    Considering what it means for an agent to contribute to the measure defining $x_e(\theta)$ for an arc $e$ entering $v$ but not to an arc leaving $v$, and vice versa, we have that $\nabla_v x(\theta) = \mu(K_v^+) - \mu(K^-_v)$, where
    \begin{align*} 
        K^+_v &:= \bigl\{ a \in \agents_v : d_{u(a)}(a) \leq \l_{u(a)}(\theta) \text{ and } d_v(a) > \l_v(\theta) \bigr\}\\
        K^-_v &:= \bigl\{ a \in \agents_v : d_{u(a)}(a) > \l_{u(a)}(\theta) \text{ and } d_v(a) \leq \l_v(\theta) \bigr\}.
    \end{align*}
    First, we argue that $\mu(K^-_v) = 0$. 
    Fix any $u$ such that $uv$ is an arc, and consider $L_u := \{ a \in K^-_v : u(a) = u\}$.
    A hypothetical particle departing $u$ at time $\l_{u}(\theta)$ arrives at $v$ at time no earlier than $\l_v(\theta)$, so all agents in $L_u$ must have an arrival time as well as departure time at $v$ of $\l_v(\theta)$.
    But then $\mu(L_u) = 0$; a positive measure of agents cannot all ``catch up'' to this hypothetical particle on the arc $uv$.
    This holds for all choices of $u$, and so $\mu(K^-_v) = 0$.

    Each $a \in K^+_v$ satisfies 
    $d_{u(a)}(a) \leq \ell_{u(a)}(\theta)$, and hence $\entry{a} \leq \theta$;
    it follows that $K^+_v \subseteq K_v$.
    So $0 \leq \nabla_v x(\theta) \leq \mu(K_v)$.

    For $v=s$, we have 
    $\nabla_s x(\theta) = -\mu(\{ a \in \agents : d_s(a) \leq \theta \bigr\})$; all agents that depart the source by time $\l_s(\theta) = \theta)$ contribute to the net flow leaving $s$.
  Thus, we can rewrite as $\nabla_s x(\theta) = \mu(K_s) - u_0 \theta$.

    Finally, for $v=t$, we have 
    $\nabla_t x(\theta) = \mu(K'_t)$ where
    \[ K'_t := \{ a \in \agents: d_{u(a)}(a) \leq \l_{u(a)}(\theta)\}. \]
    Since $K'_t \subseteq \{ a \in \agents: \entry{a} \leq \theta\}$, we can rewrite this as
    \[ \nabla_t x(\theta) = u_0\theta - \mu(\{ a \in \agents: \entry{a} \leq \theta \text{ and } d_{u(a)}(a) > \l_{u(a)}(\theta)\}). \]
    Now just as in the argument above, the measure of agents $a \in \agents$ for which $d_{u(a)}(a) > \l_{u(a)}(\theta)$ but $d_t(a) \leq \l_t(\theta)$ is zero. 
So $u_0\theta - \mu(K_t) \leq \nabla_t x(\theta) \leq u_0\theta$.
    
    \medskip

    So all that remains is to bound $\mu(K_v)$ for each $v \in V$.
    For each $a \in K_v$, $d_v(a) \leq \l_v(\entry{a}) + \delta \leq \l_v(\theta) + \delta$, using that $\l$ is a $\delta$-equilibrium.
    So $d_v(a) \in [\l_v(\theta), \l_v(\theta) + \delta]$ for all $a \in K_v$.
    This implies that the arrival time at $v$ of all agents in $K_v$ lies in the interval $[\l_v(\theta) - \delta, \l_v(\theta) + \delta]$, again using that $\l$ is a $\delta$-equilibrium and hence that no agent can wait at $v$ for longer than $\delta$.
    The measure of agents that can arrive in this interval is at most $2\delta \capsum$, giving us the required bound $\mu(K_v) \leq 2\delta \capsum$.

\end{enumerate}
\end{proof}

In the following we compare $\Delta \l / \Delta \theta$ to the steady-state direction $\lambda$. 
%
%
We partition the edge set $\tilde E \setminus \Einf$ into the following three sets:
\begin{align*}
E^= &\coloneqq \set{vw \in \tilde E \setminus \Einf | \lambda_w = \lambda_v},\\
E^> &\coloneqq \set{vw \in \tilde E \setminus \Einf | \lambda_w > \lambda_v},\\ 
E^< &\coloneqq \set{vw \in \tilde E \setminus \Einf| \lambda_w < \lambda_v}.
\end{align*}

We will require the following lemma shown in \cite{OSV21} (where it was shown in a slightly stronger form).

\begin{lemma}[Lemma 4.3, \cite{OSV21}]
\label{lem:Ebigger_and_Esmaller_in_ss}
For any $l \in \Iloc$, 
\begin{enumerate}[(i)]
	\item every arc  in $E^>$ is active, i.e., $e \in E_{l}'$ for all $e \in E^>$; and 
	\item no arc in $E^<$ has a queue, i.e., $e \not\in   E^*_{l}$ for all $e \in E^<$.
\end{enumerate}
\end{lemma}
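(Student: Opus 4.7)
My plan is to prove both parts by a short contradiction argument, exploiting the defining feature of the steady-state set: for any $l \in \Iloc$, the equilibrium trajectory starting at $l$ is the straight line $l(\theta) = l + \lambda\theta$, and being ``in steady state'' means this trajectory does not cross any hyperplane $H_e$ with $e \in \Etilde \setminus \Einf$ at any positive time, since staying within a single phase is precisely the condition that no such crossings occur.

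For part (i), fix $e = vw \in E^>$ and suppose for contradiction that $e$ is not active at $l$, i.e.\ $l_w - l_v < \tau_e$. Along the trajectory one computes $l_w(\theta) - l_v(\theta) - \tau_e = (l_w - l_v - \tau_e) + (\lambda_w - \lambda_v)\theta$, which starts strictly negative and, since $\lambda_w > \lambda_v$ by membership in $E^>$, is strictly increasing in $\theta$. Hence it must equal zero at some $\theta^* > 0$; at that moment $l(\theta^*) \in H_e$, and for $\theta$ slightly larger the trajectory passes from the inactive side of $H_e$ to the active side. This hyperplane crossing at a positive time contradicts the steady-state property. Part (ii) is symmetric: for $e = vw \in E^<$ with $l_w - l_v > \tau_e$, the same quantity starts strictly positive and, since $\lambda_w - \lambda_v < 0$, strictly decreases, so it crosses zero at some $\theta^* > 0$, giving another forbidden hyperplane crossing.

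I do not anticipate any substantive obstacle. Each part reduces to the strict monotonicity of a single scalar linear function of $\theta$, forced by the sign of $\lambda_w - \lambda_v$ that defines membership in $E^>$ or $E^<$. The only minor corner case is the boundary equality $l_w = l_v + \tau_e$: for (i) the conclusion $l_w \geq l_v + \tau_e$ already holds, and the trajectory moves off $H_e$ into the active half-space for $\theta > 0$ without returning; for (ii) the conclusion $l_w \leq l_v + \tau_e$ similarly holds with equality, and the trajectory moves into the non-queueing half-space. Hence no contradiction is needed at these boundary points, and the argument is complete.
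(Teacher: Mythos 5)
The paper does not prove this lemma itself — it is imported as Lemma~4.3 from~\cite{OSV21} (where, the paper notes, it is shown in a slightly stronger form). So there is no in-paper proof to compare against, and I can only assess your argument on its own merits.

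Your argument is the natural one, and I believe it is correct given the paper's definitions. The steady-state property forces the trajectory from $l \in \Iloc$ to be $l + \lambda\theta$ for all $\theta \geq 0$; the scalar $l_w(\theta) - l_v(\theta) - \tau_e$ is affine with slope $\lambda_w - \lambda_v$, so $e \in E^>$ being inactive (resp.\ $e \in E^<$ queueing) at $l$ forces a sign change of this scalar at some $\theta^* > 0$, i.e.\ a crossing of $H_e$, contradicting the steady-state definition. Your boundary-case handling is also right: $l_w - l_v = \tau_e$ is consistent with ``active'' in (i) and with ``not queueing'' in (ii).

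One caveat worth naming. You justify the contradiction by saying a crossing ``contradicts the steady-state property,'' leaning on the paper's gloss that steady state means ``the trajectory does not hit any hyperplane.'' The paper's primary definition, however, is ``within a single phase,'' and a phase is a \emph{maximal piecewise-linear segment}, i.e.\ a maximal interval of constant direction. Under that stricter reading, you would additionally need to argue that crossing $H_e$ actually changes the thin-flow direction $X(\cdot)$ — that is, that the configuration $(E'_{l(\theta)}, E^*_{l(\theta)})$ changing at $\theta^*$ (with $e$ entering $E^*$) cannot leave $\lambda$ as the thin-flow direction. This is not addressed in your write-up. Since the paper explicitly equates the two readings via the ``i.e.,'' your argument is consistent with the paper's usage and I would not count this as an error; but if you wanted a fully self-contained proof independent of that gloss, you would need one more observation (e.g.\ that a resetting arc is flow-carrying, so $\lambda_w = x'_e/\nu_e$ with $x'_e > 0$, and then showing this new constraint is incompatible with $\lambda$ also being the thin-flow direction without $e$).
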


Let $h: \Rplus \to \Rplus$ be such that any two equilibrium trajectories starting within distance $r$ stay within distance $h(r)$ for all time, and with $h(r) \to 0$ as $r \to 0$. Such a function exists by \Cref{thm:continuity_of_Nash}.

Now we define some more small parameters, including the value of the small parameter $\reight$ claimed in \Cref{thm:close_to_Nash}:
\begin{align}
    \r_4 &\coloneqq h(\r_3 + |V|\delta), \tag{r4} \label{eq:r4}\\
\label{eq:r5}
    \r_5 &\coloneqq 2\r_4 + 2 \r_3+  5\epsilon_{j-1} + \modified{2}\jump \delta, \tag{r5}\\
	\label{eq:r6}
	\r_6 &\coloneqq \max\Set{\frac{4 \r_5}{\lambdamin} \cdot \abs{V}, \frac{3 \delta \nu_\Sigma}{\capmin\lambdamin} \abs{V}^2}, \tag{r6}\\
	\label{eq:r7}
	\r_7 &\coloneqq \r_6 + 5\tfrac{\r_5}{\lambdadiff}, \tag{r7}\\
        \rnew &\coloneqq \r_4 + \epsilon_{j-1} + 2\r_3 + (\TrajLip + \TrajLipDelta) \r_7 + \delta \jump, \tag{r8} \label{eq:rnew}\\
	\label{eq:r8}
        \reight &\coloneqq \max\bigl\{h(\rnew + \delta|V|) + \epsilon_{j-1}, \r_3 + \lambdamax \r_6\bigr\}.\tag{r9}
\end{align}
Here, $\lambdamin \coloneqq \min_{v \in V} \lambda_v$, $\lambdamax \coloneqq \max_{v \in V} \lambda_v$, $\lambdadiff \coloneqq \min \set{\abs{\lambda_v - \lambda_w} \colon v,w \in V \text{ with } \lambda_v \neq \lambda_w}$, and $\capmin \coloneqq \min (u_0, \min_{e \in E}\nu_e)$.  Note that in general for thin flows all labels $\lambda_v$ are positive (see \cite[Lemma 4.2]{OSV21}) and therefore $\lambdamin > 0$.

\begin{definition}
    We say that the trajectory $\l$ is \emphdef{central near $\thetass$} if
\[\abs{\l_w(\theta)-\l_v(\theta)-\tau_e} \leq \r_5  \quad \text{ for all } e = vw \in \Etilde \setminus \Einf \text{ and for all } \theta \in [\thetass, \thetass+r_7].\] 
\end{definition}


This essentially says that within some small initial interval starting at $\thetass$, the trajectory $\l$ remains close to all hyperplanes corresponding to $\Etilde \setminus \Einf$. 
We will proceed differently depending on whether $\l$ is central near $\thetass$.

If $\l$ is not central near $\thetass$, we will rely on induction. 
We are, at some moment very close to $\thetass$, somewhat far from some hyperplane. 
We will argue that the steady-state direction in the smaller instance obtained by dropping this hyperplane is still $\lambda$, and that this steady-state direction takes us further and further from this hyperplane, or keeps the same distance.
Induction can then be applied straightforwardly.

If $\l$ is central near $\thetass$, we have to do much more work; this is the key case where we cannot simply resort to induction.
(Note that in the base case $j=0$ where there are no hyperplanes, $\l$ is vacuously central near $\thetass$.)
\nnote{More explanation; what do we win by knowing that we are central near $\thetass$? But don't get into the robust uniqueness argument yet, leave that description until later. Update: downgrading from a TODO, not crucial for the arxiv version.}
\laura{Intuitively, I have no idea. Technically, we use this fact only at one place, namely we use that queues at the start are all small/ arcs are only a bit inactive. this allows to give some upper bound on the queue length at time $\theta$ / at how much arcs are inactive (and in the robustness argument we use upper bound to deduce that the arcs entering/ leaving this set $S$ have a certain type) }

The following simple lemma just says that $\l$ and $\lss$ cannot get too far apart too quickly.
\begin{lemma}
	\label{lem:close_for_a_bit}
For all $\theta \in [\thetass, \thetass + \r_7]$, 
\[ 	\| \l(\theta) - \lss(\theta)\| \leq \rnew . \]
\end{lemma}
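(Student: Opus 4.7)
The plan is to use a straightforward triangle inequality argument, combining the near-initial coincidence of $\l$ and $\lss$ (within $\r_3$, by our choice of $\lss$) with the approximate-Lipschitz bounds on how far each trajectory can drift over an interval of length at most $\r_7$. Specifically, for any $\theta \in [\thetass, \thetass + \r_7]$ I would write
\[
\|\l(\theta) - \lss(\theta)\| \;\leq\; \|\l(\theta) - \l(\thetass)\| \;+\; \|\l(\thetass) - \lss(\thetass)\| \;+\; \|\lss(\thetass) - \lss(\theta)\|,
\]
and bound each term separately. No subtle uniqueness or thin-flow argument is required here; the point of the lemma is just that neither trajectory has had ``time'' to move very far from its value at $\thetass$.

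For the middle term, $\|\l(\thetass) - \lss(\thetass)\| \leq \r_3$ holds by the hypothesis on $\lss$ (which comes from \Cref{lem:staying_close_to_steady_state_region}, since $\lss(\thetass)$ was chosen from $B_{\r_3}(\l(\thetass)) \cap \Iloc$). For the first term, I invoke the approximate-Lipschitz property of $\delta$-trajectories (\Cref{lem:deltalip}) applied to the interval $[\thetass, \theta]$ of length at most $\r_7$: this gives $\|\l(\theta) - \l(\thetass)\| \leq \TrajLipDelta \r_7 + \jump \delta$. For the third term, I use that $\lss$ is an \emph{exact} equilibrium trajectory moving in direction $\lambda$ (since $\lss(\thetass) \in \Iloc$), and that every such trajectory is $\TrajLip$-Lipschitz (as remarked just before \Cref{lem:deltalip}, using $\|\lambda\|_\infty \leq \TrajLip$): hence $\|\lss(\theta) - \lss(\thetass)\| \leq \TrajLip \r_7$.

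Summing the three bounds yields
\[
\|\l(\theta) - \lss(\theta)\| \;\leq\; \r_3 \;+\; (\TrajLip + \TrajLipDelta)\,\r_7 \;+\; \jump \delta.
\]
Comparing with the definition \eqref{eq:rnew} of $\rnew = \r_4 + \epsilon_{j-1} + 2\r_3 + (\TrajLip + \TrajLipDelta) \r_7 + \delta \jump$, the right-hand side above is strictly smaller, since $\r_4, \epsilon_{j-1}, \r_3 \geq 0$. The main (and only) subtlety is to ensure that the Lipschitz constant $\TrajLip$ really applies to $\lss$ on the whole interval $[\thetass, \thetass + \r_7]$; this is immediate because $\lss$ is defined on all of $\Rplus$ as the affine map $\theta \mapsto \lss(\thetass) + \lambda(\theta - \thetass)$ with $\lss(\thetass) \in \Iloc$, so it never leaves its single steady-state phase. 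With this observation the lemma follows without further work, and the generous slack in $\rnew$ (and eventually $\reight$) is reserved for the substantially harder Part II argument that compares $\l$ with $\lss$ beyond time $\thetass + \r_7$.
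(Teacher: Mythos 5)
Your proof is correct and is essentially the same as the paper's: both use the same triangle-inequality decomposition at $\thetass$, bound the middle term by $\r_3$ via the choice of $\lss(\thetass) \in B_{\r_3}(\l(\thetass)) \cap \Iloc$, the first term via \Cref{lem:deltalip}, and the third via $\TrajLip$-Lipschitzness of equilibrium trajectories, yielding $\r_3 + (\TrajLip + \TrajLipDelta)\r_7 + \jump\delta \leq \rnew$ by \eqref{eq:rnew}.
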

\begin{proof}
    We use $\TrajLip$-Lipschitzness of equilibrium trajectories and approximate Lipschitzness of $\delta$-trajectories (\Cref{lem:deltalip}) to obtain: 
\begin{equation} \label{eq:close_for_small_theta}
	\begin{aligned}
        \norm{\l(\theta) -\lss(\theta)} &\leq \norm{\l(\thetass) -\lss(\thetass)} + \TrajLipDelta (\theta -\thetass) + \delta\jump +\TrajLip(\theta -\thetass)\\
		&\leq \r_3 + \TrajLipDelta \r_7 + \delta \jump + \TrajLip \r_7 \underbrace{\leq}_{\eqref{eq:rnew}} \rnew.
	\end{aligned}
\end{equation}
\end{proof}

\paragraph{Case 1: $\l$ is not central near $\thetass$.}

\begin{lemma}\label{lem:notcentral}
If $\l$ is not central near $\thetass$, and assuming that $\delta$ is sufficiently small, 
\[ \| \l(\theta) - \lss(\theta)\| \leq \reight  \qquad \text{ for all } \theta \in [\thetass, \theta_1]. \]
\end{lemma}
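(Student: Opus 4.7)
The plan is to apply the inductive hypothesis (\Cref{thm:technical} at level $j-1$) to a reduced generalized subnetwork formed by resolving the hyperplane that $\l$ is far from near $\thetass$. Unpacking non-centrality, there exist $\theta^* \in [\thetass, \thetass + \r_7]$ and $e' = v'w' \in \Etilde \setminus \Einf$ with $\abs{\l_{w'}(\theta^*) - \l_{v'}(\theta^*) - \tau_{e'}} > \r_5$; the queueing and inactive cases for $e'$ at $\theta^*$ are symmetric, so I focus on the queueing case $\l_{w'}(\theta^*) - \l_{v'}(\theta^*) - \tau_{e'} > \r_5$ and set $\Einfind \coloneqq \Einf \cup \set{e'}$, $\Etildeind \coloneqq \Etilde$. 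A preliminary observation is $\lambda_{w'} \geq \lambda_{v'}$, i.e., $e' \notin E^<$: by \Cref{lem:staying_close_to_steady_state_region} there exists $l^* \in \Iloc$ with $\norm{l^* - \l(\theta^*)} \leq \r_3$, giving $l^*_{w'} - l^*_{v'} > \tau_{e'} + \r_5 - 2\r_3 > \tau_{e'}$ since $\r_5 > 2\r_3$ from its definition; so $e' \in E^*_{l^*}$, which rules out $e' \in E^<$ by \Cref{lem:Ebigger_and_Esmaller_in_ss}(ii).

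The crux is verifying that $\l$ stays strictly on the queueing side of $H_{e'}$ with margin exceeding $4\delta$ throughout $[\theta^*, \theta_1]$, permitting a direct application of the inductive hypothesis for $(\Etildeind, \Einfind)$. I proceed by bootstrap: let $T^* \in (\theta^*, \theta_1]$ be maximal with this strict margin on $[\theta^*, T^*)$; \Cref{lem:deltalip} applied to the initial excess $\r_5$ guarantees $T^* > \theta^*$ for small $\delta$. On $[\theta^*, T^*)$, condition (i) of \Cref{thm:technical} for the reduced configuration holds by construction, while condition (ii) follows from $\dist(\l(\theta^*), \Iloc) \leq \r_3$ combined with \Cref{thm:steady-state}, ensuring that the reduced equilibrium trajectory $\lstarind$ (starting at the valid label closest to $\l(\theta^*)$) reaches steady state within $\Tglob + 1$ for small $\delta$. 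Induction then yields $\norm{\l(\theta) - \lstarind(\theta)} \leq \epsilon_{j-1}$ on $[\theta^*, T^*)$. To pass from $\lstarind$ to $\lss$, I choose $\lss(\thetass) \in B_{\r_3}(\l(\thetass)) \cap \Iloc$ on the strictly queueing side of $H_{e'}$ (which, together with $\lambda_{w'} \geq \lambda_{v'}$, makes $\lss$ a valid equilibrium trajectory also in the reduced network) and invoke continuity (\Cref{thm:continuity_of_Nash}) in the reduced network; combined with $\norm{\lstarind(\theta^*) - \lss(\theta^*)} \leq \rnew + \abs{V}\delta$ from \Cref{lem:close_for_a_bit} and \Cref{lem:almost-valid}, this gives $\norm{\lstarind(\theta) - \lss(\theta)} \leq h(\rnew + \abs{V}\delta)$. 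Combining, $\norm{\l(\theta) - \lss(\theta)} \leq \epsilon_{j-1} + h(\rnew + \abs{V}\delta) \leq \reight$ on $[\theta^*, T^*)$; since $\lss$ retains a non-shrinking queueing margin, this closeness forces $T^* = \theta_1$.

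For $\theta \in [\thetass, \theta^*]$, \Cref{lem:close_for_a_bit} directly gives $\norm{\l(\theta) - \lss(\theta)} \leq \rnew \leq \reight$, extending the bound to the full interval. The main obstacle is the bootstrap step, and in particular the sub-case $e' \in E^=$: here the queueing margin of $\lss$ is constant rather than growing, so the choice of $\lss(\thetass) \in B_{\r_3}(\l(\thetass)) \cap \Iloc$ on the strictly queueing side of $H_{e'}$ must be justified by tracing the $\r_5$ excess at $\theta^*$ back to $\thetass$ via the approximate Lipschitzness in \Cref{lem:deltalip}, and checking that the parameters $\r_5, \r_7, \rnew$ admit enough slack to accommodate this backtrack together with the $\r_3$-projection onto $\Iloc$.
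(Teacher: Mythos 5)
Your overall architecture matches the paper's — identify the hyperplane $H_{e'}$ that $\l$ is far from somewhere in $[\thetass,\thetass+\r_7]$, promote $e'$ in the reduced configuration, apply induction there, and use continuity of exact trajectories to transfer the bound — but two interlocking choices you make create a gap that the paper's parameters genuinely do not cover, and which you yourself flag at the end as "the main obstacle."

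The fatal step is requiring $\lss(\thetass) \in B_{\r_3}(\l(\thetass)) \cap \Iloc$ to lie \emph{strictly on the queueing side} of $H_{e'}$, so that $\lss$ becomes a valid trajectory in the reduced subnetwork. No such point need exist. The $\r_5$ margin at $\theta^*$ does not backtrack to $\thetass$: over a span of up to $\r_7$, \Cref{lem:deltalip} lets $\l_{w'}-\l_{v'}$ drift by up to $2\TrajLipDelta\r_7 + 2\jump\delta$, and since $\r_7 \geq \r_6 \geq 4\r_5|V|/\lambdamin$ with $\lambdamin \leq 1$ (as $\lambda_s=1$) and $\TrajLipDelta \geq |V|$, this drift dwarfs $\r_5$. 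So the margin of $\l$ at $\thetass$ can be very negative, and the further $\r_3$-projection onto $\Iloc$ cannot rescue it. "Checking the parameters for slack" cannot fix this; the parameters, as defined, do not have it. The paper sidesteps both branches of your dilemma: (a) it sets the threshold for $\xiend$ at $2\epsilon_{j-1}$ rather than $4\delta$, which is large enough that once induction gives $\|\l-\lstarind\| \leq \epsilon_{j-1}$, the trajectory $\lstarind$ itself never touches $H_{e'}$ and is therefore a valid equilibrium trajectory in the \emph{original} $(\Etilde,\Einf)$ subnetwork — so the continuity comparison $\lstarind \leftrightarrow \lss$ happens in the original network, where $\lss$ needs no side-of-hyperplane constraint at all; and (b) for the bootstrap that forces $\xiend = \theta_1$, it introduces a \emph{second} auxiliary trajectory $\lstarss$ started from $B_{\r_3}(\l(\xi_0)) \cap \Iloc$ at $\xi_0 = \theta^*$, where the margin $\geq \r_5 - 2\r_3$ is known, and only then invokes the nondecreasing-margin fact from \Cref{lem:Ebigger_and_Esmaller_in_ss}. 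Transferring this to $\l$ via $\lstarind$ gives margin $\geq 3\epsilon_{j-1}+2\jump\delta$ at $\xiend$, which closes the maximality argument. Collapsing $\lss$ and $\lstarss$ into a single trajectory anchored at $\thetass$ is precisely what forces you into the untenable assumption on $\lss(\thetass)$.
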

\begin{proof}
Let $\xi_0 \in [\thetass, \thetass + \r_7]$ and $e=vw \in \Etilde \setminus \Einf$ be such that $\abs{\l_w(\xi_0)-\l_v(\xi_0)-\tau_e} > \r_5$.
Let $\xiend$ be the maximal value in $[\xi_0, \theta_1]$ such that 
\begin{equation}\label{eq:xiend}
    \abs{\l_w(\xi) - \l_v(\xi) - \tau_e} > 2\epsilon_{j-1} \text{ for all } \xi \in [\xi_0, \xiend).
\end{equation}
Later we will argue that $\xiend = \theta_1$.
We apply induction on $[\xi_0, \xiend]$ by setting $\Etildeind \coloneqq \Etilde$ and $\Einfind \coloneqq \Einf \cup \set{e}$ if 
$e \in E'_{\l(\xi_0)}$, or $\Etildeind \coloneqq \Etilde \setminus \set{e}$ and $\Einfind \coloneqq \Einf$ otherwise.
(Informally, we move the hyperplane $H_e$ off to infinity, keeping $\l(\xi_0)$ on the same side of the hyperplane.)
Let $\lstarind$ be an equilibrium trajectory in the generalized subnetwork corresponding to $(\Etildeind, \Einfind)$ that starts at a valid labelling as close to $\l(\xi_0)$ as possible;
we know that $\|\lstarind(\xi_0) - \l(\xi_0)\| \leq \delta \abs{V}$ by \Cref{lem:almost-valid}.
As the time to reach steady state for $\lstarind$ is bounded by $T_{\r_3} := \T \cdot (\r_3 + \delta \abs{V})$ (\Cref{thm:steady-state}).
By taking $\delta$ sufficiently small, we can assume that $T_{\r_3} \leq 1$, ensuring that the requirements of \Cref{thm:technical} are fulfilled for the interval $[\xi_0, \xiend]$ (in particular, $\min\{\xiend - \xi_0, T_{\r_3}\} \leq 1 \leq \Tglob +1$).
Hence $\lstarind$ stays within distance $\epsilon_{j-1}$ of $\l$. 
So in order to argue that $\l$ remains close to $\lss$ on $[\xi_0, \xiend]$, it suffices to show that $\lss$ remains close to $\lstarind$. 

Even though $\lstarind$ is defined to be an equilibrium trajectory in the reduced network determined by $(\Etildeind, \Einfind)$, it is also a valid equilibrium trajectory in the original network corresponding to $(\Etilde, \Einf)$ as it does not touch or cross the hyperplane $H_e$.
By \Cref{lem:close_for_a_bit}, $\|\l(\xi_0) - \lss(\xi_0)\| \leq \rnew$, and since $\|\l(\xi_0) - \lstarind(\xi_0)\| \leq \delta|V|$, $\|\lss(\xi_0) - \lstarind(\xi_0)\| \leq \rnew + \delta|V|$.
Since $\lss$ and $\lstarind$ are both equilibrium trajectories on $[\xi_0, \xiend]$,
$\|\lss(\theta) - \lstarind(\theta)\| \leq h(\rnew + \delta|V|)$ for all $\theta \in [\xi_0, \xiend]$.
Applying the triangle inequality and the definition of $\reight$ we have
\[ \|\l(\theta) - \lss(\theta)\| \leq \|\l(\theta) - \lstarind(\theta)\| + \|\lstarind(\theta) - \lss(\theta)\| \leq h(\rnew + \delta|V|) + \epsilon_{j-1} \leq \reight. \]

\medskip

All that remains is to show that $\xiend = \theta_1$.
Let $\lstarss$ be an equilibrium trajectory starting at time $\xi_0$ from a point in $B_{\r_3}(\l(\xi_0)) \cap \Iloc$ (which we know exists by \Cref{lem:almost-valid}).
Since $\lstarss$ starts in steady state, $\lstarss(\theta) = \lstarss(\xi_0) + (\theta - \xi_0)\lambda$.
Consider now how the quantity $\lstarss_w(\theta) - \lstarss_v(\theta)$ varies with $\theta$.
If $e \in E^=$, then this is constant.
If $e \in E^>$, then this is increasing, and by \Cref{lem:Ebigger_and_Esmaller_in_ss}, $\lstarss_w(\xi_0) - \lstarss_v(\xi_0) > \tau_e$.
And similarly, if $e \in E^<$, then this quantity is decreasing and $\lstarss_w(\xi_0) - \lstarss_v(\xi_0) < \tau_e$.
In all cases, we deduce that $|\lstarss_w(\theta) - \lstarss_v(\theta) - \tau_e|$ is nondecreasing with $\theta$.
Thus, using that $\|\l(\xi_0) - \lstarss(\xi_0)\| \leq \r_3$ and the definition of $\xi_0$, 
\begin{equation}\label{eq:starss-far}
 |\lstarss_w(\xiend) - \lstarss_v(\xiend) - \tau_e| \geq |\lstarss_w(\xi_0) - \lstarss_v(\xi_0) - \tau_e| \geq |\l_w(\xi_0) - \l_v(\xi_0) - \tau_e| - 2\r_3 \geq \r_5 - 2\r_3.
\end{equation}
We also have
\[ \|\lstarind(\xi_0) - \lstarss(\xi_0)\| \leq \|\lstarind(\xi_0) - \l(\xi_0)\| + \|\l(\xi_0) - \lstarss(\xi_0)\| \leq \r_3 + \delta|V|, \]
and so $\|\lstarind(\xiend) - \lstarss(\xiend)\| \leq h(\r_3 + \delta|V|) = \r_4$.
This in turn implies that 
\[ \|\l(\xiend) - \lstarss(\xiend)\| \leq \r_4 + \epsilon_{j-1}. \]
Finally, combining this with \eqref{eq:starss-far}, we deduce that 
\[ |\l_w(\xiend) - \l_v(\xiend) - \tau_e| \geq \r_5 - 2\r_3 - 2\r_4 - 2\epsilon_{j-1} \geq 2K\delta + 3\epsilon_{j-1}. \]
It follows that $\xiend = \theta_1$, since otherwise by \Cref{lem:deltalip} we would be able to increase $\xiend$ while maintaining \eqref{eq:xiend}.
This completes the proof.
\end{proof}

\paragraph{Case 2: $\l$ is central near $\thetass$.}
Now we can focus on the key case.
Our first lemma gives us some explicit approximate comparisons between $\Delta \l / \Delta \theta$ and the steady-state direction $\lambda$; unlike \modified{\Cref{lem:similar_to_thin_flow}}, the comparisons are direct rather than involving the thin flow equations which define $\lambda$.
\begin{lemma}
\label{lem:hyperplane_dist_similar_to_thin_flow}
If $\l$ is central near $\thetass$ and $\delta$ is sufficiently small, then the following holds for any $\theta \in [\thetass + r_7, \theta_1]$:
\begin{enumerate}    
\renewcommand\labelenumi{(\roman{enumi})}
\renewcommand\theenumi\labelenumi
\item $\Delta \l_w - \Delta \l_v \leq (\lambda_w -\lambda_v) (\Delta \theta -r_7) + 3\r_5$ \quad for all $e = vw \in E^> \cup E^=$, \text{and} \label{it:E_bigger_arcs_dont_grow_queue_too_fast}
\item  $\Delta \l_w -\Delta \l_v \geq (\lambda_w -\lambda_v)(\Delta \theta - r_7) - 3\r_5$ \quad for all $e = vw \in E^< \cup E^=$. \label{it:E_smaller_arcs_dont_go_inactive_too_fast}
\end{enumerate}
\end{lemma}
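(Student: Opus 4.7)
I prove (i); part (ii) is entirely analogous, with Lemma \ref{lem:Ebigger_and_Esmaller_in_ss}(ii) and \Cref{lem:similar_to_thin_flow}(ii) replacing their counterparts, and with the inequalities reversed. Fix $e = vw \in E^> \cup E^=$ and $\theta \in [\thetass + r_7, \theta_1]$, and set the reference time $\theta_* := \thetass + r_7$. Centrality at time $\thetass$ yields $\l_w(\thetass) - \l_v(\thetass) \geq \tau_e - r_5$, so after substituting, the target inequality (i) reduces to showing
\[ \l_w(\theta) - \l_v(\theta) - \tau_e \;\leq\; (\lambda_w - \lambda_v)(\Delta \theta - r_7) + 2 r_5. \]
By the Bellman inequality \eqref{eq:bellman}, the left-hand side is bounded above by the queue-waiting-time $z_e(\l_v(\theta))/\nu_e$, so the task becomes to produce a matching upper bound on this queue quantity.

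The key is to propagate a controlled bound on the queue forward from $\theta_*$. First, centrality at $\theta_*$ gives $\l_w(\theta_*) - \l_v(\theta_*) - \tau_e \leq r_5$, which together with \Cref{lem:close_for_a_bit} (bounding $\|\l(\theta_*) - \lss(\theta_*)\| \leq \rnew$) and Lemma \ref{lem:Ebigger_and_Esmaller_in_ss}(i) (ensuring $e$ is active at $\lss(\thetass) \in \Iloc$) lets me upper-bound the initial queue $z_e(\l_v(\theta_*))/\nu_e$ by $O(r_5)$ for $e \in E^>$, and symmetrically by $O(r_5)$ in the $E^=$ case. Next, from $\theta_*$ to $\theta$ I use the capacity constraint \eqref{eq:queues_operate_at_capacity}: whenever the queue is positive the arc operates at capacity, so tracking inflow and outflow on $[\l_v(\theta_*), \l_v(\theta)]$ gives
\[ z_e(\l_v(\theta)) \;\leq\; z_e(\l_v(\theta_*)) + \bigl(x_e(\theta) - x_e(\theta_*)\bigr) - \nu_e\bigl(\l_v(\theta) - \l_v(\theta_*)\bigr), \]
with an extra slack when the queue empties in between. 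The inflow term is controlled by the analog of \Cref{lem:similar_to_thin_flow}(i) with reference time $\theta_*$ in place of $\thetass$ (the argument in the original proof extends verbatim, since it only invokes the $\delta$-equilibrium property and the queue definition): $x_e(\theta) - x_e(\theta_*) \leq \nu_e(\l_w(\theta) - \l_w(\theta_*)) + \nu_e \delta$. Substituting and invoking the thin flow identity $y_e = \nu_e \lambda_w$ (for $e \in E^>$; for $e \in E^=$ one uses instead $y_e \leq \nu_e \lambda_w = \nu_e \lambda_v$, so inflow cannot outpace outflow in the steady state), I deduce the required bound on $z_e(\l_v(\theta))/\nu_e$. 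The $-(\lambda_w - \lambda_v) r_7$ offset in (i) arises precisely from this shift of reference from $\thetass$ to $\theta_*$.

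The main obstacle is that when $e$ is not on an earliest-arrival path to $w$, the Bellman inequality is strict and the queue on $e$ is not directly controlled by the label gap $\l_w - \l_v - \tau_e$; in particular, centrality alone does not bound the queue. I handle this by splitting the interval $[\l_v(\theta_*), \l_v(\theta)]$ at the last time before $\l_v(\theta)$ that the queue on $e$ equals zero: if this emptying time lies within the interval I restart the inflow-outflow accounting from that moment (so that the initial-queue contribution vanishes), and otherwise the queue remained positive throughout and the arc operated at capacity, giving an exact inflow minus outflow identity. A secondary subtlety is the $E^=$ case, where $(\lambda_w - \lambda_v)(\Delta\theta - r_7) = 0$ and one must argue that the queue on $e$ stays essentially constant; here the two-sided thin-flow bound combined with the near-constancy of the steady-state queue on $e$ under $\lss$ suffices to produce the $3r_5$ constant error. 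Part (ii) follows symmetrically, using that arcs in $E^<$ carry no flow in the thin flow ($y_e = 0$) and have no queue at $\Iloc$, so the label gap $\l_w - \l_v - \tau_e$ cannot drop below $(\lambda_w - \lambda_v)(\Delta \theta - r_7) - 3 r_5$.
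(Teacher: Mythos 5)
Your proof has a fundamental circularity that prevents it from closing. The inequality you display,
\[ z_e(\l_v(\theta)) \leq z_e(\l_v(\theta_*)) + \bigl(x_e(\theta) - x_e(\theta_*)\bigr) - \nu_e\bigl(\l_v(\theta) - \l_v(\theta_*)\bigr), \]
in fact goes the other way: outflow from a queue is \emph{at most} $\nu_e$ times elapsed time, so the right-hand side is a \emph{lower} bound on $z_e(\l_v(\theta))$ (with equality only when the queue is positive throughout). But even granting the favourable case of equality, the argument cannot give a useful upper bound. The only control you have on the inflow is the (shifted-reference) analogue of \Cref{lem:similar_to_thin_flow}\ref{it:Delta_x_upper_bound}, namely $x_e(\theta) - x_e(\theta_*) \leq \nu_e(\l_w(\theta) - \l_w(\theta_*)) + \nu_e\delta$. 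Substituting this and dividing by $\nu_e$ yields $q_e(\theta) \leq q_e(\theta_*) + (\l_w(\theta) - \l_w(\theta_*)) - (\l_v(\theta) - \l_v(\theta_*)) + \delta$. When you then feed this into the Bellman inequality $\l_w(\theta) - \l_v(\theta) - \tau_e \leq q_e(\theta)$, the $\l_w(\theta)$ and $\l_v(\theta)$ terms cancel and you recover only the tautology $\l_w(\theta_*) - \l_v(\theta_*) - \tau_e \leq q_e(\theta_*) + \delta$, i.e.\ (essentially) Bellman at time $\theta_*$. The same cancellation occurs after the "split at last emptying time" step. The reason is structural: the inflow bound in \Cref{lem:similar_to_thin_flow}\ref{it:Delta_x_upper_bound} is itself derived from the Bellman inequality at $w$ together with the $\delta$-equilibrium property, so re-substituting it into Bellman at $w$ returns no new information. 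Invoking the thin-flow identity $y_e = \nu_e \lambda_w$ does not help because you have no bound relating $\Delta x_e$ to $y_e \,\Delta\theta$ for arcs in $\Etilde\setminus\Einf$ — the paper is explicit (in the discussion after \Cref{lem:similar_to_thin_flow}) that \ref{it:Delta_x_upper_bound} is only a weak, one-sided consequence of the thin flow equation for such arcs.

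The missing ingredient is the \emph{inductive hypothesis}, which your argument never uses. The paper's proof proceeds by contradiction: if the queue on $e \in E^> \cup E^=$ exceeds its target bound at time $\theta$, then for a stretch $[\thetastart,\theta]$ the label vector $\l$ stays at distance at least $r_5$ from the hyperplane $H_e$, which (since $r_5 \gg \epsilon_{j-1}$) allows one to apply \Cref{thm:technical} inductively with $e$ promoted to a free arc (i.e.\ on the configuration $(\Etilde, \Einf \cup \{e\})$). This produces an \emph{exact} equilibrium trajectory $\lstarind$ that stays within $\epsilon_{j-1}$ of $\l$; comparing $\lstarind$ (via \Cref{thm:continuity_of_Nash}) with a steady-state trajectory $\lstarss$ starting near $\l(\thetastart)$ gives the genuinely new two-sided control on $\Delta\l_w - \Delta\l_v$ that breaks the circularity. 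This is precisely the mechanism your direct inflow/outflow accounting cannot replicate.
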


\begin{proof}
The key idea of the proof of \ref{it:E_bigger_arcs_dont_grow_queue_too_fast} is that if $\l$ builds up a queue on some arc $e = vw \in E^>\cup E^=$ that is larger than a certain threshold, then $\l$ has a sufficiently large distance to the hyperplane $H_e$ that we can apply induction. 
In an equilibrium trajectory in steady state, the queue on such an arc grows at rate $\lambda_w - \lambda_v$;
the same then holds, approximately, for $\l$. \nnote{Why?}\laura{We used to call this argument close - middle - far in our discussions. I think what we want to say is: as soon as a queue grows, it grows with rate $\lambda_w - \lambda_v$ just by induction. This means that we can get an upper bound on the queue length. By case distinctions queues are not big in the beginning and as soon as they grow, we can control their growing rate. This is not enough, because it might happen that queues do not grow at all or start growing very late. For also getting a lower bound on the queue length we use the cut/robustness argument.}
Analogously the same holds for arcs $e \in E^<\cup E^=$ that become inactive by a margin. \nnote{TODO improve}

Assume \ref{it:E_bigger_arcs_dont_grow_queue_too_fast} is not true.  
Then there exists an arc $e = vw \in E^> \cup E^=$ such that together with the fact that $\ell$ is central near $\thetass$, we obtain a lower bound on the queue on that arc at time $\theta$.
\begin{align*}\l_w(\theta) - \l_v(\theta) - \tau_e &= \Delta \l_w - \Delta \l_v + \l_w(\thetass) - \l_v(\thetass) - \tau_e \\
&> (\lambda_w - \lambda_v) (\Delta \theta- r_7) + 3 \r_5 -\r_5  \\
&\geq 2\r_5.
\end{align*}
In other words, $\l(\theta)$ has a distance of at least $2\r_5 $ from the hyperplane $H_e$ while being on the side where $e$ is queueing.

Next, we define $\thetastart$ to be a point in time within $[\thetass + r_7, \theta]$ such that all $\xi \in [\thetastart, \theta]$ have a distance of at least $r_5$ from $H_e$ and $\dist(\l(\thetastart), H_e) \in [\r_5, \r_5 + \jump \delta]$. (Note that such a point exists due to \Cref{lem:deltalip} and because $\l$ is central near $\thetass$.) Moreover, this implies that during the whole interval there is at least a queue of length $\r_5$ on $e$.)
Hence, we can apply induction on $[\thetastart, \theta]$ by promoting $e$ to a free arc (i.e. setting $\Etildeind \coloneqq \Etilde$ and $\Einfind \coloneqq \Einf \cup \set{e}$). 
This provides us with an equilibrium trajectory $\lstarind$ which stays within distance $\epsilon_{j-1}$ of $\l$ on this interval.
Assuming that $\delta$ is sufficiently small, $\lstarind$ reaches steady state within time $1$, and so the requirement (ii) in \Cref{thm:technical} is satisfied (similarly to the argument in \Cref{lem:notcentral}).

\nnote{A remark for later, not for arxiv version: there seems to be a fair amount of repetition on these basic facts, between this proof and the previous lemma. The objects are not quite the same; but perhaps some efficiency gain is possible.} 
By \Cref{lem:staying_close_to_steady_state_region} $\l$ is within distance $\r_3$ of $\Iloc$ for the whole interval.
Hence there exists another equilibrium trajectory $\lstarss$ with $\lstarss(\thetastart) \in B_{\r_3}(\l(\thetastart)) \cap \Iloc$. 
Since $\lstarss$ starts in steady state,
\[\lstarss(\xi) = \lstarss(\thetastart) + \lambda (\xi - \thetastart) \quad \text { for all } \xi \in [\thetastart, \theta].\]
Since $\norm{\lstarss(\thetastart) - \lstarind(\thetastart)} \leq \r_3 + \abs{V}\delta$ by the definition of $\r_4$ \eqref{eq:r4}, we have
\[\norm{ \lstarss(\xi)- \lstarind(\xi) } \leq \r_4 \quad \text{ for all } \xi \in [\thetastart, \theta].\]
With the triangle inequality we obtain that
\[\norm{\l(\xi) - \lstarss(\xi)} \leq \r_4+ \epsilon_{j-1} \quad \text{ for all } \xi \in [\thetastart, \theta].\]

Putting everything together we obtain
\begin{align*}
	\Delta \l_w &- \Delta \l_v = \l_w(\theta) - \l_v(\theta) - \tau_e - (\l_w(\thetass) - \l_v(\thetass) - \tau_e) \\
	&\leq (\lstarss)_w(\theta) - (\lstarss)_v(\theta) - \tau_e + 2(\r_4 + \epsilon_{j-1}) + \r_5 \\
	&= (\lstarss)_w(\thetastart) + \lambda_w(\theta - \thetastart) - (\lstarss)_v(\thetastart) - \lambda_v(\theta - \thetastart)  - \tau_e + 2(\r_4 + \epsilon_{j-1}) + \r_5\\
	&\leq (\lambda_w - \lambda_v)(\theta - \thetastart) + \l_w(\thetastart) - \l_v(\thetastart)  - \tau_e + 2\r_3 + 2(\r_4+ \epsilon_{j-1}) + \r_5\\
	&\leq (\lambda_w - \lambda_v)(\theta - (\thetass +r_7)) + \r_5 + \jump \delta + 2\r_3 + 2(\r_4 + \epsilon_{j-1}) + \r_5\\
	&\underbrace{\leq}_{\eqref{eq:r5}} (\lambda_w - \lambda_v)(\Delta \theta -r_7) + 3\r_5.
\end{align*}
This is a contradiction to the assumption.

The proof of \ref{it:E_smaller_arcs_dont_go_inactive_too_fast} proceeds completely analogously. The only difference is that we remove the arc from the network instead of promoting it to be a free arc.
\end{proof}

%
    At this point we have built up enough information about $(\Delta \l, \Delta x)$\,---\,both in terms of some approximate thin-flow constraints that it satisfies, as well as some direct comparisons with the steady-state direction $\lambda$\,---\,that we are ready for the final part of the proof. 
    As already mentioned in the overview, it is inspired by the proof of Cominetti, Correa and Larr\'e~\CCL{} that the thin flow equations have a unique solution (in terms of the label derivative). 
    So we take the same approach: supposing for a contradiction that $\Delta \l$ is quite different from $\lambda \Delta \theta$, we can consider a nontrivial subset $S$, such that the ratio $\Delta \l_v / \lambda_v$ is significantly smaller for nodes $v \in S$ than it is for nodes $v \notin S$. 
    The control we have of $\Delta \l$ and $\Delta x$ from \Cref{lem:similar_to_thin_flow} and \Cref{lem:hyperplane_dist_similar_to_thin_flow} allow us to lowerbound the amount of the approximate circulation $\Delta x$ entering $S$, and upperbound the amount leaving, in a way that leads to a contradiction.

\begin{lemma}
\label{lem:quotients_close}
Given that $\l$ is central near $\thetass$ and $\delta$ sufficiently small,
\[\Delta \theta - \r_6 \leq \tfrac{\Delta \l_v (\theta)}{\lambda_v} \leq \Delta \theta + \r_6 \quad \text{ for all } \theta \in [\thetass + r_7 , \theta_1] \text{ and } v \in V.\]
\end{lemma}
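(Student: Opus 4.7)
The plan is to argue by contradiction, following the robust cut argument outlined in \Cref{sec:overview}: if $\Delta \l_v/\lambda_v$ deviates from $\Delta \theta$ by more than $\r_6$ at some node, a pigeonhole-chosen cut $S$ will witness a quantitative ratio gap that forces the ``approximate thin flow'' $(\Delta \l/\Delta \theta, \Delta x/\Delta \theta)$ to violate its own approximate flow conservation. Fix $\theta \in [\thetass + r_7, \theta_1]$. First I would reduce to ruling out the existence of $v^* \in V$ with $\Delta \l_{v^*}/\lambda_{v^*} < \Delta \theta - \r_6$; the opposite inequality is handled symmetrically, swapping the roles of ``entering'' and ``leaving'' the cut. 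Since $\l_s(\xi) \equiv \xi$, the ratio at $s$ equals $\Delta \theta$, so the $|V|$ values $\{\Delta \l_v/\lambda_v\}_{v \in V}$ span an interval wider than $\r_6$. Sorting these values and applying pigeonhole yields consecutive ones $q_k < q_{k+1}$ with $q_{k+1} - q_k \geq \r_6/|V|$. Setting $S \coloneqq \{v : \Delta \l_v/\lambda_v \leq q_k\}$ produces a nonempty proper subset with $s \notin S$ such that $\Delta \l_w/\lambda_w - \Delta \l_v/\lambda_v \geq \r_6/|V|$ for all $v \in S$, $w \in V \setminus S$.

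The central step is to show that every crossing arc $e = vw$ contributes a \emph{strict} discrepancy between $\Delta x_e/\Delta \theta$ and $y_e$: concretely, $\tfrac{1}{\Delta \theta}\Delta x_e \leq y_e - c\,\r_6/|V| + O(\delta)$ for $e \in \delta^-(S)$ and the reverse inequality for $e \in \delta^+(S)$, with $c$ an instance-dependent constant. The case analysis splits by arc type. For $e \in \Einf$ crossing the cut, \Cref{lem:similar_to_thin_flow}(ii) gives $\Delta x_e \approx \nu_e \Delta \l_w$ while the thin-flow equations give $y_e = \nu_e \lambda_w$ exactly, so the ratio gap at $w$ directly yields the bound. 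For $e = vw \in \Etilde \setminus \Einf$, case-splitting on whether $e$ lies in $E^>$, $E^=$ or $E^<$ pins $y_e$ to an element of $\{0, \nu_e \lambda_v, \nu_e \lambda_w\}$; \Cref{lem:similar_to_thin_flow}(i) provides the upper bound on $\Delta x_e$, and \Cref{lem:hyperplane_dist_similar_to_thin_flow} converts the ratio gap at the relevant endpoint into a quantitative comparison between $\Delta \l_w - \Delta \l_v$ and $(\lambda_w - \lambda_v)\Delta\theta$, paying only an additive $3\r_5$. Arcs outside $\Etilde$ satisfy $\Delta x_e = 0 = y_e$ by \Cref{lem:similar_to_thin_flow}(iii).

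Summing these crossing-arc inequalities and using that $y$ is an $s$-$t$-flow of value $u_0$ with $s \notin S$, the resulting upper bound on $\tfrac{1}{\Delta \theta}(\Delta x(\delta^-(S)) - \Delta x(\delta^+(S)))$ will be $u_0 \mathbf{1}\{t \in S\} - c' \lambdamin \capmin \r_6/|V| + O(\delta)$, whereas \Cref{lem:similar_to_thin_flow}(iv) summed across $V \setminus S$ gives the opposing lower bound $u_0 \mathbf{1}\{t \in S\} - O(|V|\delta \nu_\Sigma/\Delta \theta)$. With $\Delta \theta \geq r_7 \geq \r_6$ and the defining inequality $\r_6 \geq 3|V|^2 \delta \nu_\Sigma/(\capmin \lambdamin)$ from \eqref{eq:r6}, these two estimates will be incompatible for $\delta$ sufficiently small, delivering the contradiction.

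The hard part is the arc-by-arc analysis for $e \in \Etilde \setminus \Einf$: there the thin-flow equations only provide a \emph{one-sided} statement (an inequality, not an equality, involving $y_e$), so \Cref{lem:similar_to_thin_flow}(i) alone cannot force strictness on every crossing arc. This is precisely where the centrality of $\l$ near $\thetass$ is used, via \Cref{lem:hyperplane_dist_similar_to_thin_flow}: it transports the ratio gap across an arc's endpoints into a \emph{quantitative} separation between $\Delta \l$ and $\lambda \Delta \theta$, at the additive cost $3\r_5$ that the hierarchy $\r_5 \ll \r_6 \ll r_7$ is designed to absorb.
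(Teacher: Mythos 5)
Your overall plan matches the paper's proof of \Cref{lem:quotients_close}: argue by contradiction, use the pigeonhole principle on the ratios $\Delta\l_v/\lambda_v$ (anchored by $\Delta\l_s/\lambda_s = \Delta\theta$) to produce a cut $S$ with a ratio gap of at least $\gap = r_6/|V|$, and then combine \Cref{lem:similar_to_thin_flow} and \Cref{lem:hyperplane_dist_similar_to_thin_flow} on the crossing arcs with the exact circulation property of $y$ to contradict the approximate flow conservation of \Cref{lem:similar_to_thin_flow}~\ref{it:Delta_x_flow}. The contradiction with $\r_6 \geq 3|V|^2\delta\nu_\Sigma/(\capmin\lambdamin)$ is the right one to aim for.

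However, two steps in your sketch would fail as written. First, the claim that case-splitting on $E^>$, $E^=$, $E^<$ ``pins $y_e$ to an element of $\{0, \nu_e\lambda_v, \nu_e\lambda_w\}$'' is false for $e = vw \in E^=$: the thin flow equations only give $0 \leq y_e \leq \nu_e\lambda_w$ there, with $y_e$ free to take any value in that range, so you cannot compare $\Delta x_e$ to $y_e$ term-by-term for such arcs. The paper's resolution is not to bound the contribution of $E^=$ arcs crossing the cut, but to show via \Cref{claim:arcs} that they \emph{cannot} cross at all: combining the ratio gap at the endpoints with the two-sided bound from \Cref{lem:hyperplane_dist_similar_to_thin_flow} forces $\r_6(\lambda_w-\lambda_v) > \r_7(\lambda_w-\lambda_v) - 3\r_5$, which the hierarchy $\r_7 = \r_6 + 5\r_5/\lambdadiff$ (and, when $\lambda_w=\lambda_v$, the bound $\r_6 \geq 4\r_5|V|/\lambdamin$) rules out. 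This same argument excludes $E^>$ arcs leaving $S$ and $E^<$ arcs entering $S$, after which the only crossing arcs are in $E^> \cup \Einfbar$ (entering) and $\Einfbar \cup E^<$ (leaving), for which $y_e$ is pinned. Second, your assertion that leaving arcs also satisfy a \emph{strict} reverse inequality is not available: for $e=vw \in \Einfbar$ leaving $S$, knowing only $\Delta\l_w \geq z\lambda_w$ yields $\Delta x_e \geq z y_e - \nu_e\delta$ with no $\gap$ contribution. The proof therefore needs strictness only on entering arcs, which in turn makes it essential to verify $\bar\delta^-(S) \neq \emptyset$ (\Cref{claim:leaving_arc_exists}); in your $s\notin S$ branch this follows from reachability of $S$ from $s$ in $\Etilde\setminus E^<$, but in the symmetric branch where $s\in S$ one must additionally use that resetting arcs in a thin flow are flow-carrying, a step your sketch omits.
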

\begin{proof}
Suppose for a contradiction that the claim does not hold.
So there exists a $\theta \in [\thetass + r_7 , \theta_1]$ and a node $v \in V$ such that $\frac{\Delta\l_v(\theta)}{\lambda_v} \notin [\Delta \theta - \r_6 , \Delta \theta + \r_6]$. 
Moreover,
\[\Delta \l_s(\theta) = \l_s(\theta) - \l_s(\thetass) = \theta - \thetass = \Delta \theta\] and $\lambda_s = 1$. Thus, $\frac{\Delta \l_s(\theta)}{\lambda_s} =\Delta\theta$.
Therefore, the distance between the quotients of $s$ and $v$ is at least $r_6$. Since there are at most $\abs{V}$ nodes, by the pigeonhole principle there exists a gap between two neighboring quotients of size at least $\gap \coloneqq \frac{\r_6}{\abs{V}}$. In other words there exists a threshold $z \in  [\Delta \theta - \r_6 , \Delta \theta + \r_6]$ such that we can partition $V$ into two non-empty parts
$S$ and $V \setminus S$ such that 
\[\max_{v \in S} \tfrac{\Delta \l_v}{\lambda_v} \leq z - \gap \quad \text{ and } \quad \min_{v \in V \setminus S} \tfrac{\Delta \l_v}{\lambda_v} \geq z.\]
The three claims that follow all relate to this set $S$, and hold within the context of our contradiction assumption.

\begin{claim}
\label{claim:arcs}
No arc in $E^= \cup E^>$ leaves $S$, and no arc in $E^= \cup E^<$ enters $S$.
\end{claim}
\begin{nestedproof}

%
%
%
%
%
Assume there is an arc $e=vw \in E^= \cup E^>$ with $v \in S$, $w\not\in S$. Then we have
\[
\Delta \l_v (\theta) \leq \lambda_v (z - \gap) < \lambda_v z 
\quad \text{ and } \quad \Delta \l_w (\theta) \geq \lambda_w z.
\]

Since $z \geq \Delta\theta - \r_6$ we obtain
\begin{equation} \label{eq:lowerbound_for_Ebigger}
\Delta \l_w (\theta) - \Delta \l_v (\theta) > (\lambda_w - \lambda_v) z \geq (\lambda_w -\lambda_v) \Delta \theta -\r_6 (\lambda_w -\lambda_v).
\end{equation}
We apply \Cref{lem:hyperplane_dist_similar_to_thin_flow} and obtain
\begin{equation} \label{eq:upperbound_for_Ebigger}
\begin{aligned}
\Delta\l_w & (\theta) - \Delta \l_v (\theta) \leq (\lambda_w -\lambda_v) \Delta \theta -\r_7 (\lambda_w -\lambda_v) + 3\r_5.
\end{aligned}
\end{equation}

Combining \eqref{eq:lowerbound_for_Ebigger} and \eqref{eq:upperbound_for_Ebigger} together with the definition of $\r_7$, we obtain 
\[ \r_6 (\lambda_w -\lambda_v) > (\lambda_w -\lambda_v) \Delta \theta - (\Delta \l_w (\theta) - \Delta \l_v (\theta)) \geq  \r_7 (\lambda_w - \lambda_v)  - 3 \r_5 \geq \r_6 (\lambda_w - \lambda_v), \]
which is a contradiction.

The argument for an arc $e=vw \in E^= \cup E^<$ entering $S$ is completely analogous. 
%
%
\end{nestedproof}

It will now be convenient to introduce a new arc $ts$ to the network. 
We will view $ts$ as a free arc of capacity $\nu_{ts} = u_0$ (the transit time can be any positive value, this does not matter). 
We
extend the thin flow $(\lambda, y)$ by setting $y_{ts} = u_0$ and $\Delta \lx_{ts} = \Delta \theta u_0$. This way $y$ is a
circulation and $\Delta \lx_{ts}$ is an approximate circulation in which the flow conservation differs by at most $2
\delta \nu_\Sigma$ (cf.\ \Cref{lem:similar_to_thin_flow}~\ref{it:Delta_x_flow}). We denote the arc sets of this extended
network with a bar. So $\Einfbar = \Einf \cup \set{ts}$  and moreover $\bar\delta^-(S)$ (respectively $\bar \delta^+(S)$)  denotes all arcs of $\Etilde \cup \{ts\}$ that enter (respectively leave) $S$.

The following is a technical claim, ruling out the possibility that $S$ defines a trivial cut despite being a nontrivial subset.
\begin{claim} \label{claim:leaving_arc_exists}
We have $\bar\delta^-(S) \neq \emptyset$.
\end{claim}
\begin{nestedproof}
Due to the definition of a thin flow, it follows that every node $v$ can be reached from $s$ via arcs in $\Etilde \setminus E^<$. (The path can be defined backwards by choosing an arc $e= vw$ that satisfies the second equation of \eqref{eq:thin} with equality).
This already yields the claim if $s \notin S$; take any $v \in S$ and observe that there is a path from $s$ to $v$, and hence an arc entering $S$.

If on the other hand $s \in S$, pick any $v \notin S$. There exists an $s$-$v$-path in $\Etilde \setminus E^<$, and hence an arc $e \in \Etilde \setminus E^<$ leaving $S$. By \Cref{claim:arcs}, $e$ cannot be in $E^= \cup E^>$, and so $e \in \Einfbar$. 
But all resetting arcs in a thin flow are flow-carrying (see \cite[Lemma 4.2]{OSV21}), and so $y_e > 0$. 
But since $y$ is a circulation, there must be a flow-carrying arc entering $S$ as well.
\end{nestedproof}

Next, we bound the net flow out of $S$ from below.

\begin{claim} \label{claim:net_outflow}
The following holds:
\begin{enumerate} \renewcommand{\labelenumi}{(\roman{enumi})}\renewcommand\theenumi\labelenumi
\item For all $e = vw \in \bar\delta^-(S)$ we have $\Delta \lx_e \leq z y_e - \gap \lambda_w \nu_e + \nu_e \delta$.
\item For all $e = vw \in \bar\delta^+(S)$ we have $\Delta \lx_e \geq z y_e - \nu_e \delta$.
\end{enumerate}
\end{claim}

\begin{nestedproof}
	First, we consider arc $ts$. We have $\Delta x_{ts} = u_0 \Delta \theta = \Delta \ell_s \nu_{ts}$ and moreover $y_{ts}=u_0$. If  $ts \in \bar\delta^-(S)$, then $s \in S$ and thus, $\Delta \l_s \leq \lambda_s (z -\gap) =(z -\gap)$ thus $(i)$ holds. If $ts \in \bar\delta^+(S)$, then $s \not\in S$, implying that $\Delta \l_s\geq \lambda_s z =z$ and thus $(ii)$ holds.

\begin{enumerate} \renewcommand{\labelenumi}{(\roman{enumi})}\renewcommand\theenumi\labelenumi
\item By \Cref{claim:arcs} we have $e \in E^> \cup \Einfbar$ and therefore \Cref{lem:similar_to_thin_flow} \ref{it:Delta_x_upper_bound} implies
\[\Delta \lx_e \leq \Delta\l_w \nu_e + \nu_e \delta \leq (z - \gap) \lambda_w \nu_e + \nu_e \delta = z y_e - \gap \lambda_w \nu_e +  \nu_e \delta.\]
The second inequality follows since $w \in S$ and hence $\Delta \l_w \leq (z - \gap) \lambda_w$. The last equality follows from the fact that $(y,\lambda)$ is a thin flow and $e \in E^> \cup \Einfbar$ implying $y_e = \lambda_w \nu_e$.

\item
By \Cref{claim:arcs} either $e \in \Einfbar$ or $e \in E^<$.
For $e \in \Einfbar$ we can combine \Cref{lem:similar_to_thin_flow} \ref{it:Delta_x_lower_bound}, $w \not\in S$, and the fact that $(y,\lambda)$ is a thin flow in which $e$ is resetting, to obtain:

\[\Delta \lx_e \geq \Delta\l_w \nu_e - \nu_e \delta \geq  z \lambda_w \nu_e - \nu_e \delta= z y_e - \nu_e \delta. \]

For $e \in E^<$ it simply holds that $y_e=0$ and thus,
\[\Delta \lx_e \geq 0 \geq z y_e - \nu_e \delta.
\]
\end{enumerate}
\end{nestedproof}

We are now ready to derive our desired contradiction.
The thin flow of the steady state $y$ is an $s$-$t$-flow of value $u_0$ and therefore a circulation in the extended network. 
However, we will see that $\Delta \lx$, which is an approximate circulation due to \Cref{lem:similar_to_thin_flow}~\ref{it:Delta_x_flow}, sends too much flow out of the cut compared to how much enters.

On the one hand, by \Cref{claim:net_outflow}, the fact that $y$ is a circulation, \Cref{claim:leaving_arc_exists}, and the definition of $\gap$ (respective the right term of $\r_6$) we have 
\begin{align*}
\sum_{e \in \bar\delta^+(S)} \Delta \lx_e - \sum_{e \in \bar\delta^-(S)} \Delta \lx_e 
&\geq \sum_{e \in \bar\delta^+(S)} z y_e - \delta \capsum   -   \sum_{e \in \bar\delta^-(S)} \left( z y_e - \gap \lambda_w \nu_e  \right) - \delta \capsum  \\ 
&=  \sum_{e \in \bar\delta^-(S)} \gap \lambda_w \nu_e  - 2 \delta \capsum \\
&\geq 3 \delta \nu_\Sigma \abs{V} - 2 \delta \capsum \\
& \geq 2 \delta \nu_\Sigma \abs{V}.
\end{align*}

On the other hand, by \Cref{lem:similar_to_thin_flow}~\ref{it:Delta_x_flow} it holds that
\[\sum_{e \in \bar\delta^+(S)} \Delta \lx_e - \sum_{e \in \bar\delta^-(S)} \Delta \lx_e 
= \sum_{v \in S} \left(\sum_{e \in \bar\delta^+(v)} \Delta \lx_e - \sum_{e \in \bar\delta^-(v)} \Delta \lx_e\right)
\leq 2 \delta \nu_\Sigma \abs{S} < 2 \delta \nu_\Sigma \abs{V}.\]
We have obtained our desired contradiction.
\end{proof}

Now we are ready to complete the proof of \Cref{thm:close_to_Nash}.  
\begin{proof}[Proof of \Cref{thm:close_to_Nash}]
The statement follows for $\theta \in[\thetass, \thetass +\r_7]$ by \Cref{lem:close_for_a_bit}.
So assume $\theta> \thetass + \r_7$.  If $\ell$ is not central near $\thetass$, then statement follows by \Cref{lem:notcentral}. 
So assume that $\ell$ is central near $\thetass$. 
%
%
Applying \Cref{lem:quotients_close} yields
\[\abs{\l_v (\theta) - (\l_v(\thetass) + \lambda_v \Delta \theta)} \leq \lambda_v\r_6 \quad \text{ for all } v \in V.\]

Since $\lss(\theta) = \lss(\thetass) + \lambda \Delta \theta$, we have that for all $v \in V$,
\begin{align*}
\abs{\l_v(\theta) - \lss_v(\theta)} &\leq  \abs{\l_v(\thetass) +\lambda_v \Delta \theta - \left(\lss_v(\thetass) + \lambda_v \Delta \theta \right)} + \lambda_v \r_6\\
                                      &\leq \r_3 + \lambda_v \r_6 \underbrace{\leq}_{\eqref{eq:r8}} \reight.
    \end{align*}
%
\end{proof}


\subsection{Putting it all together}

We can now complete the proof of the main technical theorem \Cref{thm:technical}; as already observed, \Cref{thm:main-full} follows as well.

\begin{proof}[Proof of \Cref{thm:technical}]
Define 
\[\epsilon_j \coloneqq \reight + h(\r_2 + \r_3),\]
where recall that $h(r)$ bounds the maximum distance two equilibrium trajectories starting within distance $r$ can ever be apart, with $h(r) \to 0$ as $r \to 0$.


    We have already seen that for $\theta \in [\theta_0, \thetass]$, $\|\lstar(\theta) - \l(\theta)\| \leq \r_2 \leq \epsilon_j$.
    So consider $\theta \in [\thetass, \theta_1]$.
    Let $\lss$ be an equilibrium trajectory starting from a point in $\Iloc$ as close as possible to $\l(\thetass)$.
    By \Cref{lem:staying_close_to_steady_state_region}, $\|\lss(\thetass) - \l(\thetass)\| \leq \r_3$, and so by
    \Cref{thm:close_to_Nash}, $\|\l(\theta) - \lss(\theta)\| \leq \reight$.
    But since $\|\lstar(\thetass) - \l(\thetass)\| \leq \r_2$, $\|\lstar(\thetass) - \lss(\thetass)\| \leq \r_2 + \r_3$.  
    Applying \Cref{thm:continuity_of_Nash}, it follows that $\|\lstar(\theta) - \lss(\theta)\| \leq h(\r_2 + \r_3)$, and hence that $\|\l(\theta) - \lstar(\theta)\| \leq \reight + h(\r_2+ \r_3) \leq \epsilon_j$.
\end{proof}

\section{Implications for approximate equilibria and packet models}\label{sec:applications}

\subsection{Approximate equilibria}

The goal is to show the following:

\epsilonequiisdeltatrajectory*

To do so we prove that $\epsilon$-equilibria satisfy an approximate Lipschitz-property; see \Cref{thm:epsilon_lipschitz}.
For this we start by showing that the potential mass that can overtake an agent $a$ is bounded. 
Recall that in an $\epsilon$-equilibrium, we have $d_t(a) \leq \l_t(\entry{a}) + \epsilon$ for all $a \in \agents$ where $\l$ and $d$ are the earliest arrival labels and departure time functions.

\begin{lemma}
\label{lem:not_too_much_flow_overtakes}
Let $\stratprof$ be an $\epsilon$-equilibrium for some $\epsilon > 0$.
Fix an agent $\a$. At most $3\epsilon\capsum$ particles can potentially overtake $\a$ at node $v \in P(\a)$, i.e., the measure of 
\[Q \coloneqq \Set{ \b \in \agents \mid \entry{\b} > \entry{a} \text{ and } \l_v(\entry{\b}) < \dep_v(\a) }\]
is bounded by $3\epsilon\capsum$. Recall that $\capsum = \sum_{e\in E} \nu_e + u_0$.
\end{lemma}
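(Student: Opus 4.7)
The plan is to bound $\mu(Q)$ by forcing every agent in $Q$ to arrive at the sink $t$ within a short time window of length $O(\epsilon)$, then applying the capacity constraint on arrivals at $t$. The core step is to show that, for every $b \in Q$, $\l_t(\entry{b}) \leq d_t(a)$. To see this, I would consider a hypothetical probe particle entering the network at time $\entry{b}$: by definition of $Q$, this probe can reach $v$ by the earliest arrival time $\l_v(\entry{b}) < d_v(a)$, so from $v$ onward I have the probe follow the suffix of $a$'s path without waiting. Because $\xi \mapsto \xi + z_e(\xi)/\nu_e$ is non-decreasing on every arc $e$ (FIFO monotonicity), a straightforward induction along the suffix shows that the probe stays strictly ahead of $a$ at every subsequent node and consequently arrives at $t$ no later than $d_t(a)$.

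Combining $\l_t(\entry{b}) \leq d_t(a)$ with the $\epsilon$-equilibrium property applied to $a$ gives $\l_t(\entry{b}) \leq \l_t(\entry{a}) + \epsilon$. The same FIFO monotonicity propagates through the Bellman equations~\eqref{eq:bellman} to make $\l_t$ non-decreasing, so $\l_t(\entry{b}) \geq \l_t(\entry{a})$ since $\entry{b} > \entry{a}$. Invoking the $\epsilon$-equilibrium property for $b$ together with the trivial inequality $d_t(b) \geq \l_t(\entry{b})$, I obtain
\[ d_t(a) - \epsilon \;\leq\; \l_t(\entry{a}) \;\leq\; \l_t(\entry{b}) \;\leq\; d_t(b) \;\leq\; \l_t(\entry{b}) + \epsilon \;\leq\; d_t(a) + \epsilon, \]
so every agent of $Q$ actually arrives at $t$ inside a window of length $2\epsilon$. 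All arrivals at $t$ occur through arcs in $\delta^-(t)$, whose total capacity is $\sum_{e \in \delta^-(t)} \nu_e \leq \capsum$, so the mass arriving in any such window is at most $2\epsilon\capsum \leq 3\epsilon\capsum$, giving $\mu(Q) \leq 3\epsilon\capsum$ as claimed.

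The main obstacle I anticipate is making the FIFO path-following argument in the first step fully rigorous in the waiting-extended model. Agent $a$ may wait at intermediate nodes of its path, and because the strategy profile allows atoms of particles to enter an arc at the same moment, one has to be careful about tie-breaking when the probe joins an existing queue. The strict inequality $\l_v(\entry{b}) < d_v(a)$ at $v$ eliminates any tie-breaking issue between the probe and $a$ at the first arc, and the monotonicity of $\xi \mapsto \xi + z_e(\xi)/\nu_e$ then carries this strict lead through each subsequent arc of the suffix, independently of any waiting $a$ may do at intermediate nodes; this is what allows us to conclude $\l_t(\entry{b}) \leq d_t(a)$ and drive the rest of the argument.
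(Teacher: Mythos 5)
Your approach is essentially the same as the paper's: you show $\l_t(\entry{\b}) \leq d_t(\a)$ by having a probe particle follow $\a$'s suffix, deduce that $d_t(\b)$ lands in a window of length at most $2\epsilon$, and then bound the mass that can pass through $\delta^-(t)$. The key step and the FIFO/monotonicity reasoning match the paper exactly. (One small imprecision: after the first arc the probe's lead is only $\leq$, not strict, but that is all the induction needs.)

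There is, however, one genuine oversight in the final step, and it is precisely the point where the paper is more careful. Your chain of inequalities constrains $d_t(\b)$, the \emph{departure} time from $t$, to a window of length $2\epsilon$, but you then bound $\mu(Q)$ by ``the mass arriving at $t$ in this window'' using the capacity of $\delta^-(t)$. These are not the same: the model allows waiting at $t$, so an agent of $Q$ may have entered $t$ via an incoming arc \emph{before} the window and waited until its $d_t$ value falls inside the window. Such agents are not captured by a capacity bound on the window itself. The fix (which is what the paper does) is to note that in an $\epsilon$-equilibrium any agent's waiting time at $t$ is at most $\epsilon$ (since $d_t(\b) \leq \l_t(\entry{\b}) + \epsilon$ and the arrival time at $t$ is at least $\l_t(\entry{\b})$), so the relevant \emph{arrival} window has length at most $3\epsilon$, and the capacity bound then gives $\mu(Q) \leq 3\epsilon\capsum$. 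Your stated bound of $2\epsilon\capsum$ is therefore not justified as written; it happens to be below the lemma's claimed $3\epsilon\capsum$, but the derivation omits the extra $\epsilon\capsum$ term coming from agents who arrived at $t$ shortly before the window and waited.
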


\begin{proof}
Consider the following set:
\[P \coloneqq \set{\b \in \agents | \dep_t(\b) \in [\l_t(\entry{\a}), \dep_t(\a) + \epsilon]}.\]
We have $Q \subseteq P$, by the following observations.
\begin{itemize}
\item No $\b \in Q$ can arrive before $\l_t(\entry{\a})$ by monotonicity of $\l$-labels.
\item No $\b \in Q$ can arrive after $\dep_t(\a) + \epsilon$. 
Observe that $\b$ can arrive
before $\dep_t(\a)$; i.e., $\l_t(\entry{\b}) \leq \dep_t(\a)$.
To see this, observe that $b$ could switch to path $\P(a)$, without any waiting at nodes ($b$ would be quicker at node $v$ than $\a$ then could follow the $v$-$t$ path of $\a$; while canceling cycles if needed).
Thus, by the $\epsilon$-equilibrium property the actual arrival time is at most $\dep_t(a) + \epsilon$.
\end{itemize}
Now, we aim to bound the mass of $P$.
As $\dep_t(\a) \leq \l_t(\entry{\a}) + \epsilon$, we have $[\l_t(\entry{a}), d_t(a) + \epsilon] \subseteq [\l_t(\entry{a}), \l_t(\entry{a}) + 2\epsilon]$.  Hence, particles with a mass of at most $2\epsilon \cdot \capsum$ can enter $t$ within this interval. Furthermore, particles with a mass of at most $\epsilon \cdot \capsum$ could have entered $t$ within $[\l_t(\entry{\a}) - \epsilon,\l_t(\entry{\a})]$ and waited at $t$ until $\l_t(\entry{\a})$ or later (since the maximal waiting time in $t$ is bounded by $\epsilon$.).  This implies that $\mu(Q) \leq \mu(P) \leq 3 \epsilon \capsum$.
\end{proof}

Similarly, the mass of flow that an agent can overtake is bounded by the same term.

\begin{lemma}\label{lem:bounded_overtaking}
Let $\stratprof$ be an $\epsilon$-equilibrium for some $\epsilon >0$.
Fix an agent $\a$. The flow mass that agent $\a$ can overtake at any $v \in P(\a)$ is bounded by $3\epsilon\capsum$, i.e., the measure of 
\[
R \coloneqq \Set{ \b \in \agents \mid \entry{\b} < \entry{\a}  \text{ and } \dep_v(\b) > \l_v(\entry{\a})}
\]
is bounded by $3\epsilon\capsum$.
\end{lemma}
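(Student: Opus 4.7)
My plan is to mirror the argument of the preceding \Cref{lem:not_too_much_flow_overtakes}, with the roles of ``late-entering overtakers'' and ``early-entering overtaken'' exchanged. Concretely, I will show that $R$ is contained in the same set $P \coloneqq \set{\b \in \agents | \dep_t(\b) \in [\l_t(\entry{\a}), \dep_t(\a) + \epsilon]}$ already introduced in the proof of \Cref{lem:not_too_much_flow_overtakes}; once this containment is established, the bound $\mu(R) \leq \mu(P) \leq 3\epsilon\capsum$ follows at once from the mass calculation carried out there for $P$, with no new work needed.

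The upper bound $\dep_t(\b) \leq \dep_t(\a) + \epsilon$ for $\b \in R$ is the easy direction. Since $\stratprof$ is an $\epsilon$-equilibrium, $\dep_t(\b) \leq \l_t(\entry{\b}) + \epsilon$. Combining $\entry{\b} < \entry{\a}$ with the monotonicity of the earliest arrival label $\l_t$ yields $\l_t(\entry{\b}) \leq \l_t(\entry{\a}) \leq \dep_t(\a)$, and the upper bound follows.

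The lower bound $\dep_t(\b) \geq \l_t(\entry{\a})$ is where I expect the main difficulty, because the $\epsilon$-equilibrium property gives us control of arrival times only at the sink, not at the intermediate node $v$. My plan is a FIFO comparison along the $v$-to-$t$ suffix $P'$ of $\b$'s path. Since $\dep_v(\b) > \l_v(\entry{\a})$, a hypothetical particle that enters the first arc of $P'$ at the earlier time $\l_v(\entry{\a})$ and traverses $P'$ without waiting is always at or ahead of $\b$; this is obtained by iterating the arc-FIFO property that exit time from an arc is nondecreasing in entry time (a direct consequence of \eqref{eq:queues_operate_at_capacity}), together with the observation that $\b$'s own waiting at intermediate nodes can only add delay. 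In particular, the hypothetical reaches $t$ no later than $\dep_t(\b)$. To finish, I use that an actual particle departing the source at time $\entry{\a}$ can reach $v$ by time $\l_v(\entry{\a})$ and then traverse $P'$ (canceling any cycles that form with the chosen $s$-to-$v$ prefix), exhibiting a feasible $s$-$t$ strategy whose arrival at $t$ is at most that of the hypothetical; hence $\l_t(\entry{\a})$ lies below the hypothetical's arrival time, yielding $\dep_t(\b) \geq \l_t(\entry{\a})$ and completing the containment $R \subseteq P$.
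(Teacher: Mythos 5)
Your proof is correct, and it genuinely differs from the paper's. The paper's own proof is a short reflection through \Cref{lem:not_too_much_flow_overtakes}: assuming $\mu(R) > 3\epsilon\capsum$, it picks a ``deep'' agent $\c \in R$ for which $S \coloneqq \{\b \in R : \entry{\b} > \entry{\c}\}$ still has measure exceeding $3\epsilon\capsum$, and then observes that every $\b \in S$ satisfies $\entry{\b} > \entry{\c}$ and $\l_v(\entry{\b}) \leq \l_v(\entry{\a}) < \dep_v(\c)$, so $S$ lies in the overtaking set $Q$ of \Cref{lem:not_too_much_flow_overtakes} applied to $\c$ --- a contradiction. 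This needs no FIFO reasoning at all; the whole burden is delegated to the earlier lemma's statement. You instead argue the containment $R \subseteq P$ directly (the upper bound $\dep_t(\b) \leq \dep_t(\a)+\epsilon$ via the $\epsilon$-equilibrium property and monotonicity, the lower bound $\dep_t(\b) \geq \l_t(\entry{\a})$ via a probe following $\b$'s $v$-$t$ suffix after reaching $v$ at time $\l_v(\entry{\a})$, cycles canceled) and then reuse the mass bound $\mu(P) \leq 3\epsilon\capsum$ established inside the earlier proof. Both are valid; the paper's is shorter and more uniform (one containment-into-$P$ argument, one pigeonhole), while yours makes the symmetric containment $R \subseteq P$ explicit at the cost of carrying out a FIFO induction along the suffix, where you correctly rely on monotonicity of arc exit times in entry times (a consequence of \eqref{eq:queues_operate_at_capacity}), the zero-mass nature of the probe, and the fact that shortcutting cycles cannot delay arrival.
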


\begin{proof}
Suppose that $\mu(R) > 3\epsilon\capsum$.
Look at some agent $\c \in R$ such that the measure of 
\[S \coloneqq \set{\b \in R \mid \entry{\b} > \entry{\c}}\]
is bigger than $ 3\epsilon\capsum$.
All agents that start between $\a$ and $\c$, i.e., after $\c$ and before $\a$ (so in particular all agents of set $S$) can overtake $\c$ by monotonicity of $\l_v$. 
This is a contradiction to \Cref{lem:not_too_much_flow_overtakes}.
\end{proof}

Finally, we prove that $\epsilon$-equilibria satisfy an approximate form of Lipschitz continuity.
\begin{theorem}
\label{thm:epsilon_lipschitz}
     There exist constants $\TrajLipDelta$ and $\jump$ (depending only on the instance) so that the following holds.
Let $\stratprof$ be an $\epsilon$-equilibrium for some $\epsilon >0$, and let $\l$ be the associated earliest arrival time labels.
Then for any vertex $v \in V$ and $\theta_2 > \theta_1$ we have
\[\l_v(\theta_2) \leq \l_v(\theta_1) + \TrajLipDelta (\theta_2 - \theta_1) + \jump \epsilon.\]
\end{theorem}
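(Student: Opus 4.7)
The plan is to prove the bound by induction along an ``earliest-arrival'' chain of predecessors. Fix $v$ and $\theta_1 < \theta_2$, and let $e = uv$ be an in-arc realizing the minimum in the Bellman equation~\eqref{eq:bellman} at time $\theta_1$, so that $\l_v(\theta_1) = \l_u(\theta_1) + \tau_e + z_e(\l_u(\theta_1))/\nu_e$. Applying~\eqref{eq:bellman} as an inequality at time $\theta_2$ with this same arc and subtracting yields
\[
\l_v(\theta_2) - \l_v(\theta_1) \;\leq\; \bigl(\l_u(\theta_2) - \l_u(\theta_1)\bigr) \;+\; \tfrac{1}{\nu_e}\bigl(z_e(\l_u(\theta_2)) - z_e(\l_u(\theta_1))\bigr).
\]
Since a queue volume can only grow through inflow, $z_e(\l_u(\theta_2)) - z_e(\l_u(\theta_1)) \leq F_e^+(\l_u(\theta_2)) - F_e^+(\l_u(\theta_1))$, which equals the $\mu$-measure of the set of agents $b$ with $e \in \P(b)$ and $d_u(b) \in (\l_u(\theta_1), \l_u(\theta_2)]$.

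I would bound this inflow by partitioning on $\entry{b}$. Agents with $\entry{b} \in [\theta_1, \theta_2]$ contribute at most $u_0(\theta_2 - \theta_1)$. Agents with $\entry{b} < \theta_1$ and $d_u(b) > \l_u(\theta_1)$ are precisely those a ``probe'' entering at time $\theta_1$ via $u$ would overtake at $u$, so \Cref{lem:bounded_overtaking} bounds their mass by $3\epsilon\capsum$. Symmetrically, agents with $\entry{b} > \theta_2$ and $d_u(b) \leq \l_u(\theta_2)$ overtake a probe at entry time $\theta_2$, so \Cref{lem:not_too_much_flow_overtakes} gives mass at most $3\epsilon\capsum$. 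Combining these three contributions,
\[
\l_v(\theta_2) - \l_v(\theta_1) \;\leq\; \bigl(\l_u(\theta_2) - \l_u(\theta_1)\bigr) \;+\; \tfrac{u_0}{\nu_e}(\theta_2 - \theta_1) \;+\; \tfrac{6\capsum}{\nu_e}\,\epsilon.
\]
Iterating by choosing the Bellman-minimizing predecessor of $u$, and so on, produces a chain $s = v_0, v_1, \ldots, v_k = v$; since $\tau_e > 0$ and $z_e \geq 0$ force $\l_{v_{i-1}}(\theta_1) < \l_{v_i}(\theta_1)$, the chain is acyclic, hence $k \leq |V|-1$. Using the trivial base case $\l_s(\theta) = \theta$ and summing the telescoping bound, this delivers the theorem with, for instance, $\TrajLipDelta := 1 + |V|\,u_0/\capmin$ and $\jump := 6|V|\,\capsum/\capmin$, both depending only on instance parameters.

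The main obstacle is a technical one: \Cref{lem:not_too_much_flow_overtakes} and \Cref{lem:bounded_overtaking} are stated for an actual agent $\a$, whereas I want to invoke them at specific times $\theta_1, \theta_2$, even though the $\entry{\cdot}$-fiber over any single time is a $\mu$-null set and no actual agent need have its path through $u$. I would handle this by a standard limiting argument: if no positive-mass set of agents through $u$ has entry time near $\theta_1$, then the set $\{b : \entry{b} < \theta_1,\ d_u(b) > \l_u(\theta_1),\ e \in \P(b)\}$ is itself null; otherwise I can pick an actual $\a$ with $u \in \P(\a)$ and $\entry{\a} \uparrow \theta_1$, apply \Cref{lem:bounded_overtaking} to $\a$ (noting that $\l_u(\entry{\a}) \leq \l_u(\theta_1)$ by monotonicity, so the relevant set sits inside the lemma's $R$), and observe that the leftover slab $\{b : \entry{b} \in [\entry{\a}, \theta_1]\}$ has mass $u_0(\theta_1 - \entry{\a}) \to 0$. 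A symmetric argument handles the $\entry{b} > \theta_2$ bucket, so the $O(\epsilon)$ bounds survive in the limit and the induction goes through cleanly.
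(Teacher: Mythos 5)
Your proof follows essentially the same approach as the paper: walk along a shortest $s$\nobreakdash-$v$ chain (the paper fixes a shortest path and inducts forward; you construct it by iterating Bellman predecessors backward, which is the same object), control $\ell_v(\theta_2)-\ell_v(\theta_1)$ node-by-node, bound the queue growth on each arc by the arc inflow over $(\ell_u(\theta_1),\ell_u(\theta_2)]$, and then bound that inflow by bucketing on $\vartheta(b)$ and invoking the overtaking lemma. The paper's proof uses two buckets: $\vartheta(b)<\theta_1$ handled by \Cref{lem:bounded_overtaking}, and $\vartheta(b)\in[\theta_1,\theta_2]$ bounded by $u_0(\theta_2-\theta_1)$. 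You add a third bucket $\vartheta(b)>\theta_2$ and claim it is controlled by \Cref{lem:not_too_much_flow_overtakes}; this step is not quite right as stated. For such $b$ one has $d_u(b)\geq \ell_u(\vartheta(b))\geq\ell_u(\theta_2)$, so $d_u(b)\leq\ell_u(\theta_2)$ forces $\ell_u(\vartheta(b))=\ell_u(\theta_2)$, which means $b$ fails the \emph{strict} inequality $\ell_u(\vartheta(b))<d_u(a)$ defining $Q$ in the lemma; the set you want is a tie-set that $Q$ does not cover. The paper implicitly disposes of this bucket via the remark in \Cref{sec:delta-trajectories} that all particles with $d_v(a)=\ell_v(\theta)$ have $\vartheta(a)\leq\theta$, which makes your third bucket a null set and your bound on it vacuously true; but the lemma citation itself does not deliver the bound. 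Your more explicit limiting argument for applying the overtaking lemmas to a ``probe'' is a nice addition that the paper handles only informally (``a hypothetical agent''), and your constants ($6$ in place of $3$) are of course fine since the theorem only asserts existence of instance-dependent constants.
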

\begin{proof}
    We will prove the theorem with $\TrajLipDelta \coloneqq \TrajLip  \abs{V} = \max\{ 1, \tfrac{u_0}{\capmin}\} \cdot \abs{V}$ (recall $\TrajLip$ is the Lipschitz constant for equilibrium trajectories) and 
    $\jump \coloneqq  3 \TrajLipDelta \capsum$.

Let $P = (s\!=\!v_1, v_2, \ldots, v_k\!=\!v)$ be a shortest $s$-$v$-path, i.e., a path 
with $\l_{v_{i+1}}(\theta_1) = \l_{v_{i}}(\theta_1) + q_{e_i}(\theta_1) + \tau_{e_i}$, where $e_i = v_i v_{i+1}$ for all $i \in [k-1]$.
Inductively, we show that
\[\l_{v_i}(\theta_2) \leq \l_{v_i}(\theta_1) + C_i,\]
where $C_i \coloneqq  i\TrajLip(\theta_2 - \theta_1 + 3 \epsilon \capsum)$. 

For the induction base case, consider $i=1$. It holds that
\[\l_s(\theta_2) \leq \l_s(\theta_1) +(\theta_2 - \theta_1) <  \l_s(\theta_1) + \TrajLip(\theta_2 - \theta_1 + 3 \epsilon \capsum)=  \l_s(\theta_1) + C_1.\]
So assume the statement holds for $i$.
We have
\begin{align*}
\l_{v_{i+1}}(\theta_2) - \l_{v_{i+1}}(\theta_1) &\leq \l_{v_{i}}(\theta_2) - \l_{v_{i}}(\theta_1) + \lq_{e_i}(\theta_2) - \lq_{e_i}(\theta_1) +\tau_{e_i} - \tau_{e_i}\\
& \leq C_{i} + \tfrac{1}{\nu_e}\bigl(u_0(\theta_2 - \theta_1) + 3\epsilon\capsum\bigr)\\
&\leq C_{i+1}.
\end{align*}
To bound the difference of the flow mass in the queues in the second inequality, use that by \Cref{lem:bounded_overtaking} the mass of particles that could be overtaken by an (hypothetical) agent that enters the network at $\theta_1$ and travels without waiting along $P$ is bounded by $3\epsilon\nu_\Sigma$. Only this flow mass and the flow that entered the network between $\theta_1$ and $\theta_2$ can contribute to the difference of the queues.

As $P$ is a simple path we have $k \leq \abs{V}$, which finishes the proof.
\end{proof}

Finally, we can prove that every $\epsilon$-equilibrium is a $\delta$-trajectory for appropriate~$\delta$.

\begin{proof}[Proof of \Cref{thm:epsilon_equi_is_delta_trajectory}]
Choose 
\[\delta \coloneqq 3 \epsilon  \TrajLipDelta \tfrac{\capsum}{u_0} + 3 \epsilon \TrajLipDelta \capsum\]
and consider an agent $\a$ and a node $v$.
For $\theta_2 \coloneqq \entry{\a} + 3 \epsilon \tfrac{\capsum}{u_0}$ it holds that the total mass of flow entering the network within $[\entry{\a}, \theta_2]$ is given by $3 \epsilon \capsum$.
Due to \Cref{lem:not_too_much_flow_overtakes} and the monotonicity of $\l_v$ we have $d_v(\a) \leq \l_v(\theta_2)$ since otherwise the measure of overtaking particles would be strictly larger than $3 \epsilon \capsum$.

Finally, using the approximate Lipschitzness derived in \Cref{thm:epsilon_lipschitz}, we obtain
\begin{align*}
\dep_v(\a) &\leq \l_v(\theta_2) \\
&\leq \l_v(\entry{\a})  + \TrajLipDelta (\theta_2 - \entry{\a}) + \jump \epsilon\\
&= \l_v(\entry{\a})  + \TrajLipDelta \cdot 3 \epsilon   \capsum \tfrac{1}{u_0} + 3 \epsilon \TrajLipDelta \capsum \\
& = \l_v(\entry{\a}) + \delta.
\end{align*}
\end{proof}

\subsection{Packet models}
\label{subsec:packet_model}



We consider a packet model inspired by the one used in \cite{hoefer2011competitive}.
Given a network, where very arc~$e$ has a transit time $\tau_e > 0$ and a capacity $\nu_e > 0$. 
Moreover, give a set of infinitely many packets (indexed by $\N$), each of size $\beta$. 
(While we assume all packets to have the same size, it would be straightforward to extend the results to a model where $\beta$ is an upper bound on the size of any packet.)
The packets start strictly ordered in front of the source $s$ and enter then one by one with a given network inflow rate $u_0$ into the network. In other words, packet $i \in \N$ leaves $s$ at time $\beta \cdot i / u_0$.
Note that this is a small deviation from the model in \cite{hoefer2011competitive} where all packets start at time $0$. 
For the transition from their model to ours one can imagine adding an artificial source $\hat s$ and arc $\hat s s$ with a capacity of $u_0$.

The goal of each packet is to reach a common sink $t$ as early as possible. 
A strategy profile in this game is a simple $s$-$t$ path for every packet.

Given a strategy profile, a packet travels along paths by traveling along the arcs of the path in the appropriate order. 
For each arc the packet first needs to be processed for $\beta/\nu_e$ time units. 
If a previous packet is still being processed when a new packet arrives, the new packet joins a waiting queue. The ordering in the queue is with respect to entrance time into the arc.
If multiple packets enter the arc at exactly the same time, tie-breaking is based on packet index; the packet that departed the source first has priority.
After being processed, a packet traverses the arc for $\tau_e$ time units (note that this is different from \cite{hoefer2011competitive} where $\tau_e=0$ for all edges).
It can then start to traverse the next arc of the path. 
Note that time is not discrete in this model; packets may enter or leave an arc at arbitrary moments in time.

A strategy profile describes an equilibrium if no packet can change its path and improve its arrival time at the sink $t$.

\paragraph{Expressing the packet model in our flow over time model.}
In the following we describe how to define for every packet model strategy profile a corresponding strategy profile in our flow over time model.
Each packet is replaced by an infinite collection of agents with a total measure of $\beta$, and each of these agents will take the same path as the packet did. 
A bit more care will be needed in defining waiting times for the agents, in order to match precisely the behavior of the packet model.

We say that all agents $a \in \agents$ with $\entry{a} \in ((i-1)\beta, i\beta]$ belong to packet $i$, for $i \in \N$. In additional all agents with $\entry{a} = 0$ belong to packet $1$.

Given some strategy profile $\packetprof$ in the packet model, the corresponding strategy profile $\stratprof = (\P, \Wait)$ in the flow over time model is defined as follows.
We define $\P(a)$ to be the path of packet $i$ in $\packetprof$, where $i$ indexes the packet that $a$ belongs to.
We view agent $a$ as having a position in the packet, given by $x_{a} = \entry{a} - (i-1)\beta$.
Then for any arc $e=uv \in \P(a)$, we define $\Wait(a)_v = (\beta-x_{a}) \tfrac{1}{\nu_e}$; and furthermore, we define $\Wait(a)_s = (\beta-x_{a}) \tfrac{1}{u_0}$.

With this definition, all agents belonging to some packet $i$ enter an arc at the same time, and at the same time as the packet entered the arc. 
In other words the departure time of an agent coincides with the departure time of the packet it belongs to. 
At $s$ this is immediate: waiting times are chosen precisely so that departure times correspond.
For other nodes, it follows inductively based on the departure time of a packet.
Suppose the correspondence holds for all packets departing from any node strictly before some time $\xi$, and consider a packet $i$ that takes an arc $uv$ and departs $v$ at time $\xi$.
All the constituent particles of the packet departed from $u$ in $\stratprof$ at precisely the time that packet $i$ departed from $u$.
These particles will be delayed by particles of earlier packets potentially, and then stream out of the arc in an interval of time of length precisely $\beta/\nu_e$. 
The arrival time of the last particle in the packet will be precisely as intended, and earlier particles will wait precisely the correct amount of time.
\pagebreak[4]
\paragraph{Packet equilibria as strict $\delta$-equilibria.}
\packets*

\begin{lemma}
\label{lem:common_lemma}
Consider some 
strategy profile $\sigma$ in the packet model, with corresponding flow-over-time strategy profile $\stratprof$.
Let $\quickdep_v(\a)$ be the hypothetical departure time from $v \in \Path(\a)$ if $\a$ were to maintain its path $\Path(\a)$, but \emph{not} wait at any node. 
Then for any agent $\a$, 
\[
    \dep_t(\a) \leq \quickdep_t(\a) + O(\beta).
\]
\end{lemma}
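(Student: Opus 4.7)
The plan is to exploit the synchronizing effect of the waiting times in $\stratprof$. Let $a^* \in \agents$ denote the ``last'' agent of the packet containing $\a$, i.e., the agent with $x_{a^*} = \beta$. By definition, $\Wait_s(a^*) = 0$ and $\Wait_v(a^*) = 0$ at every $v \in V(\Path(a^*))$. A direct computation along the path---essentially the one that motivates the choice $\Wait_v(a) = (\beta - x_a)/\nu_e$---shows inductively that $\dep_v(\a) = \dep_v(a^*)$ for every $v \in \Path(\a)$: if all agents of the packet enter an arc $e = uv$ at a common time $\xi$ when the queue carries volume $Z$, the agent with position $x_a$ exits the queue at $\xi + (Z - \beta + x_a)/\nu_e$, and the subsequent wait $(\beta - x_a)/\nu_e$ at $v$ realigns them to the common departure time $\xi + Z/\nu_e + \tau_e$. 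Taking $v=t$ yields $\dep_t(\a) = \dep_t(a^*)$, and since $a^*$ never waits, $\dep_t(a^*) = \quickdep_t(a^*)$. So $\dep_t(\a) = \quickdep_t(a^*)$, reducing the lemma to showing $\quickdep_t(a^*) - \quickdep_t(\a) = O(\beta)$.

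For this reduced claim, I would induct along $\Path(\a) = (s = v_1, \ldots, v_k = t)$ on the quantity $\Delta_j \coloneqq \quickdep_{v_j}(a^*) - \quickdep_{v_j}(\a)$. The base case is $\Delta_1 = \entry{a^*} - \entry{\a} = O(\beta)$, since entry times within a single packet span a window of length at most $\beta$. For the inductive step, in the hypothetical defining $\quickdep$ the agent $\a$ (and separately $a^*$) enters each arc as a measure-zero probe and incurs the queue delay $\lq_{e_j}$ at the moment of entry (plus a tie-breaking correction for simultaneous atomic arrivals of other packets, bounded by $\beta/\nu_{e_j}$). Hence
\[ \Delta_{j+1} \leq \Delta_j + \bigl(\lq_{e_j}(\quickdep_{v_j}(a^*)) - \lq_{e_j}(\quickdep_{v_j}(\a))\bigr) + O(\beta). \]

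The technical core is bounding the change in $\lq_{e_j}$ over a time shift of size $\Delta_j$. Since $\stratprof$ comes from a packet routing with packet size $\beta$, all inflow into any arc arrives as atoms of mass $\beta$, and over an interval of length $\Delta_j$ at most $u_0 \Delta_j/\beta + O(1)$ such atoms can arrive, contributing total mass at most $u_0 \Delta_j + O(\beta)$; the at-capacity outflow removes at most $\nu_{e_j} \Delta_j$. Thus $\lq_{e_j}$ grows by at most $(u_0/\nu_{e_j}) \Delta_j + O(\beta/\nu_{e_j})$, which yields the recursion $\Delta_{j+1} \leq (1 + u_0/\capmin)\Delta_j + O(\beta/\capmin)$. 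Iterating this over the at most $|V|$ nodes of $\Path(\a)$ produces $\Delta_t = O(\beta)$ with an instance-dependent hidden constant, as required.

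The main obstacle I expect is the careful treatment of the discrete tie-breaking corrections in $\lq_{e_j}$, which become nontrivial whenever another packet enters $e_j$ at precisely the moment of the probe's entry. Each such correction is bounded by a single atom's contribution $\beta/\nu_{e_j}$ and is therefore absorbed into the $O(\beta)$ additive term in the recursion, but the verification has to be carried out at each arc and for each possible order of atomic events. One also has to check that treating $\a$ or $a^*$ as a measure-zero probe in the hypothetical does not affect the queue experience of other agents, which follows immediately from the fact that the outcome of $\stratprof$ depends only on measurable sets of agents.
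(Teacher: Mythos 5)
Your overall strategy is a legitimate reformulation of the paper's argument. By packet synchronization you observe that $\dep_t(\a) = \dep_t(a^*) = \quickdep_t(a^*)$ (where $a^*$ is the last agent of the packet, which never waits), and reduce the lemma to bounding the drift $\Delta_j = \quickdep_{v_j}(a^*) - \quickdep_{v_j}(\a)$ between two \emph{no-waiting} probes inductively along the path. This tracks essentially the same quantity as the paper's direct comparison of $\dep_{v_i}(\a)$ with $\quickdep_{v_i}(\a)$ (since those are equal to $\quickdep_{v_i}(a^*) - \quickdep_{v_i}(\a)$ by synchronization), but the perspective of comparing two probes with slightly shifted start times is arguably cleaner. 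The base case, the structure of the recursion, and the treatment of tiebreaking as an $O(\beta)$ additive error are all fine.

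There is, however, a genuine gap in the core estimate of the inductive step. You claim that over an interval of length $\Delta_j$ at most $u_0\Delta_j/\beta + O(1)$ packet atoms arrive at $v_j$, giving a mass bound of $u_0\Delta_j + O(\beta)$. This is false at internal nodes. The inflow rate at $v_j$ is governed by the total capacity of the arcs entering $v_j$, namely $\sum_{e\in\delta^-(v_j)}\nu_e$, which can be arbitrarily larger than $u_0$: flow can ``bunch up'' at internal nodes (for example, two $s$-to-$v_j$ routes of different lengths can deliver their contents simultaneously), so the network inflow rate $u_0$ says nothing about the arrival rate at $v_j$. Consequently your recursion $\Delta_{j+1}\leq (1+u_0/\capmin)\Delta_j + O(\beta/\capmin)$ is not justified. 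The paper's Claim 5.7 uses the correct bound $(\Delta_j+\epsilon)\sum_{e\in\delta^-(v_j)}\nu_e$ for the arriving mass (here $\epsilon = \beta/\capmin$ accounts for the bounded waiting of the agents physically present, which affects the actual queue even though your probes don't wait), yielding the recursion $C_{i+1}\leq (2C_i+\epsilon)\nu_\Sigma/\nu_{\min}$ and in turn a bound exponential in $|V|$ but still $O(\beta)$. With $u_0$ replaced by the incoming capacity your argument goes through; as written it does not.
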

\begin{proof}
    Let $\epsilon := \beta / \capmin$ in what follows. Note that $\epsilon$ is an upper bound on the time that any particle waits at any node, just from the description of the way $\stratprof$ is constructed.

We argue along the nodes $s=v_1, v_2, \dots, v_k =t$ of path $P := \Path(\a)$. We show that while the difference between $\dep$ and $\quickdep$ may increase along the path, we can bound the increase.

\begin{claim}
\label{claim:error_along_path}
For each $i$,
\[\dep_{v_i} (\a ) \leq \quickdep_{v_i}(\a) + C_i,\]
where $C_1 \coloneqq \epsilon$ and $C_i \coloneqq \frac{(C_{i-1} + \epsilon) \cdot \sum_{e \in \delta^-(v_{i-1})} \nu_e }{\nu_{v_{i-1} v_i}} + C_{i-1}$ for all $i \geq 2$. 
\end{claim}
\begin{nestedproof}
    Since the waiting time of each agent at each node is bounded by $\epsilon$, this applies to the source $s = v_1$ and thus $\dep_{v_1}(\a) \leq \quickdep_{v_1}(\a) + \epsilon.$

\medskip

Now suppose the claim holds for some $i$. 
We bound the delay $\a$ experiences at $v_{i+1}$ compared to traveling without waiting by the additional delay the agent gets compared to the delay at $v_i$, by bounding the mass of flow on arc $v_iv_{i+1}$ that delays agent $\a$, but not a probe particle traveling along $P$ without waiting. 
Any flow departing $v_i$ within the interval $[\quickdep_{v_i}(\a), \dep_{v_i}(\a)]$ must have arrived at $v_i$ within the interval
$[\quickdep_{v_i}(\a)-\epsilon, \dep_{v_i}(\a)]$, since waiting times are bounded by $\epsilon$.
The mass of flow that can arrive at $v_i$ within the interval $[\quickdep_{v_i}(\a) - \epsilon, \dep_{v_i}(\a)]$, which has length at most $C_i + \epsilon$, 
is bounded by $(C_i + \epsilon) \sum_{e \in \delta^-(v_i)} \nu_e$.
Any increase in the value of $\dep_{v_{i+1}}(\a) - \quickdep_{v_{i+1}}(\a)$ compared to $\dep_{v_i}(\a) - \quickdep_{v_i}(\a) \leq C_i$ is due to this flow, which can cause an additional delay of at most
\[
    \frac{(C_i + \epsilon) \cdot \sum_{e \in \delta^-(v_i)} \nu_e }{\nu_{v_i v_{i+1}}};
\]
The claim follows.
\end{nestedproof}

\begin{claim} \label{claim:simple_bound}
For all $i \in \set{1, \dots, k}$:
\[C_i \leq \epsilon i 2^{i} \left( \tfrac{ \nu_{\Sigma} } {\nu_{\min}} \right)^{i-1}.\]
\end{claim}
\begin{nestedproof}
The claim is immediate for $i=1$, so suppose $i > 1$.
We proceed by induction, so assume the claim holds for $i-1$.
By the definition of $C_i$ in \Cref{claim:error_along_path}, along with the fact that $\frac{\nu_\Sigma}{\nu_{\min}} > 1$, we have
\[
C_{i} \leq ( 2 C_{i-1} + \epsilon) \tfrac{\nu_\Sigma}{\nu_{\min}}. 
\]
A straightforward calculation completes the proof:
\[ C_{i} \leq ( 2 C_{i-1} + \epsilon) \tfrac{ \nu_\Sigma  }{\nu_{\min}} 
\leq 2 \epsilon(i-1)2^{i-1} \left( \tfrac{ \nu_{\Sigma} } {\nu_{\min}} \right)^{i-1}  + \epsilon \tfrac{\nu_{\Sigma}}{\nu_{\min}}  
< \epsilon i 2^{i} \left( \tfrac{ \nu_{\Sigma} } {\nu_{\min}} \right)^{i-1}.
\]
\end{nestedproof}

Using \Cref{claim:simple_bound} with the fact that $P$ is a simple path and hence has length bounded by $\abs{V}$ completes the proof of the lemma.
\end{proof}

\begin{proof}[Proof of \Cref{thm:packets}]

We fix some packet $p$. Let $\a_{\ell}$ be a last agent (i.e., an agent with maximal entrance time into the network) of that packet and $\a_f$ be a first agent. 
Let $Q$ be the path packet $p$ takes.
Let $P$ be an earliest arrival path for $\a_{\ell}$, i.e., a path determining $\l_t(\entry{\a_{\ell}})$.
(Note that this may differ from the path $\Path(\a_{\ell})$ that the packet $p$ takes, since $P$ is a best path if agent $\a_{\ell}$ unilaterally deviates, not if the entire packet $p$ deviates.)

Let $\sigma'$ be the strategy profile for the packet model where packet $p$ chooses path $P$ instead of path $Q$ (with all other packet strategies unchanged).
Let $\stratprof'$ be the flow-over-time strategy profile corresponding to $\sigma'$.

The agent $\a_{\ell}$ can be seen as the decision maker of packet $p$.
It considers all possible $s$-$t$-paths $P'$ and simulates her personal arrival time if the full packet traverses this path. As the earlier agents of the packet always wait for the last agents on each node, the arrival time of the packet, as well as the arrival time of every agent in the packet, coincide with the arrival time of $\a_{\ell}$. Since we consider an equilibrium in the packet model, $Q$ is a path for which this arrival time is as early as possible.
Thus
\[
\dep^{\stratprof}_t(\a_{\ell}) \leq \dep_t^{\stratprof'}(\a_{\ell}) = \dep^{\stratprof'}_t(\a_f) .
\]

We now bound $\dep_t^{\stratprof'}(\a_f)$ via \Cref{lem:common_lemma}. 
With $\quickdep$ as in the lemma statement (for strategy profile $\stratprof'$), we have that 
$\dep^{\stratprof'}_t(\a_f) \leq \quickdep_t(\a_f) + O(\beta)$.
Moreover, we get  $\quickdep_t(a_f) \leq \l_t(\entry{a_\ell})$ since $\entry{a_f}\leq \entry{a_\ell}$, by choice of $P$ and monotonicity of the $\ell$-labels.
So altogether,
\begin{equation}\label{eq:last_packet}
\dep^{\stratprof}_t(\a_\l) \leq \dep^{\stratprof'}_t(\a_\l) = \dep^{\stratprof'}_t(\a_f)\leq \quickdep_t(\a_\l) + O(\beta) \leq \l_t(\entry{a_\l}) + O(\beta). 
\end{equation}
This gives us the desired bound, but only for the last agent of a packet.

We now aim to obtain the bound for every agent of a packet. 
Consider an arbitrary agent $a$, and let $p_1$ be the corresponding packet. 
For now, suppose that $p_1$ is not the first packet, and let $p_0$ be the previous packet.
Let $a_0$ and $a_1$ be the last agents of $p_0$ and $p_1$.

\begin{claim}
\label{claim:packet_following}
\[\dep_t(\a_1) \leq \dep_t(\a_0) + \tfrac{\beta}{\capmin}.\]
\end{claim}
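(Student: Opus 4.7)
My plan is to prove this via a single-packet deviation together with an induction along the arcs of $P_0 \coloneqq \Path(p_0)$. Consider the alternative packet-model strategy profile $\sigma''$ obtained from $\sigma$ by letting only $p_1$ switch its chosen path to $P_0$, while every other packet keeps its strategy. Because $p_1$ enters the source exactly $\beta/u_0$ later than $p_0$, it is placed behind $p_0$ in every FIFO queue along $P_0$, so $p_0$'s trajectory is unchanged between $\sigma$ and $\sigma''$. Since $\sigma$ is a packet-model equilibrium and $p_1$ cannot gain from this deviation, we obtain $\dep_t^\sigma(\a_1) \leq \dep_t^{\sigma''}(\a_1)$, so it is enough to establish $\dep_t^{\sigma''}(\a_1) \leq \dep_t^\sigma(\a_0) + \beta/\capmin$.

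Writing $s = v_0, v_1, \ldots, v_m = t$ for the successive nodes of $P_0$ and $\theta^j_i$ for the time at which $p_j$ arrives at $v_i$ in $\sigma''$, I would prove by induction on $i$ that $\theta^1_i \leq \theta^0_i + \beta/\capmin$; taking $i = m$ yields the claim since $\dep_t^{\sigma''}(\a_j) = \theta^j_m$. The base case $i = 0$ follows immediately from $\theta^1_0 - \theta^0_0 = \beta/u_0 \leq \beta/\capmin$. For the inductive step on arc $e_i = v_{i-1}v_i$, let $f^j_i$ denote the end of $p_j$'s processing on $e_i$, so that $\theta^j_i = f^j_i + \tau_{e_i}$ and it suffices to bound $f^1_i - f^0_i \leq \beta/\capmin$. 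I would let $q_1, \ldots, q_K$ be the (possibly empty) sequence of packets lying strictly between $p_0$ and $p_1$ in the FIFO queue of $e_i$ under $\sigma''$. Since each $q_j$ is ahead of $p_1$ on every arc of $P_0$ it shares with $\Path(q_j)$, it is unaffected by $p_1$'s deviation and its processing on $e_i$ happens at the same time as in $\sigma$.

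Depending on whether the queue of $e_i$ is continuously non-empty between $p_0$'s and $p_1$'s processing or whether it becomes idle in between, the processing end of $p_1$ satisfies $f^1_i = \max\bigl\{ f^0_i + (K{+}1)\beta/\nu_{e_i},\ \theta^1_{i-1} + \beta/\nu_{e_i}\bigr\}$. The idle-gap case yields $f^1_i - f^0_i \leq \beta/\capmin$ directly from the inductive hypothesis $\theta^1_{i-1} \leq \theta^0_{i-1} + \beta/\capmin$ together with the elementary inequality $f^0_i \geq \theta^0_{i-1} + \beta/\nu_{e_i}$. The main obstacle will be the back-to-back case, where I need $K + 1 \leq \nu_{e_i}/\capmin$. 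I would establish this from the equilibrium assumption by a flow-conservation argument: the rate of packets entering $e_i$ cannot sustainably exceed the processing rate $\nu_{e_i}/\beta$ (else the queue on $e_i$ would grow without bound, precluding a packet equilibrium), and since all $K+2$ packets $p_0, q_1, \ldots, q_K, p_1$ enter $e_i$ within the interval $[\theta^0_{i-1}, \theta^1_{i-1}]$ of length at most $\beta/\capmin$, we deduce $K + 1 \leq \nu_{e_i}/\capmin$, which gives $(K+1)\beta/\nu_{e_i} \leq \beta/\capmin$. Adding $\tau_{e_i}$ to both $f^0_i$ and $f^1_i$ then closes the induction.
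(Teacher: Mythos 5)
The central step of your argument is the flow-conservation bound $K + 1 \le \nu_{e_i}/\capmin$ in the back-to-back case, and this step does not hold up. You justify it by saying the rate of packets entering $e_i$ ``cannot sustainably exceed the processing rate $\nu_{e_i}/\beta$, else the queue on $e_i$ would grow without bound, precluding a packet equilibrium.'' This fails on two grounds. First, equilibria in this packet model \emph{can} have unboundedly growing queues; whether they do not under a min-cut condition on $u_0$ is precisely one of the questions the authors discuss as an open consequence of their main theorem, not a feature of the model. Second, even granting a long-term average bound, the instantaneous arrival rate at the tail of $e_i$ over a short window is bounded only by $\sum_{f \in \delta^-(v_{i-1})}\nu_f/\beta$ (plus $u_0/\beta$ if $v_{i-1}=s$), not by $\nu_{e_i}/\beta$. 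So the number of packets entering $e_i$ during an interval of length $\beta/\capmin$ can be of order $\capsum/\capmin$, which makes $(K+1)\beta/\nu_{e_i}$ much larger than $\beta/\capmin$ and the inductive invariant does not close. The constants you'd get this way would degrade multiplicatively along $P_0$, giving at best an $O(\beta)$ bound with instance-dependent constants, not the stated $\beta/\capmin$.

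The paper avoids this entirely by a different device: it first passes to the \emph{truncated} game in which no packet after $p_1$ exists, using the observation that $p_1$ on its best path is not delayed by subsequent packets, so truncation does not change $p_1$'s best arrival time. Working in the truncated game is what lets it argue that no packet at all enters an arc of $P_0$ strictly between $p_0$ and $p_1$, so that $K=0$ everywhere; the $\beta/\capmin$ bound then follows cleanly from the ``last touching node'' argument. Your proof skips the truncation. That creates two further problems you do not address: packets with index greater than $p_1$ can squeeze in between $p_0$ and $p_1$ in $\sigma''$, and your claim that ``$p_0$'s trajectory is unchanged'' (and likewise that the $q_j$ are unaffected by $p_1$'s deviation) is not automatic — $p_1$ on $P_0$ may overtake and delay some other packet $q'$, which can then propagate a change that reaches $p_0$ or a $q_j$. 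In the truncated game $p_1$ is genuinely last and cannot delay anyone, which is what makes these stability claims go through; without it you would need a separate argument.
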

\begin{nestedproof}
    As before we use the fact that on a shortest path $p_1$ will not be delayed by subsequent agents (neither in the packet model nor in the constructed flow model). Thus, to obtain an upper bound on the shortest path travel time, we can as well consider the truncated game where  after $p_1$ no packet enters into the network, i.e., no agent enters after time $\entry{a_1}$.
    One option for $p_1$ to reach $t$ is to follow the same path $P_0$ as $p_0$.
    We claim that the arrival time of this strategy is upper bounded by $\dep_t(a_0) + \tfrac{\beta}{\capmin}$ in the truncated game. From this fact the claim follows, since the actual path $p_1$ uses can only lead to an earlier arrival time given equilibrium behavior.

     We note that when following $p_0$, no other packets will enter an arc after $p_0$ but before $p_1$, since we truncated the game. 
%
     Considering some arc $e=vw$ on the path $P_0$, the arrival time of the first agent of $p_1$ at $w$ is at least the arrival time of the last agent of $p_0$ at $w$. Let us say that $p_0$ and $p_1$ ``touch'' at $w$ if these two times are in fact equal.
     Let $v$ be the last node on path $P_0$ where $p_0$ and $p_1$ touch (they certainly touch at $s$, so such a node exists).
     Then $p_1$ departs from $v$ precisely $\frac{\beta}{\nu_e} \leq \frac{\beta}{\capmin}$ later than $p_0$.
     Between $v$ and $t$ on $P_0$, packet $p_1$ is never delayed\,---\,the previous packet $p_0$ has been processed by the time that $p_1$ arrives.
     It follows that $p_1$ departs $t$ at most $\frac{\beta}{\capmin}$ later than $p_0$, as required.
 
 %
%
\end{nestedproof}


We now have 
    \begin{align*}
    d_t(\a) &= d_t(\a_1) && \text{(all agents of packet $p_1$ depart at the same time)} \\
            &\leq d_t(\a_0) + \tfrac{\beta}{\capmin} && \text{(by \Cref{claim:packet_following})}\\  
            &\leq \l_t(\entry{\a_0}) + \tfrac{\beta}{\nu_{\min}}+ O(\beta) &&\text{(by \eqref{eq:last_packet} for $\a_0$)}\\ 
            &\leq \l_t(\entry{\a}) + \tfrac{\beta}{\nu_{\min}} + O(\beta) &&\text{(monotonicity of $\l$)}\\
            &=  \l_t(\entry{\a}) +  O(\beta).
    \end{align*}


Lastly, we need to consider the case where $p_1$ is the first packet.
The argument is essentially the same, except that we should consider $a_0$ to be a hypothetical single agent that takes an earliest arrival path to $t$ in the empty network. (Since this is just a single nonatomic agent, including this has no impact whatsoever.)
The statement of \Cref{claim:packet_following} is easily seen to still hold, as does the sequence of inequalities above.

%
%

So $\stratprof$ is an $O(\beta)$-equilibrium. By \Cref{thm:epsilon_equi_is_delta_trajectory} this implies that $\stratprof$ is also a strict $\delta$-equilibrium for some $\delta = O(\beta)$, completing the proof.
\end{proof}

\section{Conclusion}\label{sec:conclusion}

We have demonstrated that strict $\delta$-equilibria converge to exact dynamic equilibria in the deterministic queueing model, 
and as two specific consequences, derived the convergence of $\epsilon$-equilibria and of equilibria in a specific packet-routing model.
But we emphasize that these are merely two consequences, and others can surely be obtained.
Convergence of other packet models can certainly be demonstrated, as well as stability with respect to small perturbations of network parameters such as transit times and capacities.
We leave detailed investigation of this to future work.

It must be admitted that our bounds are not very effective: we do not explicitly compute how $\epsilon$ depends on $\delta$, but our dependence is certainly (at least) exponential, and very dependent on the specific network being considered. 
This issue arises even in the setting of continuity. One could hope that the correct dependence is much better, perhaps linear. 
But it is not clear how this can be approached with the current techniques of this paper.
A related issue already mentioned is that our results only apply for $\delta$ sufficiently small (where ``sufficiently small'' depends on the instance). 
This seems like a potentially much easier issue to resolve.

As already mentioned, our result potentially allows for the transfer of results from the deterministic queueing model to atomic packet-routing models.
This would especially be the case if the restriction to sufficiently small $\delta$ can be removed.
There may be further applications of this, now or in the future, as a better understanding of the deterministic queueing model is obtained.

\paragraph*{Acknowledgements.}
We are grateful to an anonymous referee for an extremely careful reading, and their helpful and detailed remarks.

We thank the Hausdorff Trimester Programme on Discrete Optimization, where some part of this work was done.
We also acknowledge the Dagstuhl seminar series on Dynamic Traffic Models in Transportation, which has been a very inspiring environment for discussing the topic generally.

\bibliographystyle{alpha}
\bibliography{literature}

\appendix

\section{Omitted proofs}\label{app:proofs}


\loading*

\begin{proof}
We will iteratively construct sets $A^+_e(\xi)$ and $A^-_e(\xi)$, for all arcs $e$ and times $\xi$, which will represent the set of agents that enter or exit arc $e$ by time $\xi$.
So $F^+_e(\xi) = \mu(A^+_e(\xi))$ and $F^-_e(\xi) = \mu(A^-_e(\xi))$ for all $e, \xi$.
The departure time functions are also fully determined by $A^+$ and $A^-$: for any agent $a$ and $e=uv \in \P(a)$,
\begin{equation}\label{eq:Atod}
    \begin{aligned}
        d_u(a) &= \inf\{\xi: a \in A^+_e(\xi)\}, \text{ and }\\ 
    d_v(a) &= \inf\{ \xi : a \in A^-_e(\xi)\} + \Wait_v(a).
    \end{aligned}
\end{equation}
(Except at $s$ and $t$, either equation can be used to determine $d_v(a)$ for $v \in \P(a)$.)

Clearly $A^+_e(0)$ and $A^-_e(0)$ exist, are unique, and are easy to determine (most are empty, except possibly for $A^+_e(0)$ for $e \in \delta^+(s)$).
Suppose we have determined $A^+_e(\xi)$ and $A^-_e(\xi)$ for all $e$ and for all $\xi \leq \xi_0$, for some $\xi_0$.
This determines $F^+_e(\xi)$ and $F^-_e(\xi)$, and hence also the queue volumes $z_e(\xi)$ (via \eqref{eq:queues_operate_at_capacity}), for all $e$ and $\xi \leq \xi_0$.
It also determines the value of $d_v(a)$ for any $a \in \agents$ and $v \in \P(a)$ for which $d_v(a) \leq \xi_0$; this follows from \eqref{eq:Atod}.
Finally, this determines $\dq_e(a)$ for all agents $a$ and arcs $e=vw \in \P(a)$ in which $d_v(a) \leq \xi_0$, by \eqref{eq:queue-waiting}.

We will show how to uniquely determine the values of $A^+_e(\xi)$ and $A^-_e(\xi)$ for all arcs $e$ and $\xi \leq \xi_1$, for $\xi_1 = \xi_0 + \min_e \tau_e > \xi_0$.
This implies that $A^+$ and $A^-$ can be fully specified for all times, and moreover are unique.

First, we can easily determine $A^-_e(\xi)$ for any $e=vw$ and $\xi \leq \xi_1$: 
\[ A^-_e(\xi) = \{ a \in A^+_e(\xi-\tau_e)  \text{ and } d_v(a) + \dq(a) + \tau_e \leq \xi\}.
\]

Since every agent that aims to leave the arc by time $\xi$ has to enter the arc at the latest at time $\xi -\tau_e$, we can restrict our attention to the agents in $A^+_e(\xi-\tau_e)$. This set and also the $d_v$ values corresponding to its agents are already known at this point, since $\xi-\tau_e \leq \xi_0$.

Once $A^-$ has been determined until time $\xi_1$, we can determine $A^+_e(\xi)$ for $e=vw$ and $\xi \leq \xi_1$ via
\[
    A^+_e(\xi) = \{ a \in \agents: e \in \P(a) \text{ and } \exists f \in \delta^-(v) \text{ s.t. } a \in A^+_f(\xi - \Wait_v(a)) \}
\]
if $v \neq s$, and
\[ 
    A^+_e(\xi) = \{ a \in \agents: e \in \P(a) \text{ and } \entry{a} + \Wait_s(a) \leq \xi \}
\]
if $v=s$. 
\end{proof}


\steadystate*

\begin{proof}
%

	We use  \cite[Theorem 4.14]{OSV21} to describe the time the equilibrium needs to reach steady-state. 
	We will use notation from the statement of this theorem. 
	
	The first part is the time until the trajectory reaches pre-steady-state which is $T_1=\frac{\OPT - \Phi(0)}{\eta}$. 
	For each point $\linit \in I$ the potential equals its maximal value $\OPT$ from time 0 on. Moreover the potential is Lipschitz continuous. Let us denote the Lipschitz constant as $C_{Pot}$ and $r \coloneqq d(\lcirc,I)$. Then $\OPT -\Phi(0) \leq C_{Pot} \cdot r$ and thus $T_1 \leq \frac{C_{Pot} \cdot r}{\eta}$.
	
	To get some bound on $T_2$, remember that $s_e(0)$ denotes the slack $[\linit_v + \tau_e -
	\linit_w]^+$ for an arc $e = vw \in E \setminus \Einf$ (and $s_e(0) = 0$ for $e \in \Einf$), i.e., it measures how
	inactive that arc is. For $\linit \in I$ we have $s_e(0) = 0$ for all $e \in E \setminus E^<$. This implies that for a point in distance $r$ of $I$ the maximal slack on some arc of $E \setminus E^<$ is upper bounded by $2r$.
	Using the above bound for $T_1$, we obtain the following bound on $T_2$.
	
	\[
	T_2 \leq \Delta \abs{E} \kappa \frac{ C_{Pot}\cdot r}{\eta}  + \Delta \abs{E} 2r.
	\]
	
	Thus with choosing $\T \coloneqq (\Delta \abs{E} \kappa +1 ) C_{Pot}/ \eta + \Delta \abs{E} 2$, we obtain the wished result. 
\end{proof}

\hyperplaneseparation*
\begin{proof}

%
    \emph{(i).} Consider some compatible set $F$. It suffices to show that there is a constant $\centerdist_F$ so that $\dist(\linit, H_F) \leq \centerdist_F \max_{e \in F} \dist(\linit, H_e)$  for all $\linit $, since there are only a finite number of choices of $F$.

    Since all norms of a finite dimensional vector space are equivalent, it suffices (up to constant factors) to show this claim for a different norm, which we define as follows.
    Let $W$ be a linear subspace and $p \in \R^V$ be such that $H_F = W + p$.
    Let $n_e$ denote the unit (with respect to the 2-norm) normal to $H_e$, for each $e \in F$.
    Then $W^\perp$ is spanned by $\{ n_e : e \in F\}$, and in particular, $\{ n_e : e \in F'\}$ is a basis for $W^\perp$ for some $F' \subseteq F$.
    Thus we can define a basis $(q_i)_{i \in V}$ for $\R^V$ consisting of $\{ n_e : e \in F'\}$ followed by $|V| - |F'|$ other vectors forming a basis for $W$.
    Now consider the norm defined by 
    \[ \|x\|_* := \sum_{i} |\alpha_i| \qquad \text{ for } \qquad x = \sum_{i} \alpha_i q_i.  
    \]
    Let $d_*(x, S)$ denote the minimum distance from a point $x$ to a set $S$ with respect to $\|\cdot\|_*$.
    Write $\linit - p = \sum_i \beta_i q_i$ for appropriate $\beta_i$'s.
    Then 
    \[ 
        d_*(\linit, H_F) = d_*(\linit - p, W) = \sum_{i=1}^{|F'|} |\beta_i| = \sum_{e \in F'} d_*(\linit, H_e) \leq \sum_{e \in F} d_*(\linit, H_e). 
    \]
    So $d_*(\linit, H_F) \leq |F|\max_{e \in F} d_*(\linit, H_e)$. Since $\|\cdot\|_*$ is equivalent to $\|\cdot\|$, the existence of the desired constant $\centerdist_F$ follows.

    \medskip

    \emph{(ii).}
Let 
\[ \sigma \coloneqq \min \{ \dist(H_e, H_F) \mid F \text{ is a compatible set not containing } e\}. \]
This is well defined since the minimum is taken over a finite set. Moreover, every such distance is positive by definition, so $\sigma > 0$.
Now let $\centercompat := \sigma / (1 + \centerdist)$.

    Let $F$ be the set of hyperplanes intersecting $B_{\centercompat}(\linit)$.
    Suppose $F$ was not compatible; choose $F' \subseteq F$ maximal so that $F'$ is compatible, and $e \in F \setminus F'$.
    Then $\dist(\linit, H_{F'}) \leq \centerdist \centercompat$ by (i).
    But now 
    \[ \dist(H_e, H_{F'}) \leq \dist(\linit, H_e) + \dist(\linit, H_{F'}) < (1+\centerdist)\centercompat = \sigma, \]
    contradicting our choice of $\sigma$.
\end{proof}

\almostvalid*

\begin{proof}
    Property (i) of a valid label, that there is a path from $s$ to every node $v$ in $E'_{\l(\theta)}$, holds by the definition of earliest arrival labels. Property (iii) follows since for any vector $\lclose \in \Rplus^V$ the set of active arcs is acyclic by our assumption that $\tau_e > 0$ for all arcs $e$; if we sum $\lclose_w - \lclose_v - \tau_e$ over all arcs in a directed cycle, the result is negative, and so not all of these values can be nonnegative. But this a contradiction since there is no free arc on the cycle as $(E, \Einf)$ is a valid configuration. 
    So the only issue is with property (ii), that for every arc $e=vw \in E^*_{\l(\theta)}$ there should be a path from $w$ to $t$ in $E'_{\l(\theta)}$. This may not hold.

We will define some $\lclose \in \R^V$ with $\l_v(\theta) -|V|\delta \leq \lclose_v \leq \l_v(\theta)$, and show that $\lclose \in \Omega$.
We will do this by initially setting $\lclose = \l(\theta)$, and gradually modifying it, maintaining property (i), and making progress on property (ii).

Given our current choice of $\lclose$, define $Q$ to be the set of nodes $w$ for which there is no $w$-$t$-path in $E'_{\lclose}$, but $w$ is the head of some arc of $E^*_{\lclose}$.
If $Q$ is empty we are done.
Otherwise, let $T$ denote the set of nodes reachable from $Q$ via arcs of $E'_{\lclose}$.
Now decrease, uniformly, all labels in $T$, that is, replace $\lclose$ with $\lclose - \epsilon \chi(T)$.
Here, $\epsilon$ is chosen small enough that no arcs leaving $T$ join $E^*_{\lclose}$, and all nodes in $Q$ still have an active entering arc.
(In other words, we decrease the labels in $T$ until either $Q$ or $T$ would change.)
Then recompute $Q$ and $T$, and repeat this process, until $Q$ is empty.

\medskip

First, we claim that property (i) is maintained in this process.
Let $Q$, $T$ be as determined before the update, and let $\lclose$ and $\lcloseup$ be the old and new labelings.
When we refer to ``active'' arcs in what follows, we mean the arcs of $E'_{\lclose}$.
By the definition of $T$, an active path can enter $T$, but cannot leave.
To argue about nodes in $T$, first let $Q'$ be a minimal subset of $Q$ with the property that everything in $T$ is reachable from $Q'$ via active arcs.
Since $E'_{\lclose}$ is acyclic, $Q'$ is unique, and simply consists of all nodes in $Q$ that cannot be reached via active arcs from another node of $Q$.
Every node $w \in Q'$ must have an incoming arc $vw \in E^*_{\lclose}$; further, $v \notin T$, since there can be no active path from $Q$ to $v$ by our choice of $Q'$.
Since $vw \in E^*_{\lclose}$, $vw \in E'_{\lcloseup}$, so $w$ is still reachable from $s$ along arcs in $E'_{\lcloseup}$.
Since this holds for all nodes in $Q'$, it holds for all of $T$.

\medskip

Finally, we show that this process terminates quickly with $Q$ becoming empty, at which point property (ii) is satisfied.
Define the \emph{slack} of a node $w$ in $Q$ with respect to a labeling $\lclose$ to be the minimum slack of a $w$-$t$-path, where the slack of an arc $e=uz$ is $0$ if it is active, and $-(\lclose_z - \lclose_u - \tau_e)$ if it is inactive.
Then in each step where we decrease the labels of $T$ by $\epsilon$, all nodes currently in $Q$ have their slack decreased by $\epsilon$.
So it suffices to show that at the start, each node in $Q$ has slack at most $|V|\delta$.

So consider the situation at the start, with $\lclose = \l(\theta)$, and pick some $w \in Q$. Let $e=vw$ be the arc entering $w$ that is in $E^*_{\lclose}$.
For any $\epsilon$, we can choose an agent $a$ with $e \in \Path(a)$, $d_v(a) \leq \l_v(\theta)$, and where the measure of agents $a'$ who use arc $e$ and depart $v$ in the interval $[d_v(a), \l_v(\theta)]$ and have $\entry{a'} < \entry{a}$, is less than $\epsilon$.
Let $P$ be the subpath of $\Path(a)$ from $w$ to $t$.

Now fix any arc $e'=uz$ of $P$.
We have $d_z(a) \leq \l_z(\theta) + \delta$ because it is a $\delta$-trajectory, and $\entry{a} \leq \theta$.
We also have $d_u(a) \geq \l_u(\theta) - O(\epsilon)$.
This is because it is almost at the end of the queue for $e$ at time $\l_v(\theta)$ (only mass $\epsilon$ can be behind it), and so it arrives at $w$ no earlier than $\l_v(\theta) + \lq_e(\theta) + \tau_e - O(\epsilon) \geq \l_w(\theta) - O(\epsilon)$.
Similarly for all later nodes on the path $P$, including $u$.

Since $d_u(a) \geq \l_u(\theta) - O(\epsilon)$ and $a$ uses arc $uz$, we have that $d_z(a) \geq \l_u(\theta) + \lq_{e'}(\theta) + \tau_{e'} - O(\epsilon)$. 
We conclude that $\l_z(\theta) - \l_u(\theta) - \lq_{e'}(\theta) - \tau_{e'} \geq -\delta - O(\epsilon)$.
Taking $\epsilon$ to $0$, we get that the slack of $uz$ is at most $\delta$.
Since $P$ contains less than $|V|$ arcs, this path demonstrates that the slack of $w$ is less than $|V|\delta$.
\end{proof}

\end{document}